\newif\ifarxiv
\arxivtrue

\newif\ifexternalize
\externalizefalse

\documentclass[journal,a4paper]{IEEEtran}
\usepackage{amsthm}
\usepackage{bm}
\usepackage{todonotes}
\usepackage{thm-restate}
\usepackage{cite}
\usepackage{caption}
\usepackage{subcaption}
\usepackage[cmex10]{amsmath}
\usepackage{amssymb,amsfonts}
\usepackage{algorithmic}
\usepackage{graphicx}
\usepackage{textcomp}
\usepackage{xcolor}
\usepackage{tabularray}
\usepackage{contour}
\usepackage{mathtools}
\usepackage{eqnarray}

  \ifarxiv
    \usepackage{hyperref}
  \else
    \usepackage[draft]{hyperref}
  \fi

\usepackage[utf8]{inputenc}
\usepackage{mleftright}       %
\mleftright                   %

\usepackage{pgfplots}
 \usetikzlibrary{arrows.meta}
 \usetikzlibrary{backgrounds}
 \usepgfplotslibrary{patchplots}
 \usepgfplotslibrary{fillbetween}
 \pgfplotsset{%
      compat=1.18,
     layers/standard/.define layer set={%
         background,axis background,axis grid,axis ticks,axis lines,axis tick labels,pre main,main,axis descriptions,axis foreground%
     }{
         grid style={/pgfplots/on layer=axis grid},%
         tick style={/pgfplots/on layer=axis ticks},%
         axis line style={/pgfplots/on layer=axis lines},%
         label style={/pgfplots/on layer=axis descriptions},%
         legend style={/pgfplots/on layer=axis descriptions},%
         title style={/pgfplots/on layer=axis descriptions},%
         colorbar style={/pgfplots/on layer=axis descriptions},%
         ticklabel style={/pgfplots/on layer=axis tick labels},%
         axis background@ style={/pgfplots/on layer=axis background},%
         3d box foreground style={/pgfplots/on layer=axis foreground},%
     },
 }

 \pgfplotsset{
 colormap={plots1}{rgb(0.00000000)=(0.30980000,0.10090000,0.23840000)
 rgb(0.00392157)=(0.30690000,0.10310000,0.24320000)
 rgb(0.00784314)=(0.30410000,0.10530000,0.24800000)
 rgb(0.01176471)=(0.30120000,0.10760000,0.25300000)
 rgb(0.01568627)=(0.29840000,0.11020000,0.25800000)
 rgb(0.01960784)=(0.29550000,0.11280000,0.26310000)
 rgb(0.02352941)=(0.29270000,0.11540000,0.26820000)
 rgb(0.02745098)=(0.28980000,0.11830000,0.27350000)
 rgb(0.03137255)=(0.28690000,0.12120000,0.27890000)
 rgb(0.03529412)=(0.28410000,0.12430000,0.28430000)
 rgb(0.03921569)=(0.28120000,0.12750000,0.28990000)
 rgb(0.04313725)=(0.27830000,0.13090000,0.29550000)
 rgb(0.04705882)=(0.27540000,0.13430000,0.30120000)
 rgb(0.05098039)=(0.27250000,0.13790000,0.30700000)
 rgb(0.05490196)=(0.26960000,0.14160000,0.31290000)
 rgb(0.05882353)=(0.26670000,0.14550000,0.31890000)
 rgb(0.06274510)=(0.26370000,0.14940000,0.32490000)
 rgb(0.06666667)=(0.26080000,0.15350000,0.33110000)
 rgb(0.07058824)=(0.25780000,0.15770000,0.33730000)
 rgb(0.07450980)=(0.25490000,0.16210000,0.34360000)
 rgb(0.07843137)=(0.25190000,0.16650000,0.34990000)
 rgb(0.08235294)=(0.24900000,0.17110000,0.35630000)
 rgb(0.08627451)=(0.24600000,0.17590000,0.36280000)
 rgb(0.09019608)=(0.24310000,0.18070000,0.36930000)
 rgb(0.09411765)=(0.24020000,0.18580000,0.37590000)
 rgb(0.09803922)=(0.23730000,0.19080000,0.38250000)
 rgb(0.10196078)=(0.23440000,0.19610000,0.38920000)
 rgb(0.10588235)=(0.23160000,0.20140000,0.39590000)
 rgb(0.10980392)=(0.22870000,0.20690000,0.40260000)
 rgb(0.11372549)=(0.22600000,0.21250000,0.40940000)
 rgb(0.11764706)=(0.22330000,0.21820000,0.41610000)
 rgb(0.12156863)=(0.22070000,0.22410000,0.42290000)
 rgb(0.12549020)=(0.21820000,0.23000000,0.42970000)
 rgb(0.12941176)=(0.21580000,0.23610000,0.43650000)
 rgb(0.13333333)=(0.21340000,0.24220000,0.44330000)
 rgb(0.13725490)=(0.21130000,0.24850000,0.45000000)
 rgb(0.14117647)=(0.20920000,0.25480000,0.45680000)
 rgb(0.14509804)=(0.20740000,0.26130000,0.46350000)
 rgb(0.14901961)=(0.20570000,0.26780000,0.47030000)
 rgb(0.15294118)=(0.20430000,0.27440000,0.47700000)
 rgb(0.15686275)=(0.20300000,0.28110000,0.48360000)
 rgb(0.16078431)=(0.20200000,0.28790000,0.49020000)
 rgb(0.16470588)=(0.20130000,0.29480000,0.49680000)
 rgb(0.16862745)=(0.20080000,0.30170000,0.50340000)
 rgb(0.17254902)=(0.20070000,0.30870000,0.50990000)
 rgb(0.17647059)=(0.20080000,0.31580000,0.51640000)
 rgb(0.18039216)=(0.20130000,0.32290000,0.52280000)
 rgb(0.18431373)=(0.20210000,0.33010000,0.52920000)
 rgb(0.18823529)=(0.20330000,0.33740000,0.53550000)
 rgb(0.19215686)=(0.20480000,0.34470000,0.54180000)
 rgb(0.19607843)=(0.20680000,0.35200000,0.54800000)
 rgb(0.20000000)=(0.20900000,0.35940000,0.55420000)
 rgb(0.20392157)=(0.21170000,0.36690000,0.56030000)
 rgb(0.20784314)=(0.21470000,0.37430000,0.56640000)
 rgb(0.21176471)=(0.21810000,0.38190000,0.57250000)
 rgb(0.21568627)=(0.22180000,0.38940000,0.57850000)
 rgb(0.21960784)=(0.22590000,0.39700000,0.58450000)
 rgb(0.22352941)=(0.23040000,0.40460000,0.59040000)
 rgb(0.22745098)=(0.23530000,0.41230000,0.59620000)
 rgb(0.23137255)=(0.24040000,0.42000000,0.60210000)
 rgb(0.23529412)=(0.24590000,0.42770000,0.60790000)
 rgb(0.23921569)=(0.25170000,0.43540000,0.61360000)
 rgb(0.24313725)=(0.25780000,0.44320000,0.61930000)
 rgb(0.24705882)=(0.26420000,0.45100000,0.62500000)
 rgb(0.25098039)=(0.27090000,0.45870000,0.63060000)
 rgb(0.25490196)=(0.27790000,0.46650000,0.63620000)
 rgb(0.25882353)=(0.28510000,0.47440000,0.64170000)
 rgb(0.26274510)=(0.29260000,0.48220000,0.64720000)
 rgb(0.26666667)=(0.30030000,0.49000000,0.65270000)
 rgb(0.27058824)=(0.30830000,0.49780000,0.65810000)
 rgb(0.27450980)=(0.31650000,0.50560000,0.66350000)
 rgb(0.27843137)=(0.32480000,0.51350000,0.66880000)
 rgb(0.28235294)=(0.33340000,0.52130000,0.67410000)
 rgb(0.28627451)=(0.34220000,0.52910000,0.67930000)
 rgb(0.29019608)=(0.35120000,0.53690000,0.68440000)
 rgb(0.29411765)=(0.36030000,0.54460000,0.68950000)
 rgb(0.29803922)=(0.36960000,0.55240000,0.69460000)
 rgb(0.30196078)=(0.37900000,0.56010000,0.69960000)
 rgb(0.30588235)=(0.38860000,0.56780000,0.70450000)
 rgb(0.30980392)=(0.39830000,0.57540000,0.70930000)
 rgb(0.31372549)=(0.40820000,0.58300000,0.71410000)
 rgb(0.31764706)=(0.41820000,0.59050000,0.71870000)
 rgb(0.32156863)=(0.42830000,0.59800000,0.72330000)
 rgb(0.32549020)=(0.43850000,0.60550000,0.72780000)
 rgb(0.32941176)=(0.44890000,0.61280000,0.73220000)
 rgb(0.33333333)=(0.45930000,0.62010000,0.73640000)
 rgb(0.33725490)=(0.46980000,0.62730000,0.74060000)
 rgb(0.34117647)=(0.48040000,0.63450000,0.74460000)
 rgb(0.34509804)=(0.49100000,0.64150000,0.74850000)
 rgb(0.34901961)=(0.50170000,0.64840000,0.75230000)
 rgb(0.35294118)=(0.51250000,0.65520000,0.75590000)
 rgb(0.35686275)=(0.52330000,0.66200000,0.75930000)
 rgb(0.36078431)=(0.53410000,0.66850000,0.76260000)
 rgb(0.36470588)=(0.54490000,0.67490000,0.76560000)
 rgb(0.36862745)=(0.55580000,0.68120000,0.76850000)
 rgb(0.37254902)=(0.56660000,0.68740000,0.77120000)
 rgb(0.37647059)=(0.57750000,0.69330000,0.77370000)
 rgb(0.38039216)=(0.58830000,0.69910000,0.77600000)
 rgb(0.38431373)=(0.59900000,0.70470000,0.77800000)
 rgb(0.38823529)=(0.60980000,0.71010000,0.77980000)
 rgb(0.39215686)=(0.62040000,0.71530000,0.78140000)
 rgb(0.39607843)=(0.63100000,0.72030000,0.78260000)
 rgb(0.40000000)=(0.64140000,0.72500000,0.78360000)
 rgb(0.40392157)=(0.65180000,0.72950000,0.78440000)
 rgb(0.40784314)=(0.66200000,0.73380000,0.78480000)
 rgb(0.41176471)=(0.67210000,0.73780000,0.78500000)
 rgb(0.41568627)=(0.68200000,0.74150000,0.78480000)
 rgb(0.41960784)=(0.69170000,0.74490000,0.78430000)
 rgb(0.42352941)=(0.70130000,0.74810000,0.78360000)
 rgb(0.42745098)=(0.71060000,0.75100000,0.78250000)
 rgb(0.43137255)=(0.71980000,0.75360000,0.78100000)
 rgb(0.43529412)=(0.72870000,0.75580000,0.77930000)
 rgb(0.43921569)=(0.73730000,0.75780000,0.77720000)
 rgb(0.44313725)=(0.74570000,0.75940000,0.77480000)
 rgb(0.44705882)=(0.75380000,0.76080000,0.77200000)
 rgb(0.45098039)=(0.76160000,0.76180000,0.76900000)
 rgb(0.45490196)=(0.76910000,0.76250000,0.76560000)
 rgb(0.45882353)=(0.77640000,0.76280000,0.76190000)
 rgb(0.46274510)=(0.78320000,0.76290000,0.75780000)
 rgb(0.46666667)=(0.78980000,0.76260000,0.75350000)
 rgb(0.47058824)=(0.79610000,0.76200000,0.74880000)
 rgb(0.47450980)=(0.80200000,0.76110000,0.74390000)
 rgb(0.47843137)=(0.80760000,0.75980000,0.73870000)
 rgb(0.48235294)=(0.81280000,0.75830000,0.73320000)
 rgb(0.48627451)=(0.81770000,0.75650000,0.72740000)
 rgb(0.49019608)=(0.82220000,0.75430000,0.72130000)
 rgb(0.49411765)=(0.82640000,0.75190000,0.71510000)
 rgb(0.49803922)=(0.83030000,0.74920000,0.70860000)
 rgb(0.50196078)=(0.83390000,0.74630000,0.70180000)
 rgb(0.50588235)=(0.83710000,0.74310000,0.69490000)
 rgb(0.50980392)=(0.84000000,0.73960000,0.68770000)
 rgb(0.51372549)=(0.84250000,0.73590000,0.68040000)
 rgb(0.51764706)=(0.84480000,0.73200000,0.67280000)
 rgb(0.52156863)=(0.84670000,0.72780000,0.66520000)
 rgb(0.52549020)=(0.84840000,0.72340000,0.65730000)
 rgb(0.52941176)=(0.84970000,0.71890000,0.64930000)
 rgb(0.53333333)=(0.85080000,0.71410000,0.64120000)
 rgb(0.53725490)=(0.85160000,0.70920000,0.63300000)
 rgb(0.54117647)=(0.85220000,0.70410000,0.62460000)
 rgb(0.54509804)=(0.85250000,0.69880000,0.61620000)
 rgb(0.54901961)=(0.85260000,0.69340000,0.60760000)
 rgb(0.55294118)=(0.85240000,0.68780000,0.59900000)
 rgb(0.55686275)=(0.85200000,0.68210000,0.59030000)
 rgb(0.56078431)=(0.85140000,0.67620000,0.58150000)
 rgb(0.56470588)=(0.85050000,0.67030000,0.57270000)
 rgb(0.56862745)=(0.84950000,0.66420000,0.56380000)
 rgb(0.57254902)=(0.84830000,0.65800000,0.55490000)
 rgb(0.57647059)=(0.84690000,0.65170000,0.54590000)
 rgb(0.58039216)=(0.84530000,0.64530000,0.53690000)
 rgb(0.58431373)=(0.84350000,0.63880000,0.52780000)
 rgb(0.58823529)=(0.84160000,0.63220000,0.51880000)
 rgb(0.59215686)=(0.83950000,0.62550000,0.50970000)
 rgb(0.59607843)=(0.83720000,0.61880000,0.50050000)
 rgb(0.60000000)=(0.83480000,0.61190000,0.49140000)
 rgb(0.60392157)=(0.83230000,0.60500000,0.48230000)
 rgb(0.60784314)=(0.82960000,0.59800000,0.47310000)
 rgb(0.61176471)=(0.82670000,0.59090000,0.46400000)
 rgb(0.61568627)=(0.82370000,0.58380000,0.45480000)
 rgb(0.61960784)=(0.82060000,0.57650000,0.44570000)
 rgb(0.62352941)=(0.81730000,0.56920000,0.43660000)
 rgb(0.62745098)=(0.81390000,0.56190000,0.42750000)
 rgb(0.63137255)=(0.81040000,0.55450000,0.41830000)
 rgb(0.63529412)=(0.80670000,0.54690000,0.40930000)
 rgb(0.63921569)=(0.80290000,0.53940000,0.40020000)
 rgb(0.64313725)=(0.79890000,0.53170000,0.39110000)
 rgb(0.64705882)=(0.79480000,0.52400000,0.38210000)
 rgb(0.65098039)=(0.79060000,0.51620000,0.37310000)
 rgb(0.65490196)=(0.78620000,0.50840000,0.36420000)
 rgb(0.65882353)=(0.78170000,0.50040000,0.35530000)
 rgb(0.66274510)=(0.77710000,0.49240000,0.34640000)
 rgb(0.66666667)=(0.77230000,0.48440000,0.33760000)
 rgb(0.67058824)=(0.76730000,0.47620000,0.32880000)
 rgb(0.67450980)=(0.76220000,0.46800000,0.32010000)
 rgb(0.67843137)=(0.75700000,0.45970000,0.31140000)
 rgb(0.68235294)=(0.75160000,0.45140000,0.30280000)
 rgb(0.68627451)=(0.74600000,0.44300000,0.29430000)
 rgb(0.69019608)=(0.74030000,0.43450000,0.28590000)
 rgb(0.69411765)=(0.73450000,0.42600000,0.27760000)
 rgb(0.69803922)=(0.72850000,0.41740000,0.26940000)
 rgb(0.70196078)=(0.72230000,0.40880000,0.26130000)
 rgb(0.70588235)=(0.71600000,0.40010000,0.25340000)
 rgb(0.70980392)=(0.70960000,0.39140000,0.24550000)
 rgb(0.71372549)=(0.70300000,0.38260000,0.23780000)
 rgb(0.71764706)=(0.69630000,0.37380000,0.23030000)
 rgb(0.72156863)=(0.68940000,0.36500000,0.22290000)
 rgb(0.72549020)=(0.68250000,0.35620000,0.21570000)
 rgb(0.72941176)=(0.67540000,0.34740000,0.20870000)
 rgb(0.73333333)=(0.66820000,0.33850000,0.20190000)
 rgb(0.73725490)=(0.66090000,0.32970000,0.19530000)
 rgb(0.74117647)=(0.65350000,0.32090000,0.18900000)
 rgb(0.74509804)=(0.64600000,0.31220000,0.18290000)
 rgb(0.74901961)=(0.63850000,0.30350000,0.17700000)
 rgb(0.75294118)=(0.63090000,0.29490000,0.17140000)
 rgb(0.75686275)=(0.62330000,0.28630000,0.16600000)
 rgb(0.76078431)=(0.61560000,0.27790000,0.16100000)
 rgb(0.76470588)=(0.60790000,0.26950000,0.15620000)
 rgb(0.76862745)=(0.60020000,0.26120000,0.15160000)
 rgb(0.77254902)=(0.59250000,0.25310000,0.14750000)
 rgb(0.77647059)=(0.58490000,0.24500000,0.14350000)
 rgb(0.78039216)=(0.57720000,0.23720000,0.13990000)
 rgb(0.78431373)=(0.56960000,0.22950000,0.13660000)
 rgb(0.78823529)=(0.56210000,0.22190000,0.13360000)
 rgb(0.79215686)=(0.55460000,0.21450000,0.13080000)
 rgb(0.79607843)=(0.54720000,0.20730000,0.12840000)
 rgb(0.80000000)=(0.53990000,0.20020000,0.12620000)
 rgb(0.80392157)=(0.53270000,0.19340000,0.12430000)
 rgb(0.80784314)=(0.52560000,0.18680000,0.12270000)
 rgb(0.81176471)=(0.51860000,0.18030000,0.12140000)
 rgb(0.81568627)=(0.51170000,0.17410000,0.12030000)
 rgb(0.81960784)=(0.50500000,0.16810000,0.11950000)
 rgb(0.82352941)=(0.49840000,0.16230000,0.11890000)
 rgb(0.82745098)=(0.49190000,0.15670000,0.11850000)
 rgb(0.83137255)=(0.48560000,0.15130000,0.11840000)
 rgb(0.83529412)=(0.47940000,0.14620000,0.11840000)
 rgb(0.83921569)=(0.47330000,0.14130000,0.11870000)
 rgb(0.84313725)=(0.46740000,0.13650000,0.11910000)
 rgb(0.84705882)=(0.46160000,0.13200000,0.11970000)
 rgb(0.85098039)=(0.45600000,0.12770000,0.12050000)
 rgb(0.85490196)=(0.45050000,0.12360000,0.12150000)
 rgb(0.85882353)=(0.44510000,0.11980000,0.12260000)
 rgb(0.86274510)=(0.43990000,0.11620000,0.12390000)
 rgb(0.86666667)=(0.43480000,0.11280000,0.12540000)
 rgb(0.87058824)=(0.42980000,0.10960000,0.12690000)
 rgb(0.87450980)=(0.42500000,0.10650000,0.12860000)
 rgb(0.87843137)=(0.42030000,0.10370000,0.13050000)
 rgb(0.88235294)=(0.41570000,0.10110000,0.13240000)
 rgb(0.88627451)=(0.41120000,0.09870000,0.13450000)
 rgb(0.89019608)=(0.40680000,0.09640000,0.13660000)
 rgb(0.89411765)=(0.40260000,0.09440000,0.13890000)
 rgb(0.89803922)=(0.39840000,0.09250000,0.14130000)
 rgb(0.90196078)=(0.39430000,0.09090000,0.14380000)
 rgb(0.90588235)=(0.39040000,0.08940000,0.14640000)
 rgb(0.90980392)=(0.38650000,0.08810000,0.14900000)
 rgb(0.91372549)=(0.38270000,0.08700000,0.15180000)
 rgb(0.91764706)=(0.37900000,0.08590000,0.15470000)
 rgb(0.92156863)=(0.37530000,0.08510000,0.15760000)
 rgb(0.92549020)=(0.37170000,0.08450000,0.16060000)
 rgb(0.92941176)=(0.36820000,0.08410000,0.16380000)
 rgb(0.93333333)=(0.36480000,0.08370000,0.16700000)
 rgb(0.93725490)=(0.36140000,0.08350000,0.17030000)
 rgb(0.94117647)=(0.35810000,0.08350000,0.17360000)
 rgb(0.94509804)=(0.35480000,0.08360000,0.17710000)
 rgb(0.94901961)=(0.35160000,0.08390000,0.18060000)
 rgb(0.95294118)=(0.34840000,0.08430000,0.18420000)
 rgb(0.95686275)=(0.34530000,0.08480000,0.18790000)
 rgb(0.96078431)=(0.34220000,0.08540000,0.19170000)
 rgb(0.96470588)=(0.33910000,0.08620000,0.19550000)
 rgb(0.96862745)=(0.33610000,0.08720000,0.19940000)
 rgb(0.97254902)=(0.33310000,0.08820000,0.20350000)
 rgb(0.97647059)=(0.33010000,0.08940000,0.20750000)
 rgb(0.98039216)=(0.32720000,0.09070000,0.21170000)
 rgb(0.98431373)=(0.32420000,0.09210000,0.21600000)
 rgb(0.98823529)=(0.32130000,0.09360000,0.22030000)
 rgb(0.99215686)=(0.31840000,0.09530000,0.22470000)
 rgb(0.99607843)=(0.31550000,0.09700000,0.22920000)
 rgb(1.00000000)=(0.31260000,0.09890000,0.23380000)},
 }

 \pgfplotsset{
 colormap={plots1}{rgb(0.00000000)=(0.30980000,0.10090000,0.23840000)
 rgb(0.00392157)=(0.30690000,0.10310000,0.24320000)
 rgb(0.00784314)=(0.30410000,0.10530000,0.24800000)
 rgb(0.01176471)=(0.30120000,0.10760000,0.25300000)
 rgb(0.01568627)=(0.29840000,0.11020000,0.25800000)
 rgb(0.01960784)=(0.29550000,0.11280000,0.26310000)
 rgb(0.02352941)=(0.29270000,0.11540000,0.26820000)
 rgb(0.02745098)=(0.28980000,0.11830000,0.27350000)
 rgb(0.03137255)=(0.28690000,0.12120000,0.27890000)
 rgb(0.03529412)=(0.28410000,0.12430000,0.28430000)
 rgb(0.03921569)=(0.28120000,0.12750000,0.28990000)
 rgb(0.04313725)=(0.27830000,0.13090000,0.29550000)
 rgb(0.04705882)=(0.27540000,0.13430000,0.30120000)
 rgb(0.05098039)=(0.27250000,0.13790000,0.30700000)
 rgb(0.05490196)=(0.26960000,0.14160000,0.31290000)
 rgb(0.05882353)=(0.26670000,0.14550000,0.31890000)
 rgb(0.06274510)=(0.26370000,0.14940000,0.32490000)
 rgb(0.06666667)=(0.26080000,0.15350000,0.33110000)
 rgb(0.07058824)=(0.25780000,0.15770000,0.33730000)
 rgb(0.07450980)=(0.25490000,0.16210000,0.34360000)
 rgb(0.07843137)=(0.25190000,0.16650000,0.34990000)
 rgb(0.08235294)=(0.24900000,0.17110000,0.35630000)
 rgb(0.08627451)=(0.24600000,0.17590000,0.36280000)
 rgb(0.09019608)=(0.24310000,0.18070000,0.36930000)
 rgb(0.09411765)=(0.24020000,0.18580000,0.37590000)
 rgb(0.09803922)=(0.23730000,0.19080000,0.38250000)
 rgb(0.10196078)=(0.23440000,0.19610000,0.38920000)
 rgb(0.10588235)=(0.23160000,0.20140000,0.39590000)
 rgb(0.10980392)=(0.22870000,0.20690000,0.40260000)
 rgb(0.11372549)=(0.22600000,0.21250000,0.40940000)
 rgb(0.11764706)=(0.22330000,0.21820000,0.41610000)
 rgb(0.12156863)=(0.22070000,0.22410000,0.42290000)
 rgb(0.12549020)=(0.21820000,0.23000000,0.42970000)
 rgb(0.12941176)=(0.21580000,0.23610000,0.43650000)
 rgb(0.13333333)=(0.21340000,0.24220000,0.44330000)
 rgb(0.13725490)=(0.21130000,0.24850000,0.45000000)
 rgb(0.14117647)=(0.20920000,0.25480000,0.45680000)
 rgb(0.14509804)=(0.20740000,0.26130000,0.46350000)
 rgb(0.14901961)=(0.20570000,0.26780000,0.47030000)
 rgb(0.15294118)=(0.20430000,0.27440000,0.47700000)
 rgb(0.15686275)=(0.20300000,0.28110000,0.48360000)
 rgb(0.16078431)=(0.20200000,0.28790000,0.49020000)
 rgb(0.16470588)=(0.20130000,0.29480000,0.49680000)
 rgb(0.16862745)=(0.20080000,0.30170000,0.50340000)
 rgb(0.17254902)=(0.20070000,0.30870000,0.50990000)
 rgb(0.17647059)=(0.20080000,0.31580000,0.51640000)
 rgb(0.18039216)=(0.20130000,0.32290000,0.52280000)
 rgb(0.18431373)=(0.20210000,0.33010000,0.52920000)
 rgb(0.18823529)=(0.20330000,0.33740000,0.53550000)
 rgb(0.19215686)=(0.20480000,0.34470000,0.54180000)
 rgb(0.19607843)=(0.20680000,0.35200000,0.54800000)
 rgb(0.20000000)=(0.20900000,0.35940000,0.55420000)
 rgb(0.20392157)=(0.21170000,0.36690000,0.56030000)
 rgb(0.20784314)=(0.21470000,0.37430000,0.56640000)
 rgb(0.21176471)=(0.21810000,0.38190000,0.57250000)
 rgb(0.21568627)=(0.22180000,0.38940000,0.57850000)
 rgb(0.21960784)=(0.22590000,0.39700000,0.58450000)
 rgb(0.22352941)=(0.23040000,0.40460000,0.59040000)
 rgb(0.22745098)=(0.23530000,0.41230000,0.59620000)
 rgb(0.23137255)=(0.24040000,0.42000000,0.60210000)
 rgb(0.23529412)=(0.24590000,0.42770000,0.60790000)
 rgb(0.23921569)=(0.25170000,0.43540000,0.61360000)
 rgb(0.24313725)=(0.25780000,0.44320000,0.61930000)
 rgb(0.24705882)=(0.26420000,0.45100000,0.62500000)
 rgb(0.25098039)=(0.27090000,0.45870000,0.63060000)
 rgb(0.25490196)=(0.27790000,0.46650000,0.63620000)
 rgb(0.25882353)=(0.28510000,0.47440000,0.64170000)
 rgb(0.26274510)=(0.29260000,0.48220000,0.64720000)
 rgb(0.26666667)=(0.30030000,0.49000000,0.65270000)
 rgb(0.27058824)=(0.30830000,0.49780000,0.65810000)
 rgb(0.27450980)=(0.31650000,0.50560000,0.66350000)
 rgb(0.27843137)=(0.32480000,0.51350000,0.66880000)
 rgb(0.28235294)=(0.33340000,0.52130000,0.67410000)
 rgb(0.28627451)=(0.34220000,0.52910000,0.67930000)
 rgb(0.29019608)=(0.35120000,0.53690000,0.68440000)
 rgb(0.29411765)=(0.36030000,0.54460000,0.68950000)
 rgb(0.29803922)=(0.36960000,0.55240000,0.69460000)
 rgb(0.30196078)=(0.37900000,0.56010000,0.69960000)
 rgb(0.30588235)=(0.38860000,0.56780000,0.70450000)
 rgb(0.30980392)=(0.39830000,0.57540000,0.70930000)
 rgb(0.31372549)=(0.40820000,0.58300000,0.71410000)
 rgb(0.31764706)=(0.41820000,0.59050000,0.71870000)
 rgb(0.32156863)=(0.42830000,0.59800000,0.72330000)
 rgb(0.32549020)=(0.43850000,0.60550000,0.72780000)
 rgb(0.32941176)=(0.44890000,0.61280000,0.73220000)
 rgb(0.33333333)=(0.45930000,0.62010000,0.73640000)
 rgb(0.33725490)=(0.46980000,0.62730000,0.74060000)
 rgb(0.34117647)=(0.48040000,0.63450000,0.74460000)
 rgb(0.34509804)=(0.49100000,0.64150000,0.74850000)
 rgb(0.34901961)=(0.50170000,0.64840000,0.75230000)
 rgb(0.35294118)=(0.51250000,0.65520000,0.75590000)
 rgb(0.35686275)=(0.52330000,0.66200000,0.75930000)
 rgb(0.36078431)=(0.53410000,0.66850000,0.76260000)
 rgb(0.36470588)=(0.54490000,0.67490000,0.76560000)
 rgb(0.36862745)=(0.55580000,0.68120000,0.76850000)
 rgb(0.37254902)=(0.56660000,0.68740000,0.77120000)
 rgb(0.37647059)=(0.57750000,0.69330000,0.77370000)
 rgb(0.38039216)=(0.58830000,0.69910000,0.77600000)
 rgb(0.38431373)=(0.59900000,0.70470000,0.77800000)
 rgb(0.38823529)=(0.60980000,0.71010000,0.77980000)
 rgb(0.39215686)=(0.62040000,0.71530000,0.78140000)
 rgb(0.39607843)=(0.63100000,0.72030000,0.78260000)
 rgb(0.40000000)=(0.64140000,0.72500000,0.78360000)
 rgb(0.40392157)=(0.65180000,0.72950000,0.78440000)
 rgb(0.40784314)=(0.66200000,0.73380000,0.78480000)
 rgb(0.41176471)=(0.67210000,0.73780000,0.78500000)
 rgb(0.41568627)=(0.68200000,0.74150000,0.78480000)
 rgb(0.41960784)=(0.69170000,0.74490000,0.78430000)
 rgb(0.42352941)=(0.70130000,0.74810000,0.78360000)
 rgb(0.42745098)=(0.71060000,0.75100000,0.78250000)
 rgb(0.43137255)=(0.71980000,0.75360000,0.78100000)
 rgb(0.43529412)=(0.72870000,0.75580000,0.77930000)
 rgb(0.43921569)=(0.73730000,0.75780000,0.77720000)
 rgb(0.44313725)=(0.74570000,0.75940000,0.77480000)
 rgb(0.44705882)=(0.75380000,0.76080000,0.77200000)
 rgb(0.45098039)=(0.76160000,0.76180000,0.76900000)
 rgb(0.45490196)=(0.76910000,0.76250000,0.76560000)
 rgb(0.45882353)=(0.77640000,0.76280000,0.76190000)
 rgb(0.46274510)=(0.78320000,0.76290000,0.75780000)
 rgb(0.46666667)=(0.78980000,0.76260000,0.75350000)
 rgb(0.47058824)=(0.79610000,0.76200000,0.74880000)
 rgb(0.47450980)=(0.80200000,0.76110000,0.74390000)
 rgb(0.47843137)=(0.80760000,0.75980000,0.73870000)
 rgb(0.48235294)=(0.81280000,0.75830000,0.73320000)
 rgb(0.48627451)=(0.81770000,0.75650000,0.72740000)
 rgb(0.49019608)=(0.82220000,0.75430000,0.72130000)
 rgb(0.49411765)=(0.82640000,0.75190000,0.71510000)
 rgb(0.49803922)=(0.83030000,0.74920000,0.70860000)
 rgb(0.50196078)=(0.83390000,0.74630000,0.70180000)
 rgb(0.50588235)=(0.83710000,0.74310000,0.69490000)
 rgb(0.50980392)=(0.84000000,0.73960000,0.68770000)
 rgb(0.51372549)=(0.84250000,0.73590000,0.68040000)
 rgb(0.51764706)=(0.84480000,0.73200000,0.67280000)
 rgb(0.52156863)=(0.84670000,0.72780000,0.66520000)
 rgb(0.52549020)=(0.84840000,0.72340000,0.65730000)
 rgb(0.52941176)=(0.84970000,0.71890000,0.64930000)
 rgb(0.53333333)=(0.85080000,0.71410000,0.64120000)
 rgb(0.53725490)=(0.85160000,0.70920000,0.63300000)
 rgb(0.54117647)=(0.85220000,0.70410000,0.62460000)
 rgb(0.54509804)=(0.85250000,0.69880000,0.61620000)
 rgb(0.54901961)=(0.85260000,0.69340000,0.60760000)
 rgb(0.55294118)=(0.85240000,0.68780000,0.59900000)
 rgb(0.55686275)=(0.85200000,0.68210000,0.59030000)
 rgb(0.56078431)=(0.85140000,0.67620000,0.58150000)
 rgb(0.56470588)=(0.85050000,0.67030000,0.57270000)
 rgb(0.56862745)=(0.84950000,0.66420000,0.56380000)
 rgb(0.57254902)=(0.84830000,0.65800000,0.55490000)
 rgb(0.57647059)=(0.84690000,0.65170000,0.54590000)
 rgb(0.58039216)=(0.84530000,0.64530000,0.53690000)
 rgb(0.58431373)=(0.84350000,0.63880000,0.52780000)
 rgb(0.58823529)=(0.84160000,0.63220000,0.51880000)
 rgb(0.59215686)=(0.83950000,0.62550000,0.50970000)
 rgb(0.59607843)=(0.83720000,0.61880000,0.50050000)
 rgb(0.60000000)=(0.83480000,0.61190000,0.49140000)
 rgb(0.60392157)=(0.83230000,0.60500000,0.48230000)
 rgb(0.60784314)=(0.82960000,0.59800000,0.47310000)
 rgb(0.61176471)=(0.82670000,0.59090000,0.46400000)
 rgb(0.61568627)=(0.82370000,0.58380000,0.45480000)
 rgb(0.61960784)=(0.82060000,0.57650000,0.44570000)
 rgb(0.62352941)=(0.81730000,0.56920000,0.43660000)
 rgb(0.62745098)=(0.81390000,0.56190000,0.42750000)
 rgb(0.63137255)=(0.81040000,0.55450000,0.41830000)
 rgb(0.63529412)=(0.80670000,0.54690000,0.40930000)
 rgb(0.63921569)=(0.80290000,0.53940000,0.40020000)
 rgb(0.64313725)=(0.79890000,0.53170000,0.39110000)
 rgb(0.64705882)=(0.79480000,0.52400000,0.38210000)
 rgb(0.65098039)=(0.79060000,0.51620000,0.37310000)
 rgb(0.65490196)=(0.78620000,0.50840000,0.36420000)
 rgb(0.65882353)=(0.78170000,0.50040000,0.35530000)
 rgb(0.66274510)=(0.77710000,0.49240000,0.34640000)
 rgb(0.66666667)=(0.77230000,0.48440000,0.33760000)
 rgb(0.67058824)=(0.76730000,0.47620000,0.32880000)
 rgb(0.67450980)=(0.76220000,0.46800000,0.32010000)
 rgb(0.67843137)=(0.75700000,0.45970000,0.31140000)
 rgb(0.68235294)=(0.75160000,0.45140000,0.30280000)
 rgb(0.68627451)=(0.74600000,0.44300000,0.29430000)
 rgb(0.69019608)=(0.74030000,0.43450000,0.28590000)
 rgb(0.69411765)=(0.73450000,0.42600000,0.27760000)
 rgb(0.69803922)=(0.72850000,0.41740000,0.26940000)
 rgb(0.70196078)=(0.72230000,0.40880000,0.26130000)
 rgb(0.70588235)=(0.71600000,0.40010000,0.25340000)
 rgb(0.70980392)=(0.70960000,0.39140000,0.24550000)
 rgb(0.71372549)=(0.70300000,0.38260000,0.23780000)
 rgb(0.71764706)=(0.69630000,0.37380000,0.23030000)
 rgb(0.72156863)=(0.68940000,0.36500000,0.22290000)
 rgb(0.72549020)=(0.68250000,0.35620000,0.21570000)
 rgb(0.72941176)=(0.67540000,0.34740000,0.20870000)
 rgb(0.73333333)=(0.66820000,0.33850000,0.20190000)
 rgb(0.73725490)=(0.66090000,0.32970000,0.19530000)
 rgb(0.74117647)=(0.65350000,0.32090000,0.18900000)
 rgb(0.74509804)=(0.64600000,0.31220000,0.18290000)
 rgb(0.74901961)=(0.63850000,0.30350000,0.17700000)
 rgb(0.75294118)=(0.63090000,0.29490000,0.17140000)
 rgb(0.75686275)=(0.62330000,0.28630000,0.16600000)
 rgb(0.76078431)=(0.61560000,0.27790000,0.16100000)
 rgb(0.76470588)=(0.60790000,0.26950000,0.15620000)
 rgb(0.76862745)=(0.60020000,0.26120000,0.15160000)
 rgb(0.77254902)=(0.59250000,0.25310000,0.14750000)
 rgb(0.77647059)=(0.58490000,0.24500000,0.14350000)
 rgb(0.78039216)=(0.57720000,0.23720000,0.13990000)
 rgb(0.78431373)=(0.56960000,0.22950000,0.13660000)
 rgb(0.78823529)=(0.56210000,0.22190000,0.13360000)
 rgb(0.79215686)=(0.55460000,0.21450000,0.13080000)
 rgb(0.79607843)=(0.54720000,0.20730000,0.12840000)
 rgb(0.80000000)=(0.53990000,0.20020000,0.12620000)
 rgb(0.80392157)=(0.53270000,0.19340000,0.12430000)
 rgb(0.80784314)=(0.52560000,0.18680000,0.12270000)
 rgb(0.81176471)=(0.51860000,0.18030000,0.12140000)
 rgb(0.81568627)=(0.51170000,0.17410000,0.12030000)
 rgb(0.81960784)=(0.50500000,0.16810000,0.11950000)
 rgb(0.82352941)=(0.49840000,0.16230000,0.11890000)
 rgb(0.82745098)=(0.49190000,0.15670000,0.11850000)
 rgb(0.83137255)=(0.48560000,0.15130000,0.11840000)
 rgb(0.83529412)=(0.47940000,0.14620000,0.11840000)
 rgb(0.83921569)=(0.47330000,0.14130000,0.11870000)
 rgb(0.84313725)=(0.46740000,0.13650000,0.11910000)
 rgb(0.84705882)=(0.46160000,0.13200000,0.11970000)
 rgb(0.85098039)=(0.45600000,0.12770000,0.12050000)
 rgb(0.85490196)=(0.45050000,0.12360000,0.12150000)
 rgb(0.85882353)=(0.44510000,0.11980000,0.12260000)
 rgb(0.86274510)=(0.43990000,0.11620000,0.12390000)
 rgb(0.86666667)=(0.43480000,0.11280000,0.12540000)
 rgb(0.87058824)=(0.42980000,0.10960000,0.12690000)
 rgb(0.87450980)=(0.42500000,0.10650000,0.12860000)
 rgb(0.87843137)=(0.42030000,0.10370000,0.13050000)
 rgb(0.88235294)=(0.41570000,0.10110000,0.13240000)
 rgb(0.88627451)=(0.41120000,0.09870000,0.13450000)
 rgb(0.89019608)=(0.40680000,0.09640000,0.13660000)
 rgb(0.89411765)=(0.40260000,0.09440000,0.13890000)
 rgb(0.89803922)=(0.39840000,0.09250000,0.14130000)
 rgb(0.90196078)=(0.39430000,0.09090000,0.14380000)
 rgb(0.90588235)=(0.39040000,0.08940000,0.14640000)
 rgb(0.90980392)=(0.38650000,0.08810000,0.14900000)
 rgb(0.91372549)=(0.38270000,0.08700000,0.15180000)
 rgb(0.91764706)=(0.37900000,0.08590000,0.15470000)
 rgb(0.92156863)=(0.37530000,0.08510000,0.15760000)
 rgb(0.92549020)=(0.37170000,0.08450000,0.16060000)
 rgb(0.92941176)=(0.36820000,0.08410000,0.16380000)
 rgb(0.93333333)=(0.36480000,0.08370000,0.16700000)
 rgb(0.93725490)=(0.36140000,0.08350000,0.17030000)
 rgb(0.94117647)=(0.35810000,0.08350000,0.17360000)
 rgb(0.94509804)=(0.35480000,0.08360000,0.17710000)
 rgb(0.94901961)=(0.35160000,0.08390000,0.18060000)
 rgb(0.95294118)=(0.34840000,0.08430000,0.18420000)
 rgb(0.95686275)=(0.34530000,0.08480000,0.18790000)
 rgb(0.96078431)=(0.34220000,0.08540000,0.19170000)
 rgb(0.96470588)=(0.33910000,0.08620000,0.19550000)
 rgb(0.96862745)=(0.33610000,0.08720000,0.19940000)
 rgb(0.97254902)=(0.33310000,0.08820000,0.20350000)
 rgb(0.97647059)=(0.33010000,0.08940000,0.20750000)
 rgb(0.98039216)=(0.32720000,0.09070000,0.21170000)
 rgb(0.98431373)=(0.32420000,0.09210000,0.21600000)
 rgb(0.98823529)=(0.32130000,0.09360000,0.22030000)
 rgb(0.99215686)=(0.31840000,0.09530000,0.22470000)
 rgb(0.99607843)=(0.31550000,0.09700000,0.22920000)
 rgb(1.00000000)=(0.31260000,0.09890000,0.23380000)},
 }

 \pgfplotsset{
 colormap={plots2}{rgb(0.00000000)=(0.00130000,0.06980000,0.37950000)
 rgb(0.00392157)=(0.00240000,0.07650000,0.38350000)
 rgb(0.00784314)=(0.00330000,0.08310000,0.38750000)
 rgb(0.01176471)=(0.00410000,0.08960000,0.39150000)
 rgb(0.01568627)=(0.00490000,0.09590000,0.39550000)
 rgb(0.01960784)=(0.00560000,0.10230000,0.39940000)
 rgb(0.02352941)=(0.00620000,0.10850000,0.40340000)
 rgb(0.02745098)=(0.00670000,0.11470000,0.40730000)
 rgb(0.03137255)=(0.00710000,0.12080000,0.41130000)
 rgb(0.03529412)=(0.00750000,0.12700000,0.41520000)
 rgb(0.03921569)=(0.00780000,0.13310000,0.41920000)
 rgb(0.04313725)=(0.00810000,0.13910000,0.42310000)
 rgb(0.04705882)=(0.00840000,0.14520000,0.42700000)
 rgb(0.05098039)=(0.00860000,0.15110000,0.43090000)
 rgb(0.05490196)=(0.00880000,0.15710000,0.43480000)
 rgb(0.05882353)=(0.00890000,0.16320000,0.43870000)
 rgb(0.06274510)=(0.00910000,0.16910000,0.44260000)
 rgb(0.06666667)=(0.00920000,0.17510000,0.44650000)
 rgb(0.07058824)=(0.00930000,0.18110000,0.45030000)
 rgb(0.07450980)=(0.00940000,0.18710000,0.45420000)
 rgb(0.07843137)=(0.00940000,0.19300000,0.45810000)
 rgb(0.08235294)=(0.00950000,0.19900000,0.46200000)
 rgb(0.08627451)=(0.00960000,0.20500000,0.46580000)
 rgb(0.09019608)=(0.00960000,0.21100000,0.46970000)
 rgb(0.09411765)=(0.00970000,0.21700000,0.47360000)
 rgb(0.09803922)=(0.00970000,0.22310000,0.47750000)
 rgb(0.10196078)=(0.00980000,0.22910000,0.48140000)
 rgb(0.10588235)=(0.00990000,0.23520000,0.48520000)
 rgb(0.10980392)=(0.01000000,0.24130000,0.48920000)
 rgb(0.11372549)=(0.01010000,0.24740000,0.49310000)
 rgb(0.11764706)=(0.01030000,0.25350000,0.49700000)
 rgb(0.12156863)=(0.01050000,0.25970000,0.50100000)
 rgb(0.12549020)=(0.01080000,0.26590000,0.50490000)
 rgb(0.12941176)=(0.01120000,0.27200000,0.50890000)
 rgb(0.13333333)=(0.01170000,0.27830000,0.51290000)
 rgb(0.13725490)=(0.01230000,0.28460000,0.51700000)
 rgb(0.14117647)=(0.01290000,0.29090000,0.52100000)
 rgb(0.14509804)=(0.01380000,0.29720000,0.52510000)
 rgb(0.14901961)=(0.01480000,0.30360000,0.52920000)
 rgb(0.15294118)=(0.01610000,0.31000000,0.53330000)
 rgb(0.15686275)=(0.01770000,0.31650000,0.53750000)
 rgb(0.16078431)=(0.01960000,0.32300000,0.54170000)
 rgb(0.16470588)=(0.02190000,0.32960000,0.54590000)
 rgb(0.16862745)=(0.02470000,0.33610000,0.55020000)
 rgb(0.17254902)=(0.02800000,0.34280000,0.55450000)
 rgb(0.17647059)=(0.03200000,0.34950000,0.55890000)
 rgb(0.18039216)=(0.03680000,0.35630000,0.56330000)
 rgb(0.18431373)=(0.04220000,0.36320000,0.56780000)
 rgb(0.18823529)=(0.04800000,0.37010000,0.57230000)
 rgb(0.19215686)=(0.05430000,0.37710000,0.57690000)
 rgb(0.19607843)=(0.06100000,0.38410000,0.58160000)
 rgb(0.20000000)=(0.06810000,0.39130000,0.58630000)
 rgb(0.20392157)=(0.07550000,0.39850000,0.59100000)
 rgb(0.20784314)=(0.08320000,0.40570000,0.59590000)
 rgb(0.21176471)=(0.09140000,0.41310000,0.60080000)
 rgb(0.21568627)=(0.09980000,0.42050000,0.60570000)
 rgb(0.21960784)=(0.10860000,0.42800000,0.61070000)
 rgb(0.22352941)=(0.11770000,0.43560000,0.61580000)
 rgb(0.22745098)=(0.12700000,0.44320000,0.62090000)
 rgb(0.23137255)=(0.13670000,0.45090000,0.62610000)
 rgb(0.23529412)=(0.14660000,0.45860000,0.63130000)
 rgb(0.23921569)=(0.15680000,0.46650000,0.63660000)
 rgb(0.24313725)=(0.16720000,0.47430000,0.64190000)
 rgb(0.24705882)=(0.17780000,0.48220000,0.64720000)
 rgb(0.25098039)=(0.18860000,0.49020000,0.65260000)
 rgb(0.25490196)=(0.19960000,0.49820000,0.65800000)
 rgb(0.25882353)=(0.21080000,0.50620000,0.66350000)
 rgb(0.26274510)=(0.22210000,0.51430000,0.66890000)
 rgb(0.26666667)=(0.23360000,0.52230000,0.67440000)
 rgb(0.27058824)=(0.24520000,0.53040000,0.67990000)
 rgb(0.27450980)=(0.25700000,0.53850000,0.68540000)
 rgb(0.27843137)=(0.26890000,0.54660000,0.69090000)
 rgb(0.28235294)=(0.28080000,0.55470000,0.69640000)
 rgb(0.28627451)=(0.29290000,0.56280000,0.70190000)
 rgb(0.29019608)=(0.30500000,0.57090000,0.70740000)
 rgb(0.29411765)=(0.31720000,0.57900000,0.71300000)
 rgb(0.29803922)=(0.32940000,0.58710000,0.71840000)
 rgb(0.30196078)=(0.34170000,0.59510000,0.72390000)
 rgb(0.30588235)=(0.35410000,0.60320000,0.72940000)
 rgb(0.30980392)=(0.36650000,0.61120000,0.73490000)
 rgb(0.31372549)=(0.37890000,0.61920000,0.74030000)
 rgb(0.31764706)=(0.39130000,0.62720000,0.74580000)
 rgb(0.32156863)=(0.40380000,0.63510000,0.75120000)
 rgb(0.32549020)=(0.41620000,0.64300000,0.75660000)
 rgb(0.32941176)=(0.42870000,0.65100000,0.76200000)
 rgb(0.33333333)=(0.44120000,0.65880000,0.76730000)
 rgb(0.33725490)=(0.45370000,0.66670000,0.77270000)
 rgb(0.34117647)=(0.46620000,0.67450000,0.77800000)
 rgb(0.34509804)=(0.47870000,0.68230000,0.78340000)
 rgb(0.34901961)=(0.49120000,0.69010000,0.78870000)
 rgb(0.35294118)=(0.50370000,0.69790000,0.79400000)
 rgb(0.35686275)=(0.51620000,0.70570000,0.79930000)
 rgb(0.36078431)=(0.52870000,0.71340000,0.80450000)
 rgb(0.36470588)=(0.54110000,0.72110000,0.80980000)
 rgb(0.36862745)=(0.55360000,0.72880000,0.81500000)
 rgb(0.37254902)=(0.56610000,0.73640000,0.82020000)
 rgb(0.37647059)=(0.57860000,0.74410000,0.82540000)
 rgb(0.38039216)=(0.59100000,0.75170000,0.83060000)
 rgb(0.38431373)=(0.60350000,0.75930000,0.83580000)
 rgb(0.38823529)=(0.61590000,0.76690000,0.84090000)
 rgb(0.39215686)=(0.62840000,0.77450000,0.84610000)
 rgb(0.39607843)=(0.64080000,0.78200000,0.85110000)
 rgb(0.40000000)=(0.65320000,0.78950000,0.85620000)
 rgb(0.40392157)=(0.66560000,0.79690000,0.86120000)
 rgb(0.40784314)=(0.67810000,0.80440000,0.86620000)
 rgb(0.41176471)=(0.69050000,0.81170000,0.87110000)
 rgb(0.41568627)=(0.70290000,0.81900000,0.87590000)
 rgb(0.41960784)=(0.71530000,0.82630000,0.88060000)
 rgb(0.42352941)=(0.72760000,0.83340000,0.88510000)
 rgb(0.42745098)=(0.74000000,0.84050000,0.88960000)
 rgb(0.43137255)=(0.75240000,0.84740000,0.89380000)
 rgb(0.43529412)=(0.76470000,0.85410000,0.89780000)
 rgb(0.43921569)=(0.77690000,0.86070000,0.90160000)
 rgb(0.44313725)=(0.78910000,0.86700000,0.90500000)
 rgb(0.44705882)=(0.80120000,0.87300000,0.90800000)
 rgb(0.45098039)=(0.81310000,0.87870000,0.91070000)
 rgb(0.45490196)=(0.82490000,0.88410000,0.91280000)
 rgb(0.45882353)=(0.83640000,0.88890000,0.91430000)
 rgb(0.46274510)=(0.84760000,0.89330000,0.91520000)
 rgb(0.46666667)=(0.85850000,0.89710000,0.91540000)
 rgb(0.47058824)=(0.86890000,0.90020000,0.91480000)
 rgb(0.47450980)=(0.87870000,0.90260000,0.91340000)
 rgb(0.47843137)=(0.88800000,0.90430000,0.91120000)
 rgb(0.48235294)=(0.89650000,0.90520000,0.90800000)
 rgb(0.48627451)=(0.90420000,0.90520000,0.90400000)
 rgb(0.49019608)=(0.91120000,0.90440000,0.89910000)
 rgb(0.49411765)=(0.91720000,0.90280000,0.89340000)
 rgb(0.49803922)=(0.92230000,0.90040000,0.88690000)
 rgb(0.50196078)=(0.92650000,0.89720000,0.87970000)
 rgb(0.50588235)=(0.92980000,0.89330000,0.87180000)
 rgb(0.50980392)=(0.93220000,0.88870000,0.86340000)
 rgb(0.51372549)=(0.93390000,0.88360000,0.85450000)
 rgb(0.51764706)=(0.93480000,0.87790000,0.84520000)
 rgb(0.52156863)=(0.93500000,0.87180000,0.83550000)
 rgb(0.52549020)=(0.93460000,0.86530000,0.82560000)
 rgb(0.52941176)=(0.93380000,0.85850000,0.81540000)
 rgb(0.53333333)=(0.93240000,0.85150000,0.80510000)
 rgb(0.53725490)=(0.93070000,0.84420000,0.79470000)
 rgb(0.54117647)=(0.92860000,0.83680000,0.78420000)
 rgb(0.54509804)=(0.92630000,0.82920000,0.77360000)
 rgb(0.54901961)=(0.92380000,0.82150000,0.76300000)
 rgb(0.55294118)=(0.92100000,0.81380000,0.75230000)
 rgb(0.55686275)=(0.91810000,0.80600000,0.74170000)
 rgb(0.56078431)=(0.91520000,0.79820000,0.73100000)
 rgb(0.56470588)=(0.91210000,0.79030000,0.72040000)
 rgb(0.56862745)=(0.90890000,0.78240000,0.70980000)
 rgb(0.57254902)=(0.90570000,0.77450000,0.69920000)
 rgb(0.57647059)=(0.90250000,0.76670000,0.68860000)
 rgb(0.58039216)=(0.89920000,0.75880000,0.67810000)
 rgb(0.58431373)=(0.89600000,0.75100000,0.66760000)
 rgb(0.58823529)=(0.89270000,0.74310000,0.65710000)
 rgb(0.59215686)=(0.88940000,0.73530000,0.64670000)
 rgb(0.59607843)=(0.88610000,0.72760000,0.63630000)
 rgb(0.60000000)=(0.88280000,0.71980000,0.62590000)
 rgb(0.60392157)=(0.87960000,0.71210000,0.61560000)
 rgb(0.60784314)=(0.87630000,0.70440000,0.60540000)
 rgb(0.61176471)=(0.87300000,0.69680000,0.59510000)
 rgb(0.61568627)=(0.86980000,0.68910000,0.58500000)
 rgb(0.61960784)=(0.86660000,0.68150000,0.57480000)
 rgb(0.62352941)=(0.86330000,0.67400000,0.56470000)
 rgb(0.62745098)=(0.86010000,0.66650000,0.55470000)
 rgb(0.63137255)=(0.85690000,0.65900000,0.54470000)
 rgb(0.63529412)=(0.85370000,0.65150000,0.53480000)
 rgb(0.63921569)=(0.85060000,0.64410000,0.52480000)
 rgb(0.64313725)=(0.84740000,0.63670000,0.51500000)
 rgb(0.64705882)=(0.84430000,0.62930000,0.50510000)
 rgb(0.65098039)=(0.84110000,0.62200000,0.49540000)
 rgb(0.65490196)=(0.83800000,0.61470000,0.48560000)
 rgb(0.65882353)=(0.83490000,0.60740000,0.47590000)
 rgb(0.66274510)=(0.83180000,0.60010000,0.46630000)
 rgb(0.66666667)=(0.82870000,0.59290000,0.45670000)
 rgb(0.67058824)=(0.82560000,0.58580000,0.44710000)
 rgb(0.67450980)=(0.82260000,0.57860000,0.43760000)
 rgb(0.67843137)=(0.81950000,0.57150000,0.42810000)
 rgb(0.68235294)=(0.81650000,0.56440000,0.41870000)
 rgb(0.68627451)=(0.81350000,0.55730000,0.40930000)
 rgb(0.69019608)=(0.81040000,0.55030000,0.39990000)
 rgb(0.69411765)=(0.80740000,0.54330000,0.39060000)
 rgb(0.69803922)=(0.80440000,0.53630000,0.38130000)
 rgb(0.70196078)=(0.80150000,0.52930000,0.37200000)
 rgb(0.70588235)=(0.79850000,0.52240000,0.36280000)
 rgb(0.70980392)=(0.79550000,0.51550000,0.35370000)
 rgb(0.71372549)=(0.79250000,0.50860000,0.34450000)
 rgb(0.71764706)=(0.78960000,0.50170000,0.33540000)
 rgb(0.72156863)=(0.78660000,0.49480000,0.32630000)
 rgb(0.72549020)=(0.78370000,0.48800000,0.31730000)
 rgb(0.72941176)=(0.78070000,0.48110000,0.30830000)
 rgb(0.73333333)=(0.77770000,0.47430000,0.29930000)
 rgb(0.73725490)=(0.77480000,0.46750000,0.29040000)
 rgb(0.74117647)=(0.77180000,0.46060000,0.28140000)
 rgb(0.74509804)=(0.76880000,0.45380000,0.27250000)
 rgb(0.74901961)=(0.76580000,0.44690000,0.26360000)
 rgb(0.75294118)=(0.76270000,0.44010000,0.25480000)
 rgb(0.75686275)=(0.75960000,0.43310000,0.24590000)
 rgb(0.76078431)=(0.75650000,0.42620000,0.23700000)
 rgb(0.76470588)=(0.75330000,0.41920000,0.22820000)
 rgb(0.76862745)=(0.75010000,0.41220000,0.21930000)
 rgb(0.77254902)=(0.74670000,0.40500000,0.21050000)
 rgb(0.77647059)=(0.74320000,0.39780000,0.20160000)
 rgb(0.78039216)=(0.73970000,0.39050000,0.19270000)
 rgb(0.78431373)=(0.73590000,0.38310000,0.18390000)
 rgb(0.78823529)=(0.73200000,0.37550000,0.17500000)
 rgb(0.79215686)=(0.72790000,0.36770000,0.16600000)
 rgb(0.79607843)=(0.72350000,0.35990000,0.15710000)
 rgb(0.80000000)=(0.71890000,0.35180000,0.14820000)
 rgb(0.80392157)=(0.71400000,0.34350000,0.13930000)
 rgb(0.80784314)=(0.70880000,0.33500000,0.13050000)
 rgb(0.81176471)=(0.70330000,0.32640000,0.12150000)
 rgb(0.81568627)=(0.69740000,0.31750000,0.11280000)
 rgb(0.81960784)=(0.69120000,0.30850000,0.10410000)
 rgb(0.82352941)=(0.68470000,0.29930000,0.09560000)
 rgb(0.82745098)=(0.67770000,0.28990000,0.08740000)
 rgb(0.83137255)=(0.67050000,0.28050000,0.07920000)
 rgb(0.83529412)=(0.66290000,0.27100000,0.07150000)
 rgb(0.83921569)=(0.65500000,0.26150000,0.06410000)
 rgb(0.84313725)=(0.64700000,0.25210000,0.05710000)
 rgb(0.84705882)=(0.63870000,0.24270000,0.05060000)
 rgb(0.85098039)=(0.63030000,0.23350000,0.04480000)
 rgb(0.85490196)=(0.62170000,0.22440000,0.03940000)
 rgb(0.85882353)=(0.61310000,0.21570000,0.03480000)
 rgb(0.86274510)=(0.60450000,0.20710000,0.03110000)
 rgb(0.86666667)=(0.59590000,0.19870000,0.02820000)
 rgb(0.87058824)=(0.58740000,0.19070000,0.02600000)
 rgb(0.87450980)=(0.57890000,0.18290000,0.02440000)
 rgb(0.87843137)=(0.57050000,0.17540000,0.02330000)
 rgb(0.88235294)=(0.56230000,0.16820000,0.02250000)
 rgb(0.88627451)=(0.55410000,0.16120000,0.02210000)
 rgb(0.89019608)=(0.54600000,0.15440000,0.02190000)
 rgb(0.89411765)=(0.53800000,0.14790000,0.02170000)
 rgb(0.89803922)=(0.53020000,0.14150000,0.02170000)
 rgb(0.90196078)=(0.52240000,0.13530000,0.02180000)
 rgb(0.90588235)=(0.51480000,0.12920000,0.02200000)
 rgb(0.90980392)=(0.50720000,0.12330000,0.02220000)
 rgb(0.91372549)=(0.49970000,0.11750000,0.02250000)
 rgb(0.91764706)=(0.49230000,0.11180000,0.02280000)
 rgb(0.92156863)=(0.48500000,0.10620000,0.02310000)
 rgb(0.92549020)=(0.47780000,0.10060000,0.02350000)
 rgb(0.92941176)=(0.47060000,0.09520000,0.02390000)
 rgb(0.93333333)=(0.46350000,0.08970000,0.02430000)
 rgb(0.93725490)=(0.45650000,0.08430000,0.02480000)
 rgb(0.94117647)=(0.44950000,0.07870000,0.02520000)
 rgb(0.94509804)=(0.44260000,0.07340000,0.02560000)
 rgb(0.94901961)=(0.43570000,0.06790000,0.02610000)
 rgb(0.95294118)=(0.42890000,0.06240000,0.02650000)
 rgb(0.95686275)=(0.42210000,0.05680000,0.02700000)
 rgb(0.96078431)=(0.41540000,0.05110000,0.02740000)
 rgb(0.96470588)=(0.40880000,0.04540000,0.02780000)
 rgb(0.96862745)=(0.40210000,0.03940000,0.02820000)
 rgb(0.97254902)=(0.39560000,0.03340000,0.02860000)
 rgb(0.97647059)=(0.38900000,0.02780000,0.02890000)
 rgb(0.98039216)=(0.38250000,0.02260000,0.02930000)
 rgb(0.98431373)=(0.37600000,0.01760000,0.02960000)
 rgb(0.98823529)=(0.36960000,0.01290000,0.02990000)
 rgb(0.99215686)=(0.36320000,0.00820000,0.03010000)
 rgb(0.99607843)=(0.35680000,0.00400000,0.03030000)
 rgb(1.00000000)=(0.35040000,0.00010000,0.03050000)},
 }
 \pgfplotsset{
 colormap={plots2}{rgb(0.00000000)=(0.00130000,0.06980000,0.37950000)
 rgb(0.00392157)=(0.00240000,0.07650000,0.38350000)
 rgb(0.00784314)=(0.00330000,0.08310000,0.38750000)
 rgb(0.01176471)=(0.00410000,0.08960000,0.39150000)
 rgb(0.01568627)=(0.00490000,0.09590000,0.39550000)
 rgb(0.01960784)=(0.00560000,0.10230000,0.39940000)
 rgb(0.02352941)=(0.00620000,0.10850000,0.40340000)
 rgb(0.02745098)=(0.00670000,0.11470000,0.40730000)
 rgb(0.03137255)=(0.00710000,0.12080000,0.41130000)
 rgb(0.03529412)=(0.00750000,0.12700000,0.41520000)
 rgb(0.03921569)=(0.00780000,0.13310000,0.41920000)
 rgb(0.04313725)=(0.00810000,0.13910000,0.42310000)
 rgb(0.04705882)=(0.00840000,0.14520000,0.42700000)
 rgb(0.05098039)=(0.00860000,0.15110000,0.43090000)
 rgb(0.05490196)=(0.00880000,0.15710000,0.43480000)
 rgb(0.05882353)=(0.00890000,0.16320000,0.43870000)
 rgb(0.06274510)=(0.00910000,0.16910000,0.44260000)
 rgb(0.06666667)=(0.00920000,0.17510000,0.44650000)
 rgb(0.07058824)=(0.00930000,0.18110000,0.45030000)
 rgb(0.07450980)=(0.00940000,0.18710000,0.45420000)
 rgb(0.07843137)=(0.00940000,0.19300000,0.45810000)
 rgb(0.08235294)=(0.00950000,0.19900000,0.46200000)
 rgb(0.08627451)=(0.00960000,0.20500000,0.46580000)
 rgb(0.09019608)=(0.00960000,0.21100000,0.46970000)
 rgb(0.09411765)=(0.00970000,0.21700000,0.47360000)
 rgb(0.09803922)=(0.00970000,0.22310000,0.47750000)
 rgb(0.10196078)=(0.00980000,0.22910000,0.48140000)
 rgb(0.10588235)=(0.00990000,0.23520000,0.48520000)
 rgb(0.10980392)=(0.01000000,0.24130000,0.48920000)
 rgb(0.11372549)=(0.01010000,0.24740000,0.49310000)
 rgb(0.11764706)=(0.01030000,0.25350000,0.49700000)
 rgb(0.12156863)=(0.01050000,0.25970000,0.50100000)
 rgb(0.12549020)=(0.01080000,0.26590000,0.50490000)
 rgb(0.12941176)=(0.01120000,0.27200000,0.50890000)
 rgb(0.13333333)=(0.01170000,0.27830000,0.51290000)
 rgb(0.13725490)=(0.01230000,0.28460000,0.51700000)
 rgb(0.14117647)=(0.01290000,0.29090000,0.52100000)
 rgb(0.14509804)=(0.01380000,0.29720000,0.52510000)
 rgb(0.14901961)=(0.01480000,0.30360000,0.52920000)
 rgb(0.15294118)=(0.01610000,0.31000000,0.53330000)
 rgb(0.15686275)=(0.01770000,0.31650000,0.53750000)
 rgb(0.16078431)=(0.01960000,0.32300000,0.54170000)
 rgb(0.16470588)=(0.02190000,0.32960000,0.54590000)
 rgb(0.16862745)=(0.02470000,0.33610000,0.55020000)
 rgb(0.17254902)=(0.02800000,0.34280000,0.55450000)
 rgb(0.17647059)=(0.03200000,0.34950000,0.55890000)
 rgb(0.18039216)=(0.03680000,0.35630000,0.56330000)
 rgb(0.18431373)=(0.04220000,0.36320000,0.56780000)
 rgb(0.18823529)=(0.04800000,0.37010000,0.57230000)
 rgb(0.19215686)=(0.05430000,0.37710000,0.57690000)
 rgb(0.19607843)=(0.06100000,0.38410000,0.58160000)
 rgb(0.20000000)=(0.06810000,0.39130000,0.58630000)
 rgb(0.20392157)=(0.07550000,0.39850000,0.59100000)
 rgb(0.20784314)=(0.08320000,0.40570000,0.59590000)
 rgb(0.21176471)=(0.09140000,0.41310000,0.60080000)
 rgb(0.21568627)=(0.09980000,0.42050000,0.60570000)
 rgb(0.21960784)=(0.10860000,0.42800000,0.61070000)
 rgb(0.22352941)=(0.11770000,0.43560000,0.61580000)
 rgb(0.22745098)=(0.12700000,0.44320000,0.62090000)
 rgb(0.23137255)=(0.13670000,0.45090000,0.62610000)
 rgb(0.23529412)=(0.14660000,0.45860000,0.63130000)
 rgb(0.23921569)=(0.15680000,0.46650000,0.63660000)
 rgb(0.24313725)=(0.16720000,0.47430000,0.64190000)
 rgb(0.24705882)=(0.17780000,0.48220000,0.64720000)
 rgb(0.25098039)=(0.18860000,0.49020000,0.65260000)
 rgb(0.25490196)=(0.19960000,0.49820000,0.65800000)
 rgb(0.25882353)=(0.21080000,0.50620000,0.66350000)
 rgb(0.26274510)=(0.22210000,0.51430000,0.66890000)
 rgb(0.26666667)=(0.23360000,0.52230000,0.67440000)
 rgb(0.27058824)=(0.24520000,0.53040000,0.67990000)
 rgb(0.27450980)=(0.25700000,0.53850000,0.68540000)
 rgb(0.27843137)=(0.26890000,0.54660000,0.69090000)
 rgb(0.28235294)=(0.28080000,0.55470000,0.69640000)
 rgb(0.28627451)=(0.29290000,0.56280000,0.70190000)
 rgb(0.29019608)=(0.30500000,0.57090000,0.70740000)
 rgb(0.29411765)=(0.31720000,0.57900000,0.71300000)
 rgb(0.29803922)=(0.32940000,0.58710000,0.71840000)
 rgb(0.30196078)=(0.34170000,0.59510000,0.72390000)
 rgb(0.30588235)=(0.35410000,0.60320000,0.72940000)
 rgb(0.30980392)=(0.36650000,0.61120000,0.73490000)
 rgb(0.31372549)=(0.37890000,0.61920000,0.74030000)
 rgb(0.31764706)=(0.39130000,0.62720000,0.74580000)
 rgb(0.32156863)=(0.40380000,0.63510000,0.75120000)
 rgb(0.32549020)=(0.41620000,0.64300000,0.75660000)
 rgb(0.32941176)=(0.42870000,0.65100000,0.76200000)
 rgb(0.33333333)=(0.44120000,0.65880000,0.76730000)
 rgb(0.33725490)=(0.45370000,0.66670000,0.77270000)
 rgb(0.34117647)=(0.46620000,0.67450000,0.77800000)
 rgb(0.34509804)=(0.47870000,0.68230000,0.78340000)
 rgb(0.34901961)=(0.49120000,0.69010000,0.78870000)
 rgb(0.35294118)=(0.50370000,0.69790000,0.79400000)
 rgb(0.35686275)=(0.51620000,0.70570000,0.79930000)
 rgb(0.36078431)=(0.52870000,0.71340000,0.80450000)
 rgb(0.36470588)=(0.54110000,0.72110000,0.80980000)
 rgb(0.36862745)=(0.55360000,0.72880000,0.81500000)
 rgb(0.37254902)=(0.56610000,0.73640000,0.82020000)
 rgb(0.37647059)=(0.57860000,0.74410000,0.82540000)
 rgb(0.38039216)=(0.59100000,0.75170000,0.83060000)
 rgb(0.38431373)=(0.60350000,0.75930000,0.83580000)
 rgb(0.38823529)=(0.61590000,0.76690000,0.84090000)
 rgb(0.39215686)=(0.62840000,0.77450000,0.84610000)
 rgb(0.39607843)=(0.64080000,0.78200000,0.85110000)
 rgb(0.40000000)=(0.65320000,0.78950000,0.85620000)
 rgb(0.40392157)=(0.66560000,0.79690000,0.86120000)
 rgb(0.40784314)=(0.67810000,0.80440000,0.86620000)
 rgb(0.41176471)=(0.69050000,0.81170000,0.87110000)
 rgb(0.41568627)=(0.70290000,0.81900000,0.87590000)
 rgb(0.41960784)=(0.71530000,0.82630000,0.88060000)
 rgb(0.42352941)=(0.72760000,0.83340000,0.88510000)
 rgb(0.42745098)=(0.74000000,0.84050000,0.88960000)
 rgb(0.43137255)=(0.75240000,0.84740000,0.89380000)
 rgb(0.43529412)=(0.76470000,0.85410000,0.89780000)
 rgb(0.43921569)=(0.77690000,0.86070000,0.90160000)
 rgb(0.44313725)=(0.78910000,0.86700000,0.90500000)
 rgb(0.44705882)=(0.80120000,0.87300000,0.90800000)
 rgb(0.45098039)=(0.81310000,0.87870000,0.91070000)
 rgb(0.45490196)=(0.82490000,0.88410000,0.91280000)
 rgb(0.45882353)=(0.83640000,0.88890000,0.91430000)
 rgb(0.46274510)=(0.84760000,0.89330000,0.91520000)
 rgb(0.46666667)=(0.85850000,0.89710000,0.91540000)
 rgb(0.47058824)=(0.86890000,0.90020000,0.91480000)
 rgb(0.47450980)=(0.87870000,0.90260000,0.91340000)
 rgb(0.47843137)=(0.88800000,0.90430000,0.91120000)
 rgb(0.48235294)=(0.89650000,0.90520000,0.90800000)
 rgb(0.48627451)=(0.90420000,0.90520000,0.90400000)
 rgb(0.49019608)=(0.91120000,0.90440000,0.89910000)
 rgb(0.49411765)=(0.91720000,0.90280000,0.89340000)
 rgb(0.49803922)=(0.92230000,0.90040000,0.88690000)
 rgb(0.50196078)=(0.92650000,0.89720000,0.87970000)
 rgb(0.50588235)=(0.92980000,0.89330000,0.87180000)
 rgb(0.50980392)=(0.93220000,0.88870000,0.86340000)
 rgb(0.51372549)=(0.93390000,0.88360000,0.85450000)
 rgb(0.51764706)=(0.93480000,0.87790000,0.84520000)
 rgb(0.52156863)=(0.93500000,0.87180000,0.83550000)
 rgb(0.52549020)=(0.93460000,0.86530000,0.82560000)
 rgb(0.52941176)=(0.93380000,0.85850000,0.81540000)
 rgb(0.53333333)=(0.93240000,0.85150000,0.80510000)
 rgb(0.53725490)=(0.93070000,0.84420000,0.79470000)
 rgb(0.54117647)=(0.92860000,0.83680000,0.78420000)
 rgb(0.54509804)=(0.92630000,0.82920000,0.77360000)
 rgb(0.54901961)=(0.92380000,0.82150000,0.76300000)
 rgb(0.55294118)=(0.92100000,0.81380000,0.75230000)
 rgb(0.55686275)=(0.91810000,0.80600000,0.74170000)
 rgb(0.56078431)=(0.91520000,0.79820000,0.73100000)
 rgb(0.56470588)=(0.91210000,0.79030000,0.72040000)
 rgb(0.56862745)=(0.90890000,0.78240000,0.70980000)
 rgb(0.57254902)=(0.90570000,0.77450000,0.69920000)
 rgb(0.57647059)=(0.90250000,0.76670000,0.68860000)
 rgb(0.58039216)=(0.89920000,0.75880000,0.67810000)
 rgb(0.58431373)=(0.89600000,0.75100000,0.66760000)
 rgb(0.58823529)=(0.89270000,0.74310000,0.65710000)
 rgb(0.59215686)=(0.88940000,0.73530000,0.64670000)
 rgb(0.59607843)=(0.88610000,0.72760000,0.63630000)
 rgb(0.60000000)=(0.88280000,0.71980000,0.62590000)
 rgb(0.60392157)=(0.87960000,0.71210000,0.61560000)
 rgb(0.60784314)=(0.87630000,0.70440000,0.60540000)
 rgb(0.61176471)=(0.87300000,0.69680000,0.59510000)
 rgb(0.61568627)=(0.86980000,0.68910000,0.58500000)
 rgb(0.61960784)=(0.86660000,0.68150000,0.57480000)
 rgb(0.62352941)=(0.86330000,0.67400000,0.56470000)
 rgb(0.62745098)=(0.86010000,0.66650000,0.55470000)
 rgb(0.63137255)=(0.85690000,0.65900000,0.54470000)
 rgb(0.63529412)=(0.85370000,0.65150000,0.53480000)
 rgb(0.63921569)=(0.85060000,0.64410000,0.52480000)
 rgb(0.64313725)=(0.84740000,0.63670000,0.51500000)
 rgb(0.64705882)=(0.84430000,0.62930000,0.50510000)
 rgb(0.65098039)=(0.84110000,0.62200000,0.49540000)
 rgb(0.65490196)=(0.83800000,0.61470000,0.48560000)
 rgb(0.65882353)=(0.83490000,0.60740000,0.47590000)
 rgb(0.66274510)=(0.83180000,0.60010000,0.46630000)
 rgb(0.66666667)=(0.82870000,0.59290000,0.45670000)
 rgb(0.67058824)=(0.82560000,0.58580000,0.44710000)
 rgb(0.67450980)=(0.82260000,0.57860000,0.43760000)
 rgb(0.67843137)=(0.81950000,0.57150000,0.42810000)
 rgb(0.68235294)=(0.81650000,0.56440000,0.41870000)
 rgb(0.68627451)=(0.81350000,0.55730000,0.40930000)
 rgb(0.69019608)=(0.81040000,0.55030000,0.39990000)
 rgb(0.69411765)=(0.80740000,0.54330000,0.39060000)
 rgb(0.69803922)=(0.80440000,0.53630000,0.38130000)
 rgb(0.70196078)=(0.80150000,0.52930000,0.37200000)
 rgb(0.70588235)=(0.79850000,0.52240000,0.36280000)
 rgb(0.70980392)=(0.79550000,0.51550000,0.35370000)
 rgb(0.71372549)=(0.79250000,0.50860000,0.34450000)
 rgb(0.71764706)=(0.78960000,0.50170000,0.33540000)
 rgb(0.72156863)=(0.78660000,0.49480000,0.32630000)
 rgb(0.72549020)=(0.78370000,0.48800000,0.31730000)
 rgb(0.72941176)=(0.78070000,0.48110000,0.30830000)
 rgb(0.73333333)=(0.77770000,0.47430000,0.29930000)
 rgb(0.73725490)=(0.77480000,0.46750000,0.29040000)
 rgb(0.74117647)=(0.77180000,0.46060000,0.28140000)
 rgb(0.74509804)=(0.76880000,0.45380000,0.27250000)
 rgb(0.74901961)=(0.76580000,0.44690000,0.26360000)
 rgb(0.75294118)=(0.76270000,0.44010000,0.25480000)
 rgb(0.75686275)=(0.75960000,0.43310000,0.24590000)
 rgb(0.76078431)=(0.75650000,0.42620000,0.23700000)
 rgb(0.76470588)=(0.75330000,0.41920000,0.22820000)
 rgb(0.76862745)=(0.75010000,0.41220000,0.21930000)
 rgb(0.77254902)=(0.74670000,0.40500000,0.21050000)
 rgb(0.77647059)=(0.74320000,0.39780000,0.20160000)
 rgb(0.78039216)=(0.73970000,0.39050000,0.19270000)
 rgb(0.78431373)=(0.73590000,0.38310000,0.18390000)
 rgb(0.78823529)=(0.73200000,0.37550000,0.17500000)
 rgb(0.79215686)=(0.72790000,0.36770000,0.16600000)
 rgb(0.79607843)=(0.72350000,0.35990000,0.15710000)
 rgb(0.80000000)=(0.71890000,0.35180000,0.14820000)
 rgb(0.80392157)=(0.71400000,0.34350000,0.13930000)
 rgb(0.80784314)=(0.70880000,0.33500000,0.13050000)
 rgb(0.81176471)=(0.70330000,0.32640000,0.12150000)
 rgb(0.81568627)=(0.69740000,0.31750000,0.11280000)
 rgb(0.81960784)=(0.69120000,0.30850000,0.10410000)
 rgb(0.82352941)=(0.68470000,0.29930000,0.09560000)
 rgb(0.82745098)=(0.67770000,0.28990000,0.08740000)
 rgb(0.83137255)=(0.67050000,0.28050000,0.07920000)
 rgb(0.83529412)=(0.66290000,0.27100000,0.07150000)
 rgb(0.83921569)=(0.65500000,0.26150000,0.06410000)
 rgb(0.84313725)=(0.64700000,0.25210000,0.05710000)
 rgb(0.84705882)=(0.63870000,0.24270000,0.05060000)
 rgb(0.85098039)=(0.63030000,0.23350000,0.04480000)
 rgb(0.85490196)=(0.62170000,0.22440000,0.03940000)
 rgb(0.85882353)=(0.61310000,0.21570000,0.03480000)
 rgb(0.86274510)=(0.60450000,0.20710000,0.03110000)
 rgb(0.86666667)=(0.59590000,0.19870000,0.02820000)
 rgb(0.87058824)=(0.58740000,0.19070000,0.02600000)
 rgb(0.87450980)=(0.57890000,0.18290000,0.02440000)
 rgb(0.87843137)=(0.57050000,0.17540000,0.02330000)
 rgb(0.88235294)=(0.56230000,0.16820000,0.02250000)
 rgb(0.88627451)=(0.55410000,0.16120000,0.02210000)
 rgb(0.89019608)=(0.54600000,0.15440000,0.02190000)
 rgb(0.89411765)=(0.53800000,0.14790000,0.02170000)
 rgb(0.89803922)=(0.53020000,0.14150000,0.02170000)
 rgb(0.90196078)=(0.52240000,0.13530000,0.02180000)
 rgb(0.90588235)=(0.51480000,0.12920000,0.02200000)
 rgb(0.90980392)=(0.50720000,0.12330000,0.02220000)
 rgb(0.91372549)=(0.49970000,0.11750000,0.02250000)
 rgb(0.91764706)=(0.49230000,0.11180000,0.02280000)
 rgb(0.92156863)=(0.48500000,0.10620000,0.02310000)
 rgb(0.92549020)=(0.47780000,0.10060000,0.02350000)
 rgb(0.92941176)=(0.47060000,0.09520000,0.02390000)
 rgb(0.93333333)=(0.46350000,0.08970000,0.02430000)
 rgb(0.93725490)=(0.45650000,0.08430000,0.02480000)
 rgb(0.94117647)=(0.44950000,0.07870000,0.02520000)
 rgb(0.94509804)=(0.44260000,0.07340000,0.02560000)
 rgb(0.94901961)=(0.43570000,0.06790000,0.02610000)
 rgb(0.95294118)=(0.42890000,0.06240000,0.02650000)
 rgb(0.95686275)=(0.42210000,0.05680000,0.02700000)
 rgb(0.96078431)=(0.41540000,0.05110000,0.02740000)
 rgb(0.96470588)=(0.40880000,0.04540000,0.02780000)
 rgb(0.96862745)=(0.40210000,0.03940000,0.02820000)
 rgb(0.97254902)=(0.39560000,0.03340000,0.02860000)
 rgb(0.97647059)=(0.38900000,0.02780000,0.02890000)
 rgb(0.98039216)=(0.38250000,0.02260000,0.02930000)
 rgb(0.98431373)=(0.37600000,0.01760000,0.02960000)
 rgb(0.98823529)=(0.36960000,0.01290000,0.02990000)
 rgb(0.99215686)=(0.36320000,0.00820000,0.03010000)
 rgb(0.99607843)=(0.35680000,0.00400000,0.03030000)
 rgb(1.00000000)=(0.35040000,0.00010000,0.03050000)},
 }

 \pgfplotsset{
 colormap={tableaucolorblind}{rgb(0.00000000)=(0.06666667,0.43921569,0.66666667)
 rgb(0.11111111)=(0.98823529,0.49019608,0.04313725)
 rgb(0.22222222)=(0.63921569,0.67450980,0.72549020)
 rgb(0.33333333)=(0.34117647,0.37647059,0.42352941)
 rgb(0.44444444)=(0.37254902,0.63529412,0.80784314)
 rgb(0.55555556)=(0.78431373,0.32156863,0.00000000)
 rgb(0.66666667)=(0.48235294,0.51764706,0.56078431)
 rgb(0.77777778)=(0.63921569,0.80000000,0.91372549)
 rgb(0.88888889)=(1.00000000,0.73725490,0.47450980)
 rgb(1.00000000)=(0.78431373,0.81568627,0.85098039)},
 }

\definecolor{tabcblue}      {rgb}{0.06666667,0.43921569,0.66666667}
\definecolor{tabcorange}    {rgb}{0.98823529,0.49019608,0.04313725}
\definecolor{tabcgreylight} {rgb}{0.63921569,0.67450980,0.72549020}
\definecolor{tabcgrey}      {rgb}{0.34117647,0.37647059,0.42352941}
\definecolor{tabcbluelight} {rgb}{0.37254902,0.63529412,0.80784314}
\definecolor{tabcorangedark}{rgb}{0.78431373,0.32156863,0.00000000}
\definecolor{tabcgreymid}   {rgb}{0.48235294,0.51764706,0.56078431}
\definecolor{tabcbluepale}  {rgb}{0.63921569,0.80000000,0.91372549}
\definecolor{tabcorangepale}{rgb}{1.00000000,0.73725490,0.47450980}
\definecolor{tabcgreypale}  {rgb}{0.78431373,0.81568627,0.85098039}

\definecolor{tabblue}  {rgb}{0.12156863,0.46666667,0.70588235}
\definecolor{taborange}{rgb}{1.00000000,0.49803922,0.05490196}
\definecolor{tabgreen} {rgb}{0.17254902,0.62745098,0.17254902}
\definecolor{tabred}   {rgb}{0.83921569,0.15294118,0.15686275}
\definecolor{tabpurple}{rgb}{0.58039216,0.40392157,0.74117647}
\definecolor{tabbrown} {rgb}{0.54901961,0.33725490,0.29411765}
\definecolor{tabpink}  {rgb}{0.89019608,0.46666667,0.76078431}
\definecolor{tabgrey}  {rgb}{0.49803922,0.49803922,0.49803922}
\definecolor{tabolive} {rgb}{0.73725490,0.74117647,0.13333333}
\definecolor{tabcyan}  {rgb}{0.09019608,0.74509804,0.81176471}

\definecolor{vikblue}{rgb}{0.136702,0.450888,0.626062}
\definecolor{vikred} {rgb}{0.613135,0.215657,0.034829}

\pgfplotsset{
  colormap={tab10}{rgb=(0.12156863,0.46666667,0.70588235)
    rgb=(1.00000000,0.49803922,0.05490196)
    rgb=(0.17254902,0.62745098,0.17254902)
    rgb=(0.83921569,0.15294118,0.15686275)
    rgb=(0.58039216,0.40392157,0.74117647)
    rgb=(0.54901961,0.33725490,0.29411765)
    rgb=(0.89019608,0.46666667,0.76078431)
    rgb=(0.49803922,0.49803922,0.49803922)
    rgb=(0.73725490,0.74117647,0.13333333)
    rgb=(0.09019608,0.74509804,0.81176471)},
}

\pgfplotsset{
  colormap={vik}{
    rgb=(0.001328,0.069836,0.379529)
    rgb=(0.002366,0.076475,0.383518)
    rgb=(0.003304,0.083083,0.387487)
    rgb=(0.004146,0.089590,0.391477)
    rgb=(0.004897,0.095948,0.395453)
    rgb=(0.005563,0.102274,0.399409)
    rgb=(0.006151,0.108500,0.403388)
    rgb=(0.006668,0.114686,0.407339)
    rgb=(0.007119,0.120845,0.411288)
    rgb=(0.007512,0.126958,0.415230)
    rgb=(0.007850,0.133068,0.419166)
    rgb=(0.008141,0.139092,0.423079)
    rgb=(0.008391,0.145171,0.427006)
    rgb=(0.008606,0.151144,0.430910)
    rgb=(0.008790,0.157140,0.434809)
    rgb=(0.008947,0.163152,0.438691)
    rgb=(0.009080,0.169142,0.442587)
    rgb=(0.009193,0.175103,0.446459)
    rgb=(0.009290,0.181052,0.450337)
    rgb=(0.009372,0.187051,0.454212)
    rgb=(0.009443,0.193028,0.458077)
    rgb=(0.009506,0.198999,0.461951)
    rgb=(0.009564,0.205011,0.465816)
    rgb=(0.009619,0.211021,0.469707)
    rgb=(0.009675,0.217047,0.473571)
    rgb=(0.009735,0.223084,0.477461)
    rgb=(0.009802,0.229123,0.481352)
    rgb=(0.009881,0.235206,0.485250)
    rgb=(0.009977,0.241277,0.489161)
    rgb=(0.010098,0.247386,0.493080)
    rgb=(0.010254,0.253516,0.497020)
    rgb=(0.010463,0.259675,0.500974)
    rgb=(0.010755,0.265853,0.504938)
    rgb=(0.011176,0.272037,0.508925)
    rgb=(0.011716,0.278296,0.512923)
    rgb=(0.012286,0.284554,0.516953)
    rgb=(0.012934,0.290865,0.520998)
    rgb=(0.013790,0.297214,0.525074)
    rgb=(0.014838,0.303577,0.529184)
    rgb=(0.016131,0.310015,0.533308)
    rgb=(0.017711,0.316474,0.537485)
    rgb=(0.019630,0.322986,0.541677)
    rgb=(0.021948,0.329550,0.545931)
    rgb=(0.024730,0.336144,0.550210)
    rgb=(0.028047,0.342826,0.554538)
    rgb=(0.031980,0.349543,0.558906)
    rgb=(0.036812,0.356332,0.563341)
    rgb=(0.042229,0.363171,0.567811)
    rgb=(0.048008,0.370086,0.572345)
    rgb=(0.054292,0.377080,0.576933)
    rgb=(0.060963,0.384129,0.581571)
    rgb=(0.068081,0.391265,0.586280)
    rgb=(0.075457,0.398460,0.591042)
    rgb=(0.083246,0.405740,0.595868)
    rgb=(0.091425,0.413088,0.600754)
    rgb=(0.099832,0.420499,0.605697)
    rgb=(0.108595,0.428000,0.610711)
    rgb=(0.117694,0.435566,0.615770)
    rgb=(0.127042,0.443194,0.620895)
    rgb=(0.136702,0.450888,0.626062)
    rgb=(0.146607,0.458643,0.631289)
    rgb=(0.156787,0.466457,0.636560)
    rgb=(0.167187,0.474324,0.641866)
    rgb=(0.177807,0.482238,0.647218)
    rgb=(0.188606,0.490191,0.652599)
    rgb=(0.199580,0.498193,0.658021)
    rgb=(0.210783,0.506201,0.663465)
    rgb=(0.222120,0.514263,0.668924)
    rgb=(0.233602,0.522322,0.674403)
    rgb=(0.245231,0.530414,0.679894)
    rgb=(0.256999,0.538517,0.685405)
    rgb=(0.268867,0.546617,0.690908)
    rgb=(0.280797,0.554717,0.696428)
    rgb=(0.292852,0.562822,0.701935)
    rgb=(0.304985,0.570907,0.707448)
    rgb=(0.317174,0.578997,0.712950)
    rgb=(0.329438,0.587064,0.718447)
    rgb=(0.341729,0.595123,0.723934)
    rgb=(0.354067,0.603164,0.729412)
    rgb=(0.366459,0.611186,0.734877)
    rgb=(0.378862,0.619189,0.740325)
    rgb=(0.391305,0.627159,0.745757)
    rgb=(0.403760,0.635114,0.751183)
    rgb=(0.416227,0.643046,0.756582)
    rgb=(0.428711,0.650956,0.761968)
    rgb=(0.441199,0.658836,0.767341)
    rgb=(0.453697,0.666696,0.772699)
    rgb=(0.466195,0.674537,0.778044)
    rgb=(0.478697,0.682349,0.783369)
    rgb=(0.491208,0.690143,0.788682)
    rgb=(0.503691,0.697910,0.793980)
    rgb=(0.516178,0.705661,0.799260)
    rgb=(0.528677,0.713387,0.804525)
    rgb=(0.541149,0.721090,0.809775)
    rgb=(0.553624,0.728778,0.815010)
    rgb=(0.566096,0.736441,0.820229)
    rgb=(0.578557,0.744089,0.825435)
    rgb=(0.591014,0.751718,0.830626)
    rgb=(0.603468,0.759314,0.835793)
    rgb=(0.615908,0.766896,0.840941)
    rgb=(0.628351,0.774452,0.846058)
    rgb=(0.640779,0.781988,0.851147)
    rgb=(0.653203,0.789485,0.856206)
    rgb=(0.665631,0.796945,0.861214)
    rgb=(0.678051,0.804371,0.866172)
    rgb=(0.690457,0.811742,0.871059)
    rgb=(0.702868,0.819048,0.875866)
    rgb=(0.715265,0.826290,0.880567)
    rgb=(0.727646,0.833439,0.885146)
    rgb=(0.740019,0.840479,0.889570)
    rgb=(0.752354,0.847380,0.893807)
    rgb=(0.764662,0.854125,0.897821)
    rgb=(0.776918,0.860678,0.901565)
    rgb=(0.789096,0.866991,0.904992)
    rgb=(0.801170,0.873031,0.908043)
    rgb=(0.813110,0.878738,0.910653)
    rgb=(0.824870,0.884062,0.912761)
    rgb=(0.836396,0.888934,0.914302)
    rgb=(0.847617,0.893289,0.915195)
    rgb=(0.858470,0.897074,0.915385)
    rgb=(0.868874,0.900206,0.914812)
    rgb=(0.878729,0.902636,0.913418)
    rgb=(0.887965,0.904303,0.911164)
    rgb=(0.896497,0.905178,0.908034)
    rgb=(0.904242,0.905221,0.904013)
    rgb=(0.911151,0.904422,0.899132)
    rgb=(0.917175,0.902800,0.893409)
    rgb=(0.922285,0.900367,0.886911)
    rgb=(0.926482,0.897173,0.879687)
    rgb=(0.929789,0.893256,0.871826)
    rgb=(0.932236,0.888698,0.863396)
    rgb=(0.933880,0.883552,0.854476)
    rgb=(0.934782,0.877893,0.845152)
    rgb=(0.935013,0.871795,0.835493)
    rgb=(0.934644,0.865313,0.825561)
    rgb=(0.933752,0.858522,0.815421)
    rgb=(0.932408,0.851469,0.805112)
    rgb=(0.930682,0.844208,0.794685)
    rgb=(0.928622,0.836778,0.784169)
    rgb=(0.926298,0.829215,0.773579)
    rgb=(0.923752,0.821545,0.762958)
    rgb=(0.921017,0.813795,0.752313)
    rgb=(0.918147,0.805997,0.741659)
    rgb=(0.915156,0.798157,0.731008)
    rgb=(0.912080,0.790294,0.720370)
    rgb=(0.908933,0.782421,0.709752)
    rgb=(0.905741,0.774540,0.699150)
    rgb=(0.902506,0.766670,0.688588)
    rgb=(0.899249,0.758812,0.678051)
    rgb=(0.895973,0.750973,0.667550)
    rgb=(0.892690,0.743148,0.657086)
    rgb=(0.889402,0.735345,0.646657)
    rgb=(0.886118,0.727569,0.636274)
    rgb=(0.882831,0.719826,0.625923)
    rgb=(0.879556,0.712106,0.615618)
    rgb=(0.876289,0.704419,0.605357)
    rgb=(0.873033,0.696764,0.595141)
    rgb=(0.869784,0.689144,0.584972)
    rgb=(0.866551,0.681541,0.574832)
    rgb=(0.863333,0.673985,0.564746)
    rgb=(0.860121,0.666453,0.554708)
    rgb=(0.856920,0.658957,0.544709)
    rgb=(0.853732,0.651500,0.534753)
    rgb=(0.850562,0.644061,0.524842)
    rgb=(0.847402,0.636670,0.514974)
    rgb=(0.844258,0.629296,0.505146)
    rgb=(0.841125,0.621957,0.495369)
    rgb=(0.838005,0.614653,0.485627)
    rgb=(0.834895,0.607392,0.475941)
    rgb=(0.831802,0.600144,0.466284)
    rgb=(0.828715,0.592938,0.456675)
    rgb=(0.825639,0.585758,0.447109)
    rgb=(0.822582,0.578600,0.437595)
    rgb=(0.819528,0.571478,0.428106)
    rgb=(0.816496,0.564388,0.418657)
    rgb=(0.813463,0.557328,0.409260)
    rgb=(0.810446,0.550285,0.399892)
    rgb=(0.807443,0.543274,0.390575)
    rgb=(0.804446,0.536288,0.381299)
    rgb=(0.801454,0.529329,0.372040)
    rgb=(0.798475,0.522380,0.362835)
    rgb=(0.795500,0.515460,0.353660)
    rgb=(0.792535,0.508575,0.344523)
    rgb=(0.789573,0.501692,0.335435)
    rgb=(0.786617,0.494827,0.326343)
    rgb=(0.783657,0.487977,0.317312)
    rgb=(0.780695,0.481123,0.308300)
    rgb=(0.777737,0.474295,0.299327)
    rgb=(0.774763,0.467464,0.290352)
    rgb=(0.771788,0.460620,0.281424)
    rgb=(0.768787,0.453783,0.272508)
    rgb=(0.765776,0.446929,0.263640)
    rgb=(0.762724,0.440055,0.254764)
    rgb=(0.759638,0.433147,0.245872)
    rgb=(0.756510,0.426200,0.237047)
    rgb=(0.753316,0.419216,0.228190)
    rgb=(0.750051,0.412163,0.219330)
    rgb=(0.746698,0.405028,0.210470)
    rgb=(0.743239,0.397819,0.201593)
    rgb=(0.739651,0.390493,0.192739)
    rgb=(0.735899,0.383060,0.183852)
    rgb=(0.731988,0.375473,0.174977)
    rgb=(0.727865,0.367743,0.166045)
    rgb=(0.723516,0.359852,0.157131)
    rgb=(0.718915,0.351766,0.148211)
    rgb=(0.714028,0.343503,0.139282)
    rgb=(0.708841,0.335048,0.130458)
    rgb=(0.703318,0.326354,0.121545)
    rgb=(0.697448,0.317502,0.112841)
    rgb=(0.691227,0.308462,0.104132)
    rgb=(0.684653,0.299264,0.095633)
    rgb=(0.677734,0.289916,0.087350)
    rgb=(0.670476,0.280477,0.079197)
    rgb=(0.662904,0.271015,0.071510)
    rgb=(0.655048,0.261520,0.064079)
    rgb=(0.646969,0.252081,0.057104)
    rgb=(0.638686,0.242711,0.050618)
    rgb=(0.630261,0.233488,0.044750)
    rgb=(0.621722,0.224449,0.039414)
    rgb=(0.613135,0.215657,0.034829)
    rgb=(0.604539,0.207086,0.031072)
    rgb=(0.595947,0.198741,0.028212)
    rgb=(0.587403,0.190700,0.026019)
    rgb=(0.578937,0.182918,0.024396)
    rgb=(0.570545,0.175423,0.023257)
    rgb=(0.562268,0.168171,0.022523)
    rgb=(0.554076,0.161202,0.022110)
    rgb=(0.546007,0.154400,0.021861)
    rgb=(0.538043,0.147854,0.021737)
    rgb=(0.530182,0.141491,0.021722)
    rgb=(0.522424,0.135276,0.021800)
    rgb=(0.514776,0.129209,0.021957)
    rgb=(0.507213,0.123272,0.022179)
    rgb=(0.499733,0.117487,0.022455)
    rgb=(0.492348,0.111818,0.022775)
    rgb=(0.485034,0.106209,0.023130)
    rgb=(0.477801,0.100607,0.023513)
    rgb=(0.470639,0.095156,0.023916)
    rgb=(0.463530,0.089668,0.024336)
    rgb=(0.456494,0.084258,0.024766)
    rgb=(0.449521,0.078741,0.025203)
    rgb=(0.442603,0.073404,0.025644)
    rgb=(0.435737,0.067904,0.026084)
    rgb=(0.428918,0.062415,0.026522)
    rgb=(0.422146,0.056832,0.026954)
    rgb=(0.415437,0.051116,0.027378)
    rgb=(0.408768,0.045352,0.027790)
    rgb=(0.402132,0.039448,0.028189)
    rgb=(0.395562,0.033385,0.028570)
    rgb=(0.389015,0.027844,0.028932)
    rgb=(0.382496,0.022586,0.029271)
    rgb=(0.376028,0.017608,0.029583)
    rgb=(0.369578,0.012890,0.029866)
    rgb=(0.363161,0.008243,0.030115)
    rgb=(0.356785,0.004035,0.030327)
    rgb=(0.350423,0.000061,0.030499)
  }
}

\pgfplotsset{ layers/my layer set/.define layer set={
        background, backishground, main, foreground
    }{
    },
    set layers=my layer set,
}
\pgfplotsset{compat=1.18}

\usepackage{tikz}

\ifexternalize
  \usetikzlibrary{external}
\else
  \newcommand{\tikzsetnextfilename}[1]{}
\fi

\usepackage{xspace}
\usepackage{tabularx}
\usepackage{arydshln}
\usepackage[normalem]{ulem}
\usepackage{placeins}
\usepackage[capitalize]{cleveref}
\crefname{appendix}{Appendix}{Appendices}
\Crefname{appendix}{Appendix}{Appendices}
\usepackage{siunitx}
\usepackage{intcalc}
\usepackage[inline]{enumitem}
\usepackage{float}

\makeatletter
\newcommand{\removelatexerror}{\let\@latex@error\@gobble}
\makeatother

\definecolor{DarkGreen}{rgb}{0.1,0.5,0.1}
\definecolor{DarkRed}{rgb}{0.5,0.1,0.1}
\definecolor{DarkBlue}{rgb}{0.1,0.1,0.5}

\newtheorem{theorem}{Theorem}

\newtheorem{lemma}{Lemma}
\newtheorem{observation}{Observation}

\newtheorem{definition}{Definition}

\newtheorem{remark}{Remark}

\newcommand{\appref}[1]{\ifarxiv\cref{#1}\else\cite[\cref{#1}]{arxivversion}\fi}

\newcommand{\defeq}{\vcentcolon=}

\newcommand{\property}{fully symmetric\xspace}

\newcommand{\setint}[2]{\{#1, \dots, #2\}}
\newcommand{\wt}[1]{\ensuremath{\mathrm{wt}_H(#1)}}

\DeclareMathOperator{\Li}{Li}
\newcommand{\dzgen}{\,\mathrm{d}\zgen}
\newcommand{\dr}{\,\mathrm{d}r}

\newcommand{\optT}{\ensuremath{\widetilde{T}^\star}}
\newcommand{\pdT}{\ensuremath{\widehat{T}}}
\newcommand{\optpdT}{\ensuremath{\widehat{T}^\star}}

\newcommand{\zrv}{z}
\newcommand{\zs}{s}
\newcommand{\zE}{Z}
\newcommand{\zgen}{t}
\newcommand{\rgen}{r}
\newcommand{\bmp}{\bm{p}}
\newcommand{\bpk}{\bm{p}^{(k)}}
\newcommand{\pk}{{p}^{(k)}}

\newcommand{\bmq}{\bm{q}}

\newcommand{\bmqd}{\bm{q}^{(\delta)}}
\newcommand{\bqkd}{\bm{q}^{(\delta, k)}}
\newcommand{\qkd}{{q}^{(\delta, k)}}
\newcommand{\qd}{{q}^{(\delta)}}

\begin{document}

\title{Random Access Expectation in DNA Storage and Fountain Codes}

\author{
  \IEEEauthorblockN{Christoph~Hofmeister, Rawad~Bitar, and Eitan~Yaakobi} \\
    \thanks{CH and RB are with the School of Computation, Information and Technology at the Technical University of Munich, Germany. %
    Emails: \{christoph.hofmeister, rawad.bitar\}@tum.de}
    \thanks{EY is with the CS department of Technion---Israel Institute of Technology, Israel. 
    Email: yaakobi@cs.technion.ac.il}
  \ifarxiv \else \thanks{A preprint of this paper with supplementary material is available~\cite{arxivversion}.} \fi
\vspace{-3ex}
}

\ifarxiv
\else
    \pagestyle{empty}
\fi

\maketitle

\ifarxiv \else \thispagestyle{empty} \fi
    
\begin{abstract}
  \ifarxiv \else THIS PAPER IS ELIGIBLE FOR THE STUDENT PAPER AWARD. \fi
    Motivated by DNA data storage, we study the expected number of coded symbols drawn from a linear code until a desired information symbol can be decoded --- the random access expectation. We focus on generator matrices with a type of symmetry, conjectured in prior work to be optimal, which we call fully symmetric. 
    We point out an equivalence between binary fully symmetric codes and LT codes.
    Using this observation, we analyze the random access expectation of binary fully symmetric codes under a peeling decoder, in the large blocklength limit.
    Under these assumptions, the random access expectation, normalized by the number of information symbols, is at least $\pi/4\approx 0.7854$, while a value of $\approx 0.7869$ is achievable.
\end{abstract}

\section{Introduction} \label{sec:introduction}

In DNA data storage, information is encoded into many short strands of DNA, each with millions of copies, which are then stored in an unordered manner \cite{shomorony2022information}.
As a result, it is challenging to access a specific piece of information without wasting costly reads on unrelated strands \cite{erlich2017dna,organick2018random,levy2026expected}.

Recent works \cite{bar-lev2024cover,gruica2025combinatorial,gruica2025geometry,boruchovsky2025making,bodur2025random,wang2026random,levy2026expected} have studied the use of linear codes to address this challenge.
A number $k$ of information symbols are encoded into $n$ coded symbols (representing the synthesized strands) using a $k \times n$ generator matrix $\mathbf{G}$. 
Coded symbols are drawn with replacement one-by-one until a desired information symbol can be recovered. 
The \emph{random access expectation}, denoted by $T(\mathbf{G})$, is the maximum (over all information symbols) expected number of required draws until the information symbol can be decoded.
The fundamental question is: How small can the random access expectation be in relation to $k$?
Formally, what is $\optT \defeq \inf_{q, k, n} \min_{\mathbf{G}\in \mathbb{F}_q^{k\times n}} \frac{T(\mathbf{G})}{k}$?

Many families of algebraic codes perform no better than uncoded storage~\cite{bar-lev2024cover,gruica2025combinatorial}, achieving $T(\mathbf{G})=k$.
Generator matrices leading to a lower random access expectation typically contain multiple copies of columns with low Hamming weight, e.g., systematic MDS codes with their systematic part duplicated multiple times \cite[Section~VI-A]{gruica2025combinatorial}.
Good generator matrices proposed in the literature typically treat all information symbols equally (unless $n$ is restricted).
Specifically, the number of columns with a given support only depends on the size of the support.
We call such matrices \emph{\property}.
This observation was first stated in~\cite[Section~5]{boruchovsky2025making}, where \property matrices are conjectured to be optimal.
\newpage

We find that generating reads from a binary \property generator matrix in the DNA random access setting is equivalent to generating coded symbols with the encoder of an LT code~\cite{luby2002lt}.
LT codes are a class of fountain codes~\cite{mackay2005fountain}, which are well studied in the communications and coding theory literature, especially in combination with a simple \emph{peeling decoder}. 
For instance, appropriately-designed LT codes (using a \emph{soliton distribution}) are universally capacity approaching for binary erasure channels~\cite{luby2002lt}.

Using a result from~\cite{darling2005structure}, previously applied to LT codes in~\cite{maneva2005new,sanghavi2007intermediate}, we characterize $\optpdT$, the smallest relative random access expectation of a binary \property generator matrix under a peeling decoder. For tractability and to control encoding and decoding cost, we constrain the maximum Hamming weight in a column of $\mathbf{G}$ by some integer $d$, denoting the analogous quantity by $\optpdT_d$. 

\begin{figure}[t]
  {
  \ifexternalize
    \tikzset{external/export=false}
  \fi
  \begin{tikzpicture}[/tikz/background rectangle/.style={fill={rgb,1:red,1.0;green,1.0;blue,1.0}, fill opacity={1.0}, draw opacity={1.0}}, show background rectangle] \begin{axis}[point meta max={nan}, point meta min={nan}, legend cell align={left}, legend columns={1}, title={}, title style={at={{(0.5,1)}}, anchor={south}, font={{\fontsize{14 pt}{18.2 pt}\selectfont}}, color={rgb,1:red,0.0;green,0.0;blue,0.0}, draw opacity={1.0}, rotate={0.0}, align={center}}, legend style={color={rgb,1:red,0.0;green,0.0;blue,0.0}, draw opacity={1.0}, line width={1}, solid, fill={rgb,1:red,1.0;green,1.0;blue,1.0}, fill opacity={1.0}, text opacity={1.0}, font={{\fontsize{8 pt}{10.4 pt}\selectfont}}, text={rgb,1:red,0.0;green,0.0;blue,0.0}, cells={anchor={center}}, at={(0.98, 0.98)}, anchor={north east}}, axis background/.style={fill={rgb,1:red,1.0;green,1.0;blue,1.0}, opacity={1.0}}, anchor={north west}, xshift={1.0mm}, yshift={-1.0mm}, width={0.9\linewidth}, height={0.6\linewidth}, scaled x ticks={false}, xlabel={$d$}, x tick style={color={rgb,1:red,0.0;green,0.0;blue,0.0}, opacity={1.0}}, x tick label style={color={rgb,1:red,0.0;green,0.0;blue,0.0}, opacity={1.0}, rotate={0}}, xlabel style={at={(ticklabel cs:0.5)}, anchor=near ticklabel, at={{(ticklabel cs:0.5)}}, anchor={near ticklabel}, font={{\fontsize{11 pt}{14.3 pt}\selectfont}}, color={rgb,1:red,0.0;green,0.0;blue,0.0}, draw opacity={1.0}, rotate={0.0}}, xmode={log}, log basis x={10}, xmajorgrids={true}, xmin={1}, xmax={10000.0}, xticklabels={{$10^{0}$,$10^{1}$,$10^{2}$,$10^{3}$,$10^{4}$,$10^{5}$}}, xtick={{1.0,10.0,100.0,1000.0,10000.0, 100000.0}}, xtick align={inside}, xticklabel style={font={{\fontsize{8 pt}{10.4 pt}\selectfont}}, color={rgb,1:red,0.0;green,0.0;blue,0.0}, draw opacity={1.0}, rotate={0.0}}, x grid style={color={rgb,1:red,0.0;green,0.0;blue,0.0}, draw opacity={0.1}, line width={0.5}, solid}, axis x line*={left}, x axis line style={color={rgb,1:red,0.0;green,0.0;blue,0.0}, draw opacity={1.0}, line width={1}, solid}, scaled y ticks={false}, ylabel={$\optpdT_d$}, y tick style={color={rgb,1:red,0.0;green,0.0;blue,0.0}, opacity={1.0}}, y tick label style={color={rgb,1:red,0.0;green,0.0;blue,0.0}, opacity={1.0}, rotate={0}}, ylabel style={at={(ticklabel cs:0.5)}, anchor=near ticklabel, at={{(ticklabel cs:0.5)}}, anchor={near ticklabel}, font={{\fontsize{11 pt}{14.3 pt}\selectfont}}, color={rgb,1:red,0.0;green,0.0;blue,0.0}, draw opacity={1.0}, rotate={0.0}}, ymajorgrids={true}, ymin={0.7848}, ymax={0.7941895058262817}, yticklabels={{$\pi/4$,$0.786$,$0.787$,$0.788$,$0.789$,$0.790$,$0.791$,$0.792$,$0.793$,$0.794$}}, ytick={{0.7853981633974483,0.786,0.787,0.788,0.789,0.79,0.791,0.792,0.793,0.794}}, ytick align={inside}, yticklabel style={font={{\fontsize{8 pt}{10.4 pt}\selectfont}}, color={rgb,1:red,0.0;green,0.0;blue,0.0}, draw opacity={1.0}, rotate={0.0}}, y grid style={color={rgb,1:red,0.0;green,0.0;blue,0.0}, draw opacity={0.1}, line width={0.5}, solid}, axis y line*={left}, y axis line style={color={rgb,1:red,0.0;green,0.0;blue,0.0}, draw opacity={1.0}, line width={1}, solid}, colorbar={false}]
    \addplot[color={rgb,1:red,0.0667;green,0.4392;blue,0.6667}, line width={1}, solid] table[x=x,y=y] {tikz/ovdata_ref.dat}; 

    \addlegendentry {rel. random access expectation $\optpdT_d$}
    \addplot[color={rgb,1:red,0.0;green,0.0;blue,0.0}, name path={15}, draw opacity={1.0}, line width={1}, dashed]
        table[row sep={\\}]
        {
            \\
            1.0  0.7853981633974483  \\
            10000.0  0.7853981633974483  \\
        }
        ;
        \addlegendentry {\cref{lem:lowerbound} (lower bound)}
\end{axis}
\end{tikzpicture}
  }
    \vspace{-1em}
  \caption{Achievable random access expectations for maximum column weight $2 \leq d \leq 10^4$. The lowest value from the literature for $k\to \infty$ is $\pi^2/12 \approx 0.8225$ \cite[Corollary~3]{boruchovsky2025making}. 
    A lower bound for binary \property codes under a peeling decoder is marked by the dashed line. 
  }
  \label{fig:objval}
\end{figure}

\emph{Contributions:} A connection between binary \property codes in the DNA random access setting and LT codes is established.
  The exact value of $\optpdT_d$ is obtained up to numerical precision for $d$ up to $10^4$. An analytic lower bound $\optpdT \geq \pi/4$ is presented. The optimal degree distributions are characterized.
\begin{figure*}[h!]
  \begin{subfigure}{0.32\textwidth}
    \includegraphics[width=\linewidth]{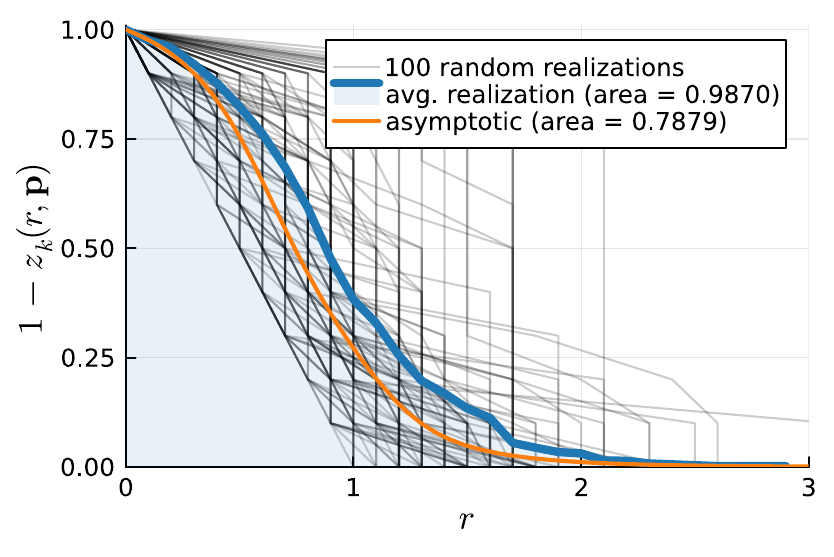}
  \caption{$k=10$}
  \end{subfigure}
  \begin{subfigure}{0.32\textwidth}
    \includegraphics[width=\linewidth]{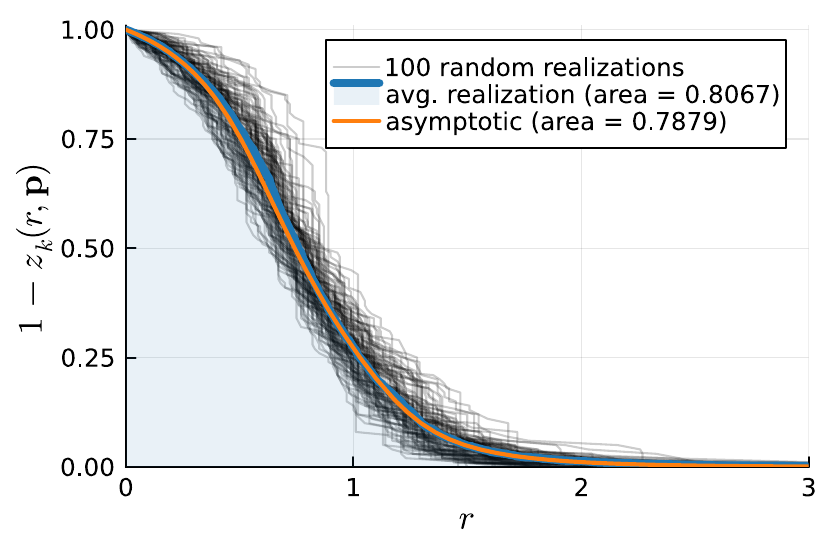}
  \caption{$k=100$}
  \end{subfigure}
  \begin{subfigure}{0.32\textwidth}
    \includegraphics[width=\linewidth]{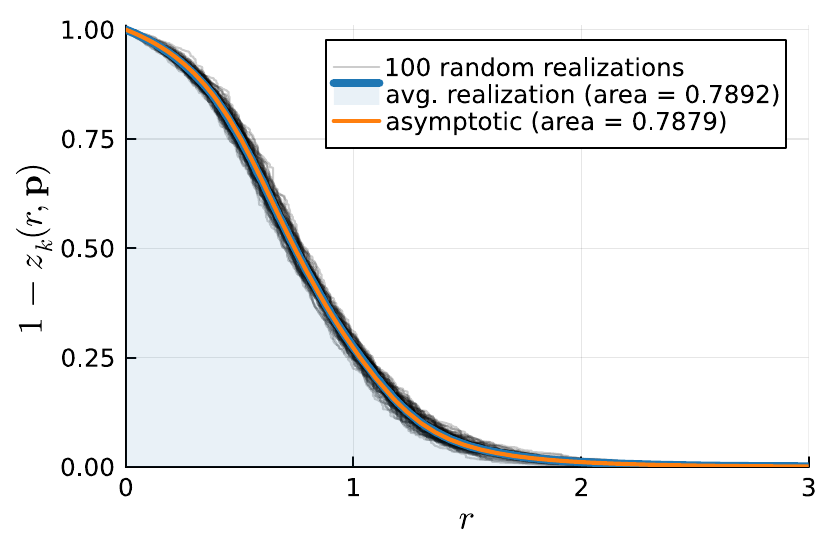}
  \caption{$k=1000$}
  \end{subfigure}
  \caption{
    Simulated decoding trajectories for an LT code with a peeling decoder. The number of undecoded information symbols over the number of received coded symbols, both normalized by $k$.
    The degree distribution that minimizes the asymptotic random access expectation for $d=10$ is used. 
    The non-zero probability values are $p_1\approx 0.205, p_2 \approx 0.727$ and $p_{10}\approx 0.067$.
    Random decoding trajectories are shown in black, their average in blue, and the asymptotic limit of the decoding curve in orange.
    The areas under the blue and orange curves are stated in the corresponding legend entries.
  As $k$ increases, the decoding trajectories of the LT code concentrate around their expected value and the area under the curve approaches the asymptotic random access expectation $\optpdT_{10}\approx 0.7879$.} %
  \label{fig:decodingcurves}
\end{figure*}
The main achievability results are depicted in \Cref{fig:objval}. 
Specifically, we observe $\lim_{d\to\infty} \optpdT_d\leq 0.7869$, implying $\optT \leq 0.7869$.

\section{Connection to Fountain Codes} \label{sec:connectionstofountain}

We establish the equivalence of an LT encoder and the draws obtained from a binary \property generator matrix in the DNA random access setting. 

\subsection{Equivalence of LT and Binary Fully Symmetric Codes}

Let a binary \property matrix be defined as follows.
\begin{definition}
  A binary generator matrix $\mathbf{G} \in \{0,1\}^{k \times n}$ is called \property if for each $w\in \setint{1}{k}$ it holds that each binary vector $\mathbf{c} \in \mathbb{F}_2^{k}$ with $\wt{\mathbf{c}} = w$ appears equally often as a column of $\mathbf{G}$. 
\end{definition}
A binary \property matrix is determined, up to column permutations, by the empirical distribution of Hamming weights in its columns. %

We reproduce the definition of an LT encoder to demonstrate the equivalence. 
An LT code on $k$ information symbols is defined by a probability distribution $P_w$ over $\setint{1}{k}$, called the \emph{degree distribution}. The encoder of the LT code generates random coded symbols independently as follows: 1) a degree $w$ is drawn from the degree distribution; 2) $w$ distinct information symbols are chosen uniformly at random among the $k \choose w$ possible choices; the coded symbol is their sum.

\begin{observation}
    The process of obtaining reads from a binary fully symmetric generator matrix with empirical column-weight distribution $P_w$ is identical to an LT encoder with degree distribution $P_w$.
\end{observation}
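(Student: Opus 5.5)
The plan is to show that a single read from $\mathbf{G}$ has the same distribution as a single LT-encoded symbol. Independence across draws then makes the whole sequences of reads identically distributed.

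\textbf{Setup.} Fix a binary \property matrix $\mathbf{G}\in\{0,1\}^{k\times n}$. For each $w\in\setint{1}{k}$, let $m_w$ denote the common number of times every weight-$w$ vector appears as a column. The number of columns of weight $w$ is then $n_w = m_w\binom{k}{w}$. The empirical column-weight distribution is $P_w = n_w/n$, with $n=\sum_w n_w$. (If the all-zero column is present, it is a useless read and can be discarded beforehand, or treated as degree $0$.)

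\textbf{Step 1: distribution of one read.} A read picks a column index uniformly from $\setint{1}{n}$, and the read symbol is $\mathbf{u}\mathbf{c}$, where $\mathbf{u}$ is the information vector and $\mathbf{c}$ is the chosen column. For any fixed $\mathbf{c}\in\mathbb{F}_2^k$ with $\wt{\mathbf{c}}=w$,
\[
\Pr[\text{column}=\mathbf{c}] = \frac{m_w}{n} = \frac{n_w}{n}\cdot\frac{1}{\binom{k}{w}} = P_w\cdot\binom{k}{w}^{-1}.
\]
In the LT encoder, the support vector $\mathbf{c}$ arises by drawing degree $w$ with probability $P_w$ and then choosing one of the $\binom{k}{w}$ subsets uniformly. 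This gives exactly the same probability $P_w\binom{k}{w}^{-1}$. In both processes the produced symbol is the $\mathbb{F}_2$-sum of the information symbols indexed by $\operatorname{supp}(\mathbf{c})$, that is, $\mathbf{u}\mathbf{c}$. Hence the two procedures induce the same distribution over (support, symbol) pairs.

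\textbf{Step 2: sequences of reads.} Draws from $\mathbf{G}$ are made with replacement and are independent. LT coded symbols are also generated independently. So the joint law of any finite sequence of reads, and of the decoder's view of it (supports together with values), coincides in the two models. Consequently, any quantity depending only on this sequence is the same under both models. In particular, this covers the number of draws until a given information symbol is decodable, whether by ML or by peeling decoding.

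\textbf{Expected obstacle.} The only subtle point is bookkeeping rather than mathematics. One must make precise that ``identical process'' means equality in distribution of the observed (column, value) sequence. One must also account for the zero column: it has weight $0$, so it either lies outside the support of $P_w$ over $\setint{1}{k}$ or is excluded. No further difficulty is expected.
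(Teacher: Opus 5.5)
Your proposal is correct and follows the same argument as the paper: a uniformly drawn column of $\mathbf{G}$ has weight $w$ with probability $P_w$ and is then uniform among the $\binom{k}{w}$ weight-$w$ vectors, exactly as in the LT encoder, and independence of the draws gives equality of the whole read process. The paper also remarks on a converse, that any degree distribution can be approximated by a rational one realized by a fully symmetric matrix for some $n$; this is not needed for the statement as written.
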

Drawing a column from $\mathbf{G}$ amounts to drawing a column-weight $w$ from $P_w$ and then drawing a weight-$w$ vector uniformly at random. The LT encoder does the same by construction.
Conversely, any degree distribution $P_w$ can be approximated arbitrarily closely by a rational distribution, which corresponds to a binary \property generator matrix for some $n$, cf.~\cite[Problem~2]{boruchovsky2025making}.

\subsection{Peeling Decoder}

The \emph{peeling decoder} is an efficient and conceptually simple decoder for LT codes. It operates as follows. Construct a bipartite graph with one left node per information symbol and one right node per received coded symbol. Iteratively: 1) select a right node of degree $1$; 2) decode its neighbor, i.e., assign the value of the right node to it; 3) subtract the value of the newly decoded left node from all its neighbors and remove all its edges from the graph. 
The decoder terminates successfully if all left nodes are decoded and it fails if at step 1) there is no degree-$1$ right node.
In the rest of this work we focus on the asymptotic analysis ($k\to\infty$) of an LT code with a degree distribution $P_w$ under a peeling decoder. 

\subsection{Random Access Expectation of LT Codes}

 For binary \property codes the relative random access expectation is the area under what we call the \emph{decoding curve} as we elaborate after setting the notation.
 For a given positive real number $r$, assume the number of received coded symbols is distributed\footnote{The Poisson distribution of the number of received coded symbols serves mathematical analysis. In expectation as well as asymptotically the relative number of received coded symbols is $r$, cf. \cite{maneva2005new,sanghavi2007intermediate}.} as $\mathrm{Poisson}(rk)$ and let the random variable $\zrv_k(r, \bmp)$ denote the fraction of decoded information symbols, where $\bmp = (p_1, \dots, p_d) = (P_w(1), \dots, P_w(d))$. Let $\zE_k(r, \bmp) = \mathbb{E}[\zrv_k(r, \bmp)]$ and $\zE(r, \bmp) \defeq \lim_{k\to\infty} \zE_k(r, \bmp)$, provided the limit exists.

 The encoding is symmetric in the information symbols. Let $\tau_k$ denote the number of coded symbols, normalized by $k$, received until the prespecified information symbol can be decoded. When a fraction $\zrv_k$ of information symbols has been decoded, the probability that any particular prespecified information symbol is decoded equals $\zrv_k$ by symmetry. By marginalizing over $z_k$ we have $\Pr[\tau_k\leq r] = \mathbb{E}[\Pr[\tau_k\leq r \;|\; z_k]] = \mathbb{E}[z_k] = \zE_k(r, \bmp)$. And so by the tail-sum formula for expectation, we have
\begin{align}
  \pdT_k(\bm{p}) = \int_0^\infty \Pr[\tau_k>r] \dr = \int_0^\infty 1-\zE_k(r, \bmp) \dr. \label{eq:tailsum}
\end{align}

We call a realization of $1-\zrv_k(r, \bmp)$ over $r$ a \emph{decoding trajectory} and $1-\zE_k(r, \bmp)$ the \emph{decoding curve} of the code. 
In the following, we analyze the asymptotic decoding curve and minimize its integral.
A numerical experiment illustrating these concepts is depicted in \cref{fig:decodingcurves}.

\section{Minimum Asymptotic Random Access Expectation}
We use the framework of \cite{darling2005structure, maneva2005new, sanghavi2007intermediate}.
Our analysis follows most closely that of Sanghavi \cite{sanghavi2007intermediate}, who maximizes $\lim_{k\to\infty}\zE_k(r, \bmp)$ pointwise, while we minimize $\lim_{k\to\infty} \int_0^\infty 1-\zE_k(r, \bmp) \dr$.

\subsection{Asymptotic Analysis of LT Codes}

We reproduce the relevant results from the literature here for completeness. Let $p(\zgen) \defeq p_1\zgen + p_2\zgen^2 + \dots + p_d \zgen^d$ and denote its derivative as $p'(\zgen) = p_1 + 2p_2\zgen + \dots + d p_d \zgen^{d-1}.$ Define\footnote{Here, $\wedge 1$ denotes that $\zs(r, \bmp)=1$ if the set is empty.}
\begin{align}
  \zs(r,\bmp) \defeq &\inf\{\zgen \in [0, 1)\!: r p'(\zgen) + \log(1-\zgen)\!<\!0\} \wedge 1. \label{eq:dandn} %
\end{align}
The following theorem is originally due to \cite{darling2005structure} and was applied to LT codes in \cite{maneva2005new,sanghavi2007intermediate}.
\begin{theorem}[{\cite[Theorem~1]{sanghavi2007intermediate}}] \label{thm:dandn}
  Let $\bmp_k$ be a sequence of degree distributions with limiting distribution $\bmp$ such that \begin{align}
    r p'(\zgen)+\log(1-\zgen) >0 \text{ for } 0 \leq \zgen < \zs(r, \bmp),  \label{eq:dandncond}
  \end{align}
    then, as $k\to \infty$, $\zrv_k(r, \bpk) \to \zs(r, \bmp)$ in probability.
\end{theorem}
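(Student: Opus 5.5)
The plan is to prove the two halves of the convergence separately: a lower bound $\zrv_k \geq \zs - \epsilon$ via local weak convergence (density evolution on the tree-like Poissonized graph), and an upper bound $\zrv_k \leq \zs + \epsilon$ via a fluid-limit (differential-equation) analysis of the peeling process itself.

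\emph{Local structure and the fixed-point equation.} Under Poissonization with $\mathrm{Poisson}(rk)$ received symbols, the coded symbols of each degree $w$ form independent Poisson processes of rate $rkp^{(k)}_w$. A fixed information symbol therefore lies in a $\mathrm{Poisson}(r\sum_w w p^{(k)}_w)$ number of coded symbols. A uniformly chosen edge attaches to a coded symbol of degree $w$ with size-biased probability $wp_w/\sum_j jp_j$. As $k\to\infty$, the depth-$\ell$ neighborhood of a fixed information symbol converges in distribution to a two-type Galton--Watson tree; cycles within bounded radius have probability $O(1/k)$, using $\bpk\to\bmp$ and that degrees are bounded by $d$.

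On this tree, let $x_\ell$ be the probability that a variable node is recovered using only its subtree of depth $\ell$. A coded neighbor of degree $w$ releases the node iff its other $w-1$ children are recovered, which happens with probability $x_\ell^{w-1}$. Thinning the Poisson number of neighbors, the number of releasing neighbors is $\mathrm{Poisson}(r p'(x_\ell))$. This gives the recursion
\begin{align*}
x_{\ell+1} = 1-e^{-r p'(x_\ell)}, \qquad x_0=0.
\end{align*}
The map $\phi(t)=1-e^{-rp'(t)}$ is increasing, and $\phi(t)>t$ holds exactly when $rp'(t)+\log(1-t)>0$. By condition \eqref{eq:dandncond}, the iterates therefore increase monotonically to the smallest fixed point, which is $\zs(r,\bmp)$.

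\emph{Lower bound.} Peeling confined to a depth-$2\ell$ neighborhood is dominated by global peeling. Hence, for fixed $\ell$,
\begin{align*}
\liminf_k \Pr[\text{symbol decoded}] \geq x_\ell .
\end{align*}
A second-moment argument then gives concentration: two random symbols have asymptotically disjoint neighborhoods, so their decoding events are asymptotically independent, and the variance of $\zrv_k$ tends to $0$. This yields $\zrv_k \geq x_\ell-\epsilon$ with high probability, and letting $\ell\to\infty$ gives $\zrv_k\geq \zs-\epsilon$ with high probability.

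\emph{Upper bound: the main obstacle.} I would run the peeling decoder one decoded symbol at a time. The state is the number of decoded symbols $tk$ together with the counts of coded symbols by residual degree, in particular the ripple $R$ (current degree-$1$ nodes). Using the Poisson/exchangeability structure, the residual graph remains uniformly random given these counts. Wormald's differential-equation method then shows the scaled ripple tracks a deterministic curve $\rho(t)$. I expect
\begin{align*}
\rho(t) \propto (1-t)\bigl(rp'(t)+\log(1-t)\bigr),
\end{align*}
possibly up to a positive factor; this is the natural form since its zeros are exactly the fixed points found above. Decoding stops when $R$ hits $0$. Below $\zs$ the ripple is macroscopically positive, which again gives the lower bound. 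Just past $\zs$, by the definition of $\zs$ as an infimum, there are points $t$ arbitrarily close to $\zs$ with $\rho(t)<0$, so the ripple must die before reaching such a point with high probability.

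The delicate case is a tangency, where $\rho$ touches $0$ at $\zs$ from above. There the fluid limit alone does not rule out the process lingering. I would handle it as Darling and Norris do: use martingale fluctuation bounds of order $\sqrt{k}$ on $R$, and note that the fluid limit becomes strictly negative at some $t\in(\zs,\zs+\epsilon)$. This forces extinction before $\zs+\epsilon$, since the ripple cannot cross a region of negative drift of size $\Omega(k)$. Combining the two bounds gives $\zrv_k\to\zs(r,\bmp)$ in probability.
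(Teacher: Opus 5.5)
The paper does not prove this theorem; it quotes it from Sanghavi, who applies the Darling--Norris result on vertex identifiability in Poisson random hypergraphs. Take $\beta_j = r p_j$; the identifiable vertices are exactly those recovered by peeling. Your upper-bound argument (one-at-a-time peeling, residual-degree counts as a Markov chain, a fluid limit for the ripple) is essentially the Darling--Norris route and is correct in outline. The density-evolution lower bound is a valid independent route to the other half, although, as you note, the fluid limit already gives it.

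Several steps need tightening.

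(i) Disjoint neighborhoods give asymptotic independence only for \emph{local} events. So the second-moment argument must be run on the fraction $Y_{k,\ell}$ of symbols decodable inside their own depth-$2\ell$ neighborhood, followed by $z_k \geq Y_{k,\ell}$, not on $z_k$ itself. In fact no second moment is needed. Applying Markov's inequality to $(s+\epsilon-z_k)^+$, the bound $\mathbb{E}[z_k] \geq x_\ell - o(1)$ together with $z_k \leq s+\epsilon$ w.h.p.\ already gives $z_k \to s$ in probability.

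(ii) The ripple need not be guessed. For $j\geq 2$ the scaled residual-degree counts satisfy $n_j' = \bigl((j+1)n_{j+1} - j n_j\bigr)/(1-t)$, so $n_j(t) = r\sum_w p_w \binom{w}{j} t^{w-j}(1-t)^j$. The ripple then obeys $\rho' = -1 - \rho/(1-t) + 2n_2/(1-t) = -1-\rho/(1-t)+r(1-t)p''(t)$ with $\rho(0)=rp_1$. This is solved exactly by $\rho(t) = (1-t)\bigl(rp'(t)+\log(1-t)\bigr)$, with no extra factor.

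(iii) Getting past $s$ needs the same argument whether or not $\rho$ is tangent there. The textbook Wormald theorem stops where the fluid ripple reaches $0$. Work instead with the stopped process: while $R\geq 1$ its drift equals the ODE right-hand side, which is Lipschitz and linear in $\rho$. Fix $t_1\in(s,s+\epsilon)$ with $\rho(t_1)<0$. Martingale bounds and Gronwall show that survival to $t_1 k$ would force $R_{t_1 k}\approx \rho(t_1)k<0$, so survival has vanishing probability.

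The genuinely delicate tangency is $\rho$ touching zero at some $t_0<s$, where randomness can persist in the limit. That is exactly what \eqref{eq:dandncond} rules out. The hypothesis is used only to keep $\rho$ bounded away from zero on $[0,s-\epsilon]$.

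(iv) You use a uniform degree bound $d$, which matches the paper's $\bm{p}\in\Delta^d$ setting. Sequences with growing support, as in the soliton remark, would need an extra truncation step.
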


The following result extends the applicability of \cref{thm:dandn} to a larger class of degree distributions by shifting a small amount of mass from higher degrees to degree $1$ such that \eqref{eq:dandncond} is fulfilled. For a given $\bmp$ and a small perturbation $\delta$, define the distribution $\bmq$ with $q_1 = p_1+\delta (1-p_1)$ and $q_i = (1-\delta) p_i$ for $i>1$.
\begin{lemma}[{\cite[Lemma~1]{sanghavi2007intermediate}}] \label{lem:sanghavi}
  Given $\varepsilon >0$ there exists $\delta^\prime>0$ such that for any $\delta<\delta^\prime$ it holds that\begin{align*}
    \zs(r,\bmp) \leq \zE\left(\frac{r}{1-\delta}, \bmq\right) \leq \zs(r, \bmp) + \varepsilon.
  \end{align*}
\end{lemma}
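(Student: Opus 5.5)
The plan is to compare the two functions that define $\zs$ in \eqref{eq:dandn}, show that the perturbation shifts one by a positive constant, and then invoke \cref{thm:dandn} with the constant sequence $\bmp_k=\bmq$. Write $r_\delta \defeq r/(1-\delta)$, $f(\zgen)\defeq r p'(\zgen)+\log(1-\zgen)$ and $f_\delta(\zgen)\defeq r_\delta q'(\zgen)+\log(1-\zgen)$.

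\textbf{Step 1 (the shift).} Since $q_1-(1-\delta)p_1=\delta$ and $q_i=(1-\delta)p_i$ for $i>1$, we get
\begin{align*}
  r_\delta q'(\zgen) = \frac{r}{1-\delta}\bigl(\delta + (1-\delta)p'(\zgen)\bigr) = r p'(\zgen) + c_\delta, \qquad c_\delta \defeq \frac{r\delta}{1-\delta}.
\end{align*}
Hence $f_\delta = f + c_\delta$. The constant $c_\delta>0$ is strictly increasing in $\delta$ and tends to $0$ as $\delta\to 0$. Two consequences follow.
\begin{itemize}
  \item By the definition of the infimum, $f\ge 0$ on $[0,\zs(r,\bmp))$, so $f_\delta>0$ there. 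This gives $\zs(r_\delta,\bmq)\ge \zs(r,\bmp)$.
  \item Let $s_0=\zs(r,\bmp)$. If $s_0=1$, the upper bound is trivial. Otherwise there is $\zgen_1\in(s_0,s_0+\varepsilon)$ with $f(\zgen_1)<0$. As soon as $c_\delta<-f(\zgen_1)$ we have $f_\delta(\zgen_1)<0$, and therefore $\zs(r_\delta,\bmq)\le \zgen_1< s_0+\varepsilon$.
\end{itemize}

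\textbf{Step 2 (applying \cref{thm:dandn}).} To use the theorem for $\bmq$ at rate $r_\delta$, we need the strict condition \eqref{eq:dandncond}: $f_\delta>0$ on $[0,\zs(r_\delta,\bmq))$. The definition of the infimum only gives $f_\delta\ge 0$ there, so the problem is a tangential zero, i.e.\ a point where $f_\delta$ touches $0$ without becoming negative. I expect this to be the main obstacle.

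\textbf{Handling tangential zeros.} Such a zero is a local minimum of $f$ with value exactly $-c_\delta<0$, located at some point $\zgen<\zs(r_\delta,\bmq)\le s_0+\varepsilon<1$ (taking $\varepsilon$ small if $s_0<1$; if $s_0=1$, work on $[0,1-\eta]$ with $\eta$ small enough that $f<0$ near $1$, using $\log(1-\zgen)\to-\infty$).
\begin{itemize}
  \item On such a compact interval $[0,1-\eta]$, the function $f$ is real-analytic and not constant. It therefore has finitely many critical points, hence finitely many negative local-minimum values.
  \item Choose $\delta'$ so that $c_{\delta'}$ is smaller than the smallest absolute value among these negative local minima, and also smaller than $-f(\zgen_1)$ from Step 1.
  \item Then for every $\delta<\delta'$, $f_\delta$ has no tangential zero before its first sign change, so \eqref{eq:dandncond} holds. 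Also $f_\delta(0)=r_\delta q_1>0$, so the condition holds at the left endpoint as well.
\end{itemize}

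\textbf{Step 3 (conclusion).} \cref{thm:dandn} now gives $\zrv_k(r_\delta,\bmq)\to\zs(r_\delta,\bmq)$ in probability. Since $0\le \zrv_k\le 1$, bounded convergence yields $\zE(r_\delta,\bmq)=\lim_k\zE_k(r_\delta,\bmq)=\zs(r_\delta,\bmq)$. Combining this with the two bounds from Step 1 proves $\zs(r,\bmp)\le \zE(r/(1-\delta),\bmq)\le\zs(r,\bmp)+\varepsilon$ for all $\delta<\delta'$.
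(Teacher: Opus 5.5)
Your proof is correct. The paper gives no proof of this lemma; it is quoted from Sanghavi's Lemma~1. So there is no in-paper route to compare against, and your argument is a complete, self-contained verification.

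Both of your ingredients are the right ones. The first is the exact shift $\frac{r}{1-\delta}q'(t)+\log(1-t)=r p'(t)+\log(1-t)+\frac{r\delta}{1-\delta}$. It yields both bounds on $s(r/(1-\delta),\bm{q})$ directly from the definition of the infimum. The second is the observation that \eqref{eq:dandncond} can fail for $\bm{q}$ at rate $r/(1-\delta)$ only if $-r\delta/(1-\delta)$ is a critical value of your function $f(t)=rp'(t)+\log(1-t)$. That is excluded once $\delta$ is small. Theorem~\ref{thm:dandn} then applies to the constant sequence $\bm{q}$, and bounded convergence identifies $Z(r/(1-\delta),\bm{q})$ with $s(r/(1-\delta),\bm{q})$.

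The same tangency issue arises, but is not addressed explicitly, in \appref{app:prooflemseqconv}, where \cref{thm:dandn} is applied to $\bm{q}^{(\delta)}$ at every $r$. An analogous critical-value argument, with $r$ rather than $\delta$ as the varying parameter, shows that only finitely many $r$ are affected, so the integral is unaffected.

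Three simplifications are available.
\begin{enumerate}
\item You do not need real-analyticity or compactness. The critical points of your $f$ are exactly the roots of $(1-t)\,r\,p''(t)=1$. The polynomial $(1-t)\,r\,p''(t)-1$ has degree at most $d-1$ and is nonzero, since it equals $-1$ at $t=1$. Hence your $f$ has at most $d-1$ critical points on all of $[0,1)$.
\item Since $p'$ is bounded on $[0,1]$ and $\log(1-t)\to-\infty$ as $t\to 1$, you always have $s(r,\bm{p})<1$ for finite $d$. Your case $s_0=1$ is therefore vacuous; your own remark that $f<0$ near $1$ already shows this.
\item Your claims $c_\delta>0$ and $f_\delta(0)=r_\delta q_1>0$ use $r>0$. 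The remaining case $r=0$ is trivial, because $s(0,\bm{p})=Z(0,\bm{q})=0$.
\end{enumerate}
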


\begin{remark}
  The identity $\zE_k(r, \bmp) = \Pr[\tau_k\leq r]$ allows a reinterpretation of the main result of \cite{sanghavi2007intermediate} in terms of binary \property codes in the DNA random access setting: the minimum required number of draws $r$ (normalized by $k$) such that a desired information symbol can be decoded with a given probability $\zgen$.
  In \cite{sanghavi2007intermediate}, this number is determined exactly for $t \leq 2/3$ using degree distributions with $p_1=1$ or $p_2=1$ and is determined up to a small margin for all other values using truncated soliton distributions.
\end{remark}

\subsection{Asymptotic Random Access Expectation}

Using \cref{thm:dandn} and perturbations as in \cref{lem:sanghavi} we establish the following result.
Define the functions
\begin{align*}
  g(\zgen, \bm{p}) \defeq \frac{-\log(1-\zgen)}{p'(\zgen)} \text{ and } f(\bm{p}) \defeq \int_0^1 g(\zgen, \bm{p}) \dzgen.
\end{align*}

\begin{restatable}{lemma}{lemseqconv} \label{lem:seqconv}
  Let $\bpk$ be a sequence of degree distributions with $\lim_{k\to\infty} \bpk = \bm{p}$.
    If $p_1>0$ and $g(\zgen, \bmp)$ is strictly increasing for $0<\zgen<1$, then
      \begin{align}
        \lim_{k\to\infty} \pdT_k(\bpk) = f(\bm{p}). \label{eq:achiev}
      \end{align}
    For any $\bm{p}$ it holds that\begin{align}
        \liminf_{k\to\infty} \pdT_k(\bpk) \geq f(\bm{p}). \label{eq:converse}
      \end{align}
\end{restatable}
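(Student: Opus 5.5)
The plan is to rewrite the right-hand side of \eqref{eq:tailsum} in the limit as an integral over $\zgen$ rather than over $r$. I would then get the achievability part from \cref{thm:dandn} together with dominated convergence, and the converse from \cref{lem:sanghavi} together with Fatou's lemma.

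\emph{Step 1 (deterministic identity).} For $\zgen\in[0,1)$ with $p'(\zgen)>0$, the condition $r p'(\zgen)+\log(1-\zgen)<0$ is equivalent to $r<g(\zgen,\bmp)$. Hence
\begin{align*}
  \zs(r,\bmp)=\inf\{\zgen\in[0,1): g(\zgen,\bmp)>r\}\wedge 1 .
\end{align*}
Let $h(\zgen)\defeq\sup_{u\leq \zgen} g(u,\bmp)$ be the running maximum of $g$. Then, up to a Lebesgue-null set of boundary points, $\zgen\geq \zs(r,\bmp)$ holds if and only if $r<h(\zgen)$. Writing $1-\zs(r,\bmp)=\int_0^1 \mathbb{1}[\zgen\geq \zs(r,\bmp)]\dzgen$ and applying Tonelli gives
\begin{align*}
  \int_0^\infty \bigl(1-\zs(r,\bmp)\bigr)\dr=\int_0^1 h(\zgen)\dzgen\geq \int_0^1 g(\zgen,\bmp)\dzgen=f(\bmp),
\end{align*}
with equality whenever $g(\cdot,\bmp)$ is nondecreasing, since then $h=g$.

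\emph{Step 2 (achievability).} Assume $p_1>0$ and that $g(\cdot,\bmp)$ is strictly increasing. Then $g(\zgen,\bmp)\leq r$ for every $\zgen<\zs(r,\bmp)$, and strict monotonicity upgrades this to $g(\zgen,\bmp)<r$ there. This is exactly condition \eqref{eq:dandncond}, so \cref{thm:dandn} yields $\zrv_k(r,\bpk)\to\zs(r,\bmp)$ in probability. Because $\zrv_k\in[0,1]$, it follows that $\zE_k(r,\bpk)\to\zs(r,\bmp)$ for every $r$. For the integrable domination, observe that a fixed information symbol is certainly decoded once some degree-one coded symbol covering it has been received. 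Thinning the Poisson number of draws gives
\begin{align*}
  1-\zE_k(r,\bpk)\leq e^{-r p_1^{(k)}}\leq e^{-r p_1/2}
\end{align*}
for all large $k$, since $p^{(k)}_1\to p_1>0$. Dominated convergence in \eqref{eq:tailsum} together with Step 1 then gives \eqref{eq:achiev}.

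\emph{Step 3 (converse).} By Fatou's lemma applied to \eqref{eq:tailsum}, it suffices to show $\limsup_k \zE_k(r,\bpk)\leq \zs(r,\bmp)$ for every $r$; Step 1 then finishes the argument. Here the condition \eqref{eq:dandncond} may fail for $\bmp$, so I would use a coupling with the perturbed distribution. Receiving $\mathrm{Poisson}(rk/(1-\delta))$ symbols from $\bm{q}^{(k)}$ (the $\delta$-perturbation of $\bpk$) can be realized as receiving $\mathrm{Poisson}(rk)$ symbols from $\bpk$ plus additional degree-one symbols. The peeling decoder is monotone under adding coded symbols, so $\zE_k(r,\bpk)\leq \zE_k(r/(1-\delta),\bm{q}^{(k)})$. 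Since $q_1>0$, the limit on the right exists. By \cref{lem:sanghavi} (applied along the sequence $\bm{q}^{(k)}\to\bmq$) it is at most $\zs(r,\bmp)+\varepsilon$ for $\delta$ small enough. Letting $\varepsilon\to0$ gives the claim.

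\emph{Main obstacle.} I expect the delicate part to be Step 3. First, the monotonicity/coupling argument for the peeling decoder must be stated carefully. Second, \cref{lem:sanghavi} is stated for $\zE$ at the limiting distribution, whereas I need it along the sequence $\bm{q}^{(k)}$. I would handle this by checking that $\bmq$ satisfies \eqref{eq:dandncond} at the relevant argument, except possibly at a null set of values of $r$, which is harmless inside the integral. Then \cref{thm:dandn} applies directly to $\bm{q}^{(k)}$. A secondary technical point is the boundary and measure-zero bookkeeping in Step 1 when $g$ has flat pieces.
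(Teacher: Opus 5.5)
Your Step 2 is the paper's achievability argument: \cref{thm:dandn}, bounded convergence, the domination $1-\zE_k(r,\bpk)\le e^{-rp_1/2}$, and a layer-cake step, which your Step 1 writes in Tonelli form. The converse is correct but takes a genuinely different route.

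The paper never takes a pointwise limit of $\zE_k(r,\bpk)$. Instead it compares whole integrals, $\pdT_k(\bpk)\ge(1-\delta)\pdT_k(\bqkd)$. It then reruns the achievability machinery on $\bqkd$; domination is available because $q^{(\delta,k)}_1\ge\delta$. Through the layer-cake inequality this gives $\lim_k\pdT_k(\bqkd)\ge\int_0^1 g(\zgen,\bmqd)\dzgen$. Finally it lets $\delta\to0$ by Fatou's lemma in $\zgen$, which only needs the trivial pointwise convergence $g(\zgen,\bmqd)\to g(\zgen,\bmp)$. In particular, \cref{lem:sanghavi} is never actually used.

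You instead apply Fatou in $r$ over $k$ directly to $\bpk$, so no domination is needed, which is natural since $p_1$ may vanish. You then send $\delta\to0$ at fixed $r$, which forces a comparison of $\zs$ across the two distributions. In exchange you obtain the sharper bound $\liminf_k\pdT_k(\bpk)\ge\int_0^1 h(\zgen)\dzgen$, with $h$ the running maximum of $g(\cdot,\bmp)$. This is exactly the ``cumulative maximum'' picture the paper describes only informally after the lemma.

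Two points you should make explicit.

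First, $q_1>0$ alone does not make \cref{thm:dandn} applicable. Write $r'q'(\zgen)+\log(1-\zgen)=q'(\zgen)\,(r'-g(\zgen,\bmqd))$. Condition \eqref{eq:dandncond} can fail only if this expression touches zero at some $\zgen_0\in(0,\zs(r',\bmqd))$. That forces $g(\zgen_0,\bmqd)=r'$ and $g'(\zgen_0,\bmqd)=0$. Since $g(\cdot,\bmqd)$ is analytic on $[0,1)$ with $g'\to+\infty$ as $\zgen\to1$, only finitely many $r'$ are exceptional. This exceptional set moves with $\delta$, so you cannot just discard ``a null set'' once. Either restrict $\delta$ to a countable sequence $\delta_m\to0$, or remove exceptional points altogether using monotonicity of $\zE_k$ in $r$ together with right-continuity of $\zs(\cdot,\bmqd)$.

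Second, the comparison needs no appeal to \cref{lem:sanghavi}. We have $\tfrac{r}{1-\delta}q'(\zgen)+\log(1-\zgen)=rp'(\zgen)+\log(1-\zgen)+\tfrac{r\delta}{1-\delta}$. So $\zs(\tfrac{r}{1-\delta},\bmqd)$ is the infimum of a set that increases to $\{\zgen: rp'(\zgen)+\log(1-\zgen)<0\}$ as $\delta\to0$, and hence it decreases to $\zs(r,\bmp)$.

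Note that the paper's converse tacitly relies on the same almost-everywhere applicability of \cref{thm:dandn} to $\bqkd$.
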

The proof is provided in \appref{app:prooflemseqconv}.
Informally, there is an inverse decoding curve $r(\zgen, \bmp)$ with $\int_0^1 r(\zgen, \bmp) \dzgen = \int_0^\infty 1-\zE(r, \bmp) \dr$, cf.~\cref{fig:decodingcurves}. If $p_1>0$ and $g(\zgen, \bmp)$ is strictly increasing, $\zs(r, \bmp)$ according to \cref{eq:dandn} simplifies to the inverse function of $g(\zgen, \bmp)$, i.e., $r(\zgen, \bmp) = g(\zgen, \bmp)$. Otherwise, $r(\zgen, \bmp)$ is the cumulative maximum of $g(\zgen, \bmp)$, i.e., the area under $r(\zgen, \bmp)$ is at least as big as the area under $g(\zgen, \bmp)$. Perturbations in the style of \cref{lem:sanghavi} are used when $\bmp$ does not directly satisfy \cref{eq:dandncond}.

\begin{remark}
  The analysis is in terms of sequences of distributions to ensure the lower bounds apply to the case $p_1\to0$, such as the soliton distribution \cite{luby2002lt}. However, $p_1\to0$ generally does not yield a low random access expectation. For instance, soliton distributions have a square asymptotic decoding curve and give $\pdT_k(\bmp_k) \to 1$, like uncoded storage.
\end{remark}

\subsection{Minimizing the Asymptotic Random Access Expectation}

We formulate the optimization problem\begin{align}
  \min_{\bm{p}\in \Delta^d} f(\bm{p}), \label{eq:optprob}
\end{align}
over the $d$-dimensional probability simplex $\Delta^d = \{\bm{p} \in \mathbb{R}^d : \sum_i p_i = 1, p_i\geq 0 \text{ for } 1\leq i\leq d\}$.

As seen in \cref{sec:analysisoptprob}, \eqref{eq:optprob} is a convex optimization problem.
For a given $d$, let $\bm{p}^{\star,d}$ denote the unique optimum of \eqref{eq:optprob}. Then, under certain conditions, which we verify case-by-case, $\bm{p}^{\star,d}$ minimizes the random access expectation among all degree distributions as stated in the following.
\begin{theorem} \label{thm:ifgoodthenperfect}
  For any $d$, if it holds that $p_1^{\star,d}>0$ and $g(\zgen, \bm{p}^{\star,d})$ is strictly increasing for $0<\zgen<1$, then\begin{align*}
    \optpdT_d = f(\bm{p}^{\star,d}).
  \end{align*}
  Conversely, for any sequence of probability distributions $\bmp_k$ with limiting distribution $\bmp \neq \bm{p}^{\star,d}$ it holds that 
  \begin{align*}
    \liminf_{k\to\infty} \pdT_k(\bm{p}_k) > f(\bm{p}^{\star,d}). %
  \end{align*}
\end{theorem}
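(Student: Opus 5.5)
The plan is to combine the two halves of \cref{lem:seqconv} with the optimality of $\bm{p}^{\star,d}$ for \eqref{eq:optprob}. We interpret $\optpdT_d$ as the infimum, over sequences $\bpk$ of degree distributions supported on $\{1,\dots,d\}$, of $\liminf_{k\to\infty}\pdT_k(\bpk)$.

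For the \emph{achievability} direction, take the constant sequence $\bpk=\bm{p}^{\star,d}$ for all $k$. By hypothesis $p_1^{\star,d}>0$ and $g(\zgen,\bm{p}^{\star,d})$ is strictly increasing on $(0,1)$. Hence \eqref{eq:achiev} of \cref{lem:seqconv} applies and gives $\lim_k \pdT_k(\bpk)=f(\bm{p}^{\star,d})$, so $\optpdT_d\le f(\bm{p}^{\star,d})$.

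For the \emph{converse}, let $\bmp_k$ be any sequence supported on degrees at most $d$. Since $\Delta^d$ is compact, pass to a subsequence attaining the $\liminf$ of $\pdT_k(\bmp_k)$ and then to a further subsequence converging to some $\bmp\in\Delta^d$. \Cref{lem:seqconv} is stated for sequences indexed by $k$ along which the blocklength grows, and its proof should go through verbatim along a subsequence, since \cref{thm:dandn} is a limit statement. Applying \eqref{eq:converse} gives $\liminf_k \pdT_k(\bmp_k)\ge f(\bmp)\ge f(\bm{p}^{\star,d})$, where the last step is minimality of $\bm{p}^{\star,d}$ for \eqref{eq:optprob}. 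This yields $\optpdT_d\ge f(\bm{p}^{\star,d})$ and therefore equality.

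For the strict inequality with a given limit $\bmp\neq\bm{p}^{\star,d}$, we again have $\liminf_k\pdT_k(\bmp_k)\ge f(\bmp)$, and it suffices to show $f(\bmp)>f(\bm{p}^{\star,d})$. This follows from uniqueness of the minimizer, which comes from strict convexity of $f$ on $\Delta^d$. Indeed, $p'(\zgen)$ is linear in $\bmp$, and $x\mapsto 1/x$ is strictly convex on $x>0$. So for each $\zgen\in(0,1)$ the map $\bmp\mapsto g(\zgen,\bmp)$ is convex, and it is strictly convex along any segment on which $p'(\zgen)$ is nonconstant. For $\bmp\ne\bm{q}$, the polynomial $p'-q'$ is nonzero, so it vanishes at only finitely many $\zgen$. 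Integrating over $\zgen$ then gives strict convexity wherever $f$ is finite.

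The main care is needed where $f(\bmp)=\infty$, for instance when $p_1=0$ and $g$ behaves like $\zgen/p'(\zgen)$ near $0$. Even then $f$ is convex, and the strict inequality is trivial whenever $f(\bmp)=\infty$ while $f(\bm{p}^{\star,d})<\infty$. Finiteness of $f(\bm{p}^{\star,d})$ holds because $p_1^{\star,d}>0$, which makes the integrand bounded near $0$, and near $1$ the $-\log(1-\zgen)$ singularity is integrable. I expect the subsequence and compactness bookkeeping in the converse to be the only delicate step, along with checking that \eqref{eq:converse} holds for limits $\bmp$ on the boundary of the simplex.
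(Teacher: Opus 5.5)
Your proposal is correct and follows the paper's proof: achievability comes from \eqref{eq:achiev} applied to the constant sequence $\bm{p}_k=\bm{p}^{\star,d}$, and the converse comes from \eqref{eq:converse} combined with uniqueness of the minimizer via strict convexity (the paper cites \cref{lem:convex}; your pointwise argument via $x\mapsto 1/x$ also covers boundary points where $f=\infty$). Your compactness/subsequence step for sequences without a limit is a detail the paper leaves implicit, and you need not rerun the proof of \cref{lem:seqconv} there: set the sequence equal to $\bm{p}$ at the indices outside your subsequence, which gives a full sequence converging to $\bm{p}$, and its $\liminf$ is at most the limit along your subsequence, so that limit is at least $f(\bm{p})$.
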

\begin{proof}
  We apply \cref{lem:seqconv}. The equality follows from \eqref{eq:achiev} applied to $\bmp^{\star,d}$ (formally, to the constant sequence $\bmp_k = \bmp^{\star,d}$). For any sequence $\bmp_k$ with limiting distribution $\bmp$ we have $\liminf_{k\to\infty} \pdT_k(\bmp_k)\geq f(\bmp)$ by \eqref{eq:converse}, and $f(\bmp)> f(\bmp^{\star,d})$ if $\bmp \neq \bmp^{\star,d}$ by the convexity of \eqref{eq:optprob}.
\end{proof}

Whenever $p_1>0$ and $g(\zgen,\bmp)$ is strictly increasing at the optimum, the corresponding code attains the minimum random access expectation among all binary \property codes under peeling. Additionally, any lower bound on $f(\bmp^{\star,d})$ is a lower bound on $\optpdT_d$. 

We find $\bmp^{\star,d}$ for all $d$ up to $10^4$ by convex optimization and verify that $p_1^{\star,d}>0$ and $g(\zgen, \bm{p}^{\star,d})$ is strictly increasing for $0<\zgen<1$ in all cases\footnote{For large $d$, $\min_t g'(t) \approx 0.87$ at $t\approx0.385$.}. 
This can be also seen in the decoding curves: a strictly decreasing decoding curve corresponds to a strictly increasing $g(\zgen, \bmp^{\star,d})$. Generally, we observe S-shaped and strictly decreasing decoding curves across values of $d$, as illustrated in \appref{app:decodingcurvefig}.
The random access expectations $\optpdT_d$ are shown in \cref{fig:objval}.

\section{Analysis of the Optimization Problem} \label{sec:analysisoptprob}

We analyze the optimization problem \eqref{eq:optprob} to characterize the degree distributions that solve it. We conjecture that there is no closed form expression for $\bmp^{\star,d}$ for finite $d$.

\subsection{Convexity of the Optimization Problem}

For convenience we recall the objective function\begin{align*}
  f(\bmp) = \int_0^1 \frac{-\log(1-\zgen)}{p'(\zgen)} \dzgen.
\end{align*}

The domain of the optimization problem \eqref{eq:optprob} is the $d$-dimensional probability simplex, which is known to be a convex set. 
Thus, to show that \eqref{eq:optprob} is a convex optimization problem, it suffices to show that the objective function is convex.
The first and second derivative of $f(\bm{p})$ are given by
  \begin{align*}
    \frac{\partial f(\mathbf{p})}{\partial p_i} &= \int_0^1 \frac{i \zgen^{i-1} \log(1-\zgen)}{p'(\zgen)^2} \dzgen \\ %
    \frac{\partial^2 f(\mathbf{p})}{\partial p_i \partial p_j} &= - \int_0^1 \frac{2 i \zgen^{i-1} j \zgen^{j-1} \log(1-\zgen)}{p'(\zgen)^3} \dzgen. %
  \end{align*}

\begin{figure}[t]
  \centering
  \tikzsetnextfilename{derivsd100}
  \begin{tikzpicture}[/tikz/background rectangle/.style={fill={rgb,1:red,1.0;green,1.0;blue,1.0}, fill opacity={1.0}, draw opacity={1.0}}, show background rectangle]
\begin{axis}[point meta max={nan}, point meta min={nan}, legend cell align={left}, legend columns={1}, title={}, title style={at={{(0.5,1)}}, anchor={south}, font={{\fontsize{14 pt}{18.2 pt}\selectfont}}, color={rgb,1:red,0.0;green,0.0;blue,0.0}, draw opacity={1.0}, rotate={0.0}, align={center}}, 
  legend style={color={rgb,1:red,0.0;green,0.0;blue,0.0}, draw opacity={1.0}, line width={1}, solid, fill={rgb,1:red,1.0;green,1.0;blue,1.0}, fill opacity={1.0}, text opacity={1.0}, font={{\fontsize{8 pt}{10.4 pt}\selectfont}}, text={rgb,1:red,0.0;green,0.0;blue,0.0}, cells={anchor={center}}, at={(0.98, 0.1)}, anchor={south east}}, 
  axis background/.style={fill={rgb,1:red,1.0;green,1.0;blue,1.0}, opacity={1.0}}, anchor={north west}, xshift={1.0mm}, yshift={-1.0mm}, width={0.8\linewidth}, height={0.7\linewidth}, scaled x ticks={false}, xlabel={$i$}, x tick style={color={rgb,1:red,0.0;green,0.0;blue,0.0}, opacity={1.0}}, x tick label style={color={rgb,1:red,0.0;green,0.0;blue,0.0}, opacity={1.0}, rotate={0}}, xlabel style={at={(ticklabel cs:0.5)}, anchor=near ticklabel, at={{(ticklabel cs:0.5)}}, anchor={near ticklabel}, font={{\fontsize{11 pt}{14.3 pt}\selectfont}}, color={rgb,1:red,0.0;green,0.0;blue,0.0}, draw opacity={1.0}, rotate={0.0}}, xmode={log}, log basis x={10}, xmajorgrids={false}, xmin={0.781160263981159}, xmax={116.36536597078177}, xticklabels={{$10^{0}$,$10^{1}$,$10^{2}$}}, xtick={{1.0,10.0,100.0}}, xtick align={inside}, xticklabel style={font={{\fontsize{8 pt}{10.4 pt}\selectfont}}, color={rgb,1:red,0.0;green,0.0;blue,0.0}, draw opacity={1.0}, rotate={0.0}}, x grid style={color={rgb,1:red,0.0;green,0.0;blue,0.0}, draw opacity={0.1}, line width={0.5}, solid}, axis x line*={left}, x axis line style={color={rgb,1:red,0.0;green,0.0;blue,0.0}, draw opacity={1.0}, line width={1}, solid}, scaled y ticks={false}, ylabel={derivative + objective}, y tick style={color={rgb,1:red,0.0;green,0.0;blue,0.0}, opacity={1.0}}, y tick label style={color={rgb,1:red,0.0;green,0.0;blue,0.0}, opacity={1.0}, rotate={0}}, ylabel style={at={(ticklabel cs:0.5)}, anchor=near ticklabel, at={{(ticklabel cs:0.5)}}, anchor={near ticklabel}, font={{\fontsize{11 pt}{14.3 pt}\selectfont}}, color={rgb,1:red,0.0;green,0.0;blue,0.0}, draw opacity={1.0}, rotate={0.0}}, ymajorgrids={false}, ymin={-0.006853472784658677}, ymax={0.006853472784658677}, yticklabels={{$-0.0050$,$-0.0025$,$0.0000$,$0.0025$,$0.0050$}}, ytick={{-0.005,-0.0025,0.0,0.0025,0.005}}, ytick align={inside}, yticklabel style={font={{\fontsize{8 pt}{10.4 pt}\selectfont}}, color={rgb,1:red,0.0;green,0.0;blue,0.0}, draw opacity={1.0}, rotate={0.0}}, y grid style={color={rgb,1:red,0.0;green,0.0;blue,0.0}, draw opacity={0.1}, line width={0.5}, solid}, axis y line*={left}, y axis line style={color={rgb,1:red,0.0;green,0.0;blue,0.0}, draw opacity={1.0}, line width={1}, solid}, colorbar={false}]
    \addlegendimage{legend image code/.code={\fill[vikred] (0.26cm,-0.1cm) rectangle (0.34cm,0.1cm);}}
    \addlegendentry {$p^{\star}_i$}
    \addplot[color=vikblue, draw opacity={1.0}, line width={1}, solid]
        table {tikz/derivs_d100_data/block_000.dat}
        ;
        \addlegendentry {$\frac{df(\bm{p}^\star)}{dp_i} + f(\bm{p}^\star)$}
    \addplot[color={rgb,1:red,0.0;green,0.0;blue,0.0}, draw opacity={1.0}, line width={1}, solid, forget plot]
        table {tikz/derivs_d100_data/block_001.dat}
        ;
    \addplot[color=vikblue, only marks, draw opacity={1.0}, line width={0}, solid, mark={*}, mark size={1.5 pt}, mark repeat={1}, mark options={color={rgb,1:red,0.0;green,0.0;blue,0.0}, draw opacity={1.0}, fill=vikblue, fill opacity={1.0}, line width={0.0}, rotate={0}, solid}, forget plot]
        table {tikz/derivs_d100_data/block_002.dat}
        ;
\end{axis}
\begin{axis}[point meta max={nan}, point meta min={nan}, legend cell align={left}, legend columns={1}, title={}, title style={at={{(0.5,1)}}, anchor={south}, font={{\fontsize{14 pt}{18.2 pt}\selectfont}}, color={rgb,1:red,0.0;green,0.0;blue,0.0}, draw opacity={1.0}, rotate={0.0}, align={center}}, legend style={color={rgb,1:red,0.0;green,0.0;blue,0.0}, draw opacity={1.0}, line width={1}, solid, fill={rgb,1:red,1.0;green,1.0;blue,1.0}, fill opacity={1.0}, text opacity={1.0}, font={{\fontsize{8 pt}{10.4 pt}\selectfont}}, text={rgb,1:red,0.0;green,0.0;blue,0.0}, cells={anchor={center}}, at={(0.98, 0.98)}, anchor={north east}}, axis background/.style={fill={rgb,1:red,0.0;green,0.0;blue,0.0}, opacity={0.0}}, anchor={north west}, xshift={1.0mm}, yshift={-1.0mm}, width={0.8\linewidth}, height={0.7\linewidth}, scaled x ticks={false}, xlabel={$i$}, x tick style={color={rgb,1:red,0.0;green,0.0;blue,0.0}, opacity={1.0}}, x tick label style={color={rgb,1:red,0.0;green,0.0;blue,0.0}, opacity={1.0}, rotate={0}}, xlabel style={at={(ticklabel cs:0.5)}, anchor=near ticklabel, at={{(ticklabel cs:0.5)}}, anchor={near ticklabel}, font={{\fontsize{11 pt}{14.3 pt}\selectfont}}, color={rgb,1:red,0.0;green,0.0;blue,0.0}, draw opacity={1.0}, rotate={0.0}}, xmode={log}, log basis x={10}, xmajorticks={false}, xmajorgrids={false}, xmin={0.781160263981159}, xmax={116.36536597078177}, axis x line*={left}, separate axis lines, x axis line style={{draw opacity = 0}}, xlabel={}, scaled y ticks={false}, ylabel={$p_i$}, y tick style={color={rgb,1:red,0.0;green,0.0;blue,0.0}, opacity={1.0}}, y tick label style={color={rgb,1:red,0.0;green,0.0;blue,0.0}, opacity={1.0}, rotate={0}}, ylabel style={at={(ticklabel cs:0.5)}, anchor=near ticklabel, at={{(ticklabel cs:0.5)}}, anchor={near ticklabel}, font={{\fontsize{11 pt}{14.3 pt}\selectfont}}, color={rgb,1:red,0.0;green,0.0;blue,0.0}, draw opacity={1.0}, rotate={0.0}}, ymajorgrids={true}, ymin={-0.8342248854491865}, ymax={0.8342248854491865}, ytick={{0.0,0.3,0.6}}, ytick align={inside}, yticklabel style={font={{\fontsize{8 pt}{10.4 pt}\selectfont}}, color={rgb,1:red,0.0;green,0.0;blue,0.0}, draw opacity={1.0}, rotate={0.0}}, y grid style={color={rgb,1:red,0.0;green,0.0;blue,0.0}, draw opacity={0.1}, line width={0.5}, solid}, axis y line*={right}, y axis line style={color={rgb,1:red,0.0;green,0.0;blue,0.0}, draw opacity={1.0}, line width={1}, solid}, colorbar={false}]
    \addplot[color=vikred, draw opacity={1.0}, line width={4}, solid, ycomb, forget plot]
        table {tikz/derivs_d100_data/stems.dat}
        ;
\end{axis}
\end{tikzpicture}
  \caption{The degree distribution $\bmp^{\star,100}$ and the sum of the objective value and the derivative of the objective w.r.t. $i$ (extended to the reals). The KKT conditions state that the support of $\bm{p}$ is limited to those $i$, where this sum crosses zero.
    The degree distribution is given by 
 $p_1\approx     0.19363$, 
 $p_2\approx     0.75839$, 
 $p_{14}\approx  0.00004$, 
 $p_{15}\approx  0.04198$, and
 $p_{100}\approx 0.00596$. 
  }
\label{fig:derivs100}
\end{figure}

A proof of the following lemma using the second derivatives is given in \appref{app:proofconvex}.
\begin{restatable}{lemma}{lemconvex} \label{lem:convex}
  The function $f(\mathbf{p})$ is strictly convex.
\end{restatable}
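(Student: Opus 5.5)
We show that the Hessian of $f$ is positive definite at every point of the simplex, using the second derivatives computed above. Fix $\bmp \in \Delta^d$ and a direction $\mathbf{v} \in \mathbb{R}^d$, $\mathbf{v}\neq \mathbf{0}$. Set $v(\zgen) \defeq \sum_{i=1}^d v_i \zgen^i$, so that $v'(\zgen) = \sum_i i v_i \zgen^{i-1}$. The quadratic form of the Hessian collapses to a single integral:
\begin{align*}
  \mathbf{v}^\top \nabla^2 f(\bmp)\, \mathbf{v} = \int_0^1 \frac{2\,v'(\zgen)^2\,\bigl(-\log(1-\zgen)\bigr)}{p'(\zgen)^3} \dzgen .
\end{align*}
This is just $\sum_{i,j} v_i v_j\, i\zgen^{i-1} j\zgen^{j-1} = v'(\zgen)^2$ inserted into the formula for $\partial^2 f/\partial p_i\partial p_j$. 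Equivalently, and more cleanly, $\frac{d^2}{ds^2} f(\bmp + s\mathbf{v})$ is obtained by differentiating $1/(p'(\zgen)+s v'(\zgen))$ twice in $s$ under the integral sign.

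Next we argue the integrand is nonnegative and not identically zero. For $\zgen\in(0,1)$ we have $-\log(1-\zgen)>0$. Also $p'(\zgen) = \sum_i i p_i \zgen^{i-1} > 0$, since the $p_i\ge 0$ are not all zero. Finally $v'(\zgen)^2 \ge 0$, and $v'$ is a nonzero polynomial because $v'$ determines $\mathbf{v}$ (its coefficients are $i v_i$). Hence $v'$ has finitely many roots, so the integrand is strictly positive almost everywhere on $(0,1)$. This gives strict positivity of the quadratic form, whenever the integral is finite.

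The main obstacle is justifying differentiation under the integral and handling the endpoints. Near $\zgen=1$ the factor $-\log(1-\zgen)$ is integrable and $p'$ is bounded below by a positive constant there, so there is no issue. Near $\zgen=0$, however, $p'(\zgen)\to p_1$, which may be $0$. In that case $1/p'(\zgen)^3$ can blow up like $\zgen^{-3(m-1)}$, where $m$ is the smallest index with $p_m>0$, while the numerator $-\log(1-\zgen)\sim \zgen$. So $f$ itself may be $+\infty$ on part of the boundary of the simplex.

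We plan to handle this through the convexity inequality along segments rather than the Hessian. The map $x\mapsto 1/x$ is strictly convex on $(0,\infty)$ and $p'(\zgen)$ is linear in $\bmp$. Therefore, for $\bmp\neq\bmq$ and $\lambda\in(0,1)$, pointwise for $\zgen\in(0,1)$,
\begin{align*}
  \frac{1}{\lambda p'(\zgen) + (1-\lambda) q'(\zgen)} \le \frac{\lambda}{p'(\zgen)} + \frac{1-\lambda}{q'(\zgen)},
\end{align*}
with strict inequality wherever $p'(\zgen)\neq q'(\zgen)$, which holds for all but finitely many $\zgen$. Multiplying by $-\log(1-\zgen)>0$ and integrating yields $f(\lambda\bmp+(1-\lambda)\bmq) < \lambda f(\bmp)+(1-\lambda)f(\bmq)$ whenever the right-hand side is finite. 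If it is infinite, the non-strict inequality holds trivially in the extended-real sense. This direct argument is strict convexity itself and sidesteps integrability questions entirely, so we expect to present it as the main proof, with the Hessian computation as the interior-point sanity check.
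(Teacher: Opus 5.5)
Your proposal is correct. The argument you intend to present as the main proof takes a different route from the paper's.

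The paper's proof is exactly your first computation. It writes $\mathbf{a}^T\mathbf{H}\mathbf{a}=\int_0^1 \frac{-2\log(1-t)}{p'(t)^3}(\mathbf{a}^T\mathbf{v}(t))^2\,dt$ with $v_i(t)=it^{i-1}$ and concludes positivity. It does not discuss when differentiation under the integral is legitimate, nor why $\mathbf{a}^T\mathbf{v}(t)$ cannot vanish identically; your observation that it is the nonzero polynomial $v'(t)$ settles the latter.

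The Hessian route is sound where $p_1>0$, since then $p'\geq p_1$ on $[0,1]$ and every integrand is dominated by a multiple of $-\log(1-t)$. On the face $p_1=0$, however, second derivatives in directions with $v_1\neq 0$ diverge, and $f$ can be $+\infty$. Extending strict convexity to the closed simplex from the Hessian therefore needs an extra continuity argument along segments.

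Your pointwise argument avoids this. It uses strict convexity of $x\mapsto 1/x$, composed with the linear map $\bm{p}\mapsto p'(t)$ and integrated against the positive weight $-\log(1-t)$, with strictness because $p'-q'$ is a nonzero polynomial. It requires no regularity and yields strict convexity on the whole closed simplex, on the convex set where $f<\infty$. That is precisely the uniqueness used later in \cref{thm:ifgoodthenperfect} and \cref{lem:moreismore}.

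The Hessian adds only an explicit curvature formula, which nothing else in the paper relies on. Leading with the segment argument is the better choice.
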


\subsection{Characterization of the Optimal Degree Distributions}

The optimization problem~\eqref{eq:optprob} is convex and Slater's condition~\cite[Chapter~5.2.3]{boyd2004convex} is fulfilled, since the vector $(1/d, 1/d, \dots, 1/d)^T$ is feasible.
As a result, the Karush-Kuhn-Tucker (KKT) conditions are necessary and sufficient conditions on the unique optimum of the problem, cf. \cite[Chapter~5.5]{boyd2004convex}.
We apply them as follows.

Define the KKT multipliers $\bm{\mu} = (\mu_1, \dots, \mu_d)^T \in \mathbb{R}^d$ and $\lambda \in \mathbb{R}$. 
We have 
  \begin{align*}
    \mathcal{L}(\bm{p}, \bm{\mu}, \lambda) &= f(\bm{p}) - \bm{\mu}^T \bm{p} - \lambda \left(1-\sum_i p_i \right).
  \end{align*}

  The KKT conditions are\begin{enumerate} 
    \item $\frac{\partial \mathcal{L}(\bm{p}, \bm{\mu}, \lambda)}{\partial p_i} = \frac{\partial f(\bm{p})}{\partial p_i} - \mu_i + \lambda = 0$ 
    \item $\sum_i p_i = 1$
    and $p_i \geq 0$
    \item $\mu_i \geq 0$
    \item $\mu_i p_i = 0$. %
  \end{enumerate}

  We summarize these conditions in the following lemma.  \begin{restatable}{lemma}{lemcharact} \label{lem:charact}
    For a given integer $d\geq 2$, a distribution $\bm{p}^\star \in \mathbb{R}^d$, with $p^\star_i\geq 0$ and $\sum_{i=1}^d p^\star_i = 1$, is the unique optimum of the optimization problem \eqref{eq:optprob} if and only if %
    $$\forall i \in \mathcal{S}\!: -\frac{\partial f(\bm{p}^\star)}{\partial p_i} = f(\bm{p}^\star) \text{ and } \forall i \notin \mathcal{S}\!: -\frac{\partial f(\bm{p}^\star)}{\partial p_i} \leq f(\bm{p}^\star),$$
    where $\mathcal{S}:=\{i\in \{1, \dots, d\}: p^\star_i > 0\}$. 
\end{restatable}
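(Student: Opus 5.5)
The plan is to turn the KKT conditions listed above into the stated form by eliminating the multipliers, using an Euler-type identity for $f$. Since \eqref{eq:optprob} is convex (\cref{lem:convex}) and Slater's condition holds, the KKT conditions are necessary and sufficient for optimality. Strict convexity makes the optimum unique. So it suffices to show that the existence of $\bm{\mu},\lambda$ satisfying conditions 1)--4) is equivalent to the displayed conditions.

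The key step is to identify $\lambda$. The function $f$ is homogeneous of degree $-1$ in $\bm{p}$: $p'(\zgen)$ is linear in $\bm{p}$, so $f(c\bm{p}) = f(\bm{p})/c$ for $c>0$. By Euler's theorem, $\sum_i p_i \frac{\partial f(\bm{p})}{\partial p_i} = -f(\bm{p})$. This can also be checked directly from the derivative formula, since $\sum_i p_i i\zgen^{i-1} = p'(\zgen)$, which gives $\int_0^1 \log(1-\zgen)/p'(\zgen)\dzgen = -f(\bm{p})$. Now multiply condition 1) by $p_i^\star$ and sum over $i$. Complementary slackness $\mu_i p_i^\star=0$ together with $\sum_i p_i^\star=1$ gives $-f(\bm{p}^\star)+\lambda=0$, so $\lambda=f(\bm{p}^\star)$.

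For the forward direction, substitute $\lambda=f(\bm{p}^\star)$. Condition 1) becomes $\mu_i = \frac{\partial f(\bm{p}^\star)}{\partial p_i} + f(\bm{p}^\star)$. For $i\in\mathcal{S}$ we have $\mu_i=0$, which gives the equality. For $i\notin\mathcal{S}$, $\mu_i\ge 0$ gives the inequality. For the converse, set $\lambda=f(\bm{p}^\star)$ and define $\mu_i$ by the formula above. Then condition 1) holds by construction, and condition 2) holds by assumption. Condition 3) follows from the equality on $\mathcal{S}$ and the inequality off $\mathcal{S}$. Condition 4) holds because $\mu_i=0$ wherever $p_i^\star>0$.

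The main obstacle is technical. One must justify that $f$ is finite and differentiable at boundary points of the simplex, and that differentiation under the integral is valid there. If $p_1^\star=0$, then $p'(0)=0$ and the integrand near $\zgen=0$ behaves like $\zgen/(2p_2\zgen)$, which is still integrable. However, if $p_1=p_2=0$, then $f$ may be $+\infty$. I would handle this first by noting that such points cannot be optimal (f is infinite there), so $f$ is finite near the optimum and one can restrict to its effective domain. On that domain, integrability of the derivative integrands follows from $p'(\zgen)\ge p'(0)+2p_2\zgen$ and the logarithmic singularity at $\zgen=1$ being integrable, with $p'(1)\ge 1$. Dominated convergence then justifies differentiating under the integral sign.
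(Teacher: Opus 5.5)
Your main line is the paper's argument. The paper also gets $\lambda=f(\bm{p}^\star)$ by multiplying condition 1) by $p^\star_i$, summing, and using complementary slackness together with $\sum_i p^\star_i\, i t^{i-1}=p^{\star\prime}(t)$; your Euler-identity phrasing is the same computation. You also write out the converse construction of $(\bm{\mu},\lambda)$, which the paper leaves implicit.

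One claim in your technical paragraph is wrong. Take a feasible point with $p_1=0<p_2$. The bound $p'(t)\geq 2p_2 t$ does \emph{not} make the $i=1$ derivative integrand integrable, because $\log(1-t)/p'(t)^2\sim -1/(4p_2^2 t)$ as $t\to 0$. So $\partial f/\partial p_1=-\infty$ at such points; the paper notes exactly this divergence in the $d=2$ example. Hence $f$ is not differentiable there, and restricting to the effective domain $\{f<\infty\}$ does not license applying KKT. (The $i\geq 2$ integrands are fine, since they behave like $t^{i-2}$.)

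The repair is to show that both directions only involve points with $p_1^\star>0$. Suppose $p_1^\star=0<p_2^\star$. The difference quotient of $f$ along $(1-\epsilon)\bm{p}^\star+\epsilon\bm{e}_1$ is $\int_0^1 -\log(1-t)\,\frac{p^{\star\prime}(t)-1}{p^{\star\prime}(t)\,((1-\epsilon)p^{\star\prime}(t)+\epsilon)}\,dt$. By monotone convergence on $\{p^{\star\prime}<1\}$, which contains a neighborhood of $0$, this tends to $-\infty$ as $\epsilon\downarrow 0$. The contribution from $\{p^{\star\prime}\geq 1\}$ stays bounded, so such a point is never optimal. The displayed conditions, read with finite partial derivatives, also fail there.

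Once $p_1^\star>0$, every $\bm{q}\in\mathbb{R}^d$ in a small neighborhood of $\bm{p}^\star$ satisfies $q'(t)\geq p_1^\star/2$ on $[0,1]$. That gives the uniform domination needed to differentiate under the integral and apply KKT. The paper's own proof is silent on this point, so the problem lies in the justification you added, not in the shared main argument.
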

  A proof, mainly concerned with showing that $f(\bmp^\star)=\lambda$, is found in \appref{app:lemkkt}.
A graphical interpretation of \cref{lem:charact} is depicted in \cref{fig:derivs100}. %
  Consistent with the lemma, the optimal degree distributions have sparse support. A depiction can be found in \cref{fig:psheatmap}. 
  The numerical solutions fulfill \cref{lem:charact} with residuals less than $10^{-10}$ for all $2\leq d \leq 10^4$.
  We observe that the support of the optimal degree distributions includes $1, 2,$ and $d$ as well as a small number of roughly geometrically spaced support points (or adjacent pairs of support points) between $2$ and $d$.

\begin{figure}
  \tikzsetnextfilename{heatmap}
    \providecommand{\ifexternalize}{\iffalse}
  \begin{tikzpicture}
  \begin{axis}[
    xlabel={$i$}, ylabel={$d$},
      width={0.9\linewidth},
      xmode=log, ymode=log,
      axis equal image,
      xmin=0.9, xmax=11000,
      ymin=0.9, ymax=11000,
      xtick={1,10,100,1000,10000,100000},
      ytick={1,10,100,1000,10000,100000},
      xticklabels={$10^0$,$10^1$,$10^2$,$10^3$,$10^4$,$10^5$},
      yticklabels={$10^0$,$10^1$,$10^2$,$10^3$,$10^4$,$10^5$},
      axis line style={line cap=rect},
      xtick align=outside,
      ytick align=outside,
      tick pos=left,
      point meta min=-5, point meta max=0,
      colormap name=vik,
      tick label style={font=\fontsize{8pt}{10.4pt}\selectfont},
      label style={font=\fontsize{11pt}{14.3pt}\selectfont},
      colorbar,
      colorbar style={
          width=0.15cm,
          at={(1.02,0)}, anchor=south west,
          ylabel={$\log_{10}p_i$},
          ylabel style={font=\fontsize{11pt}{14.3pt}\selectfont},
          ytick={0,-1,-2,-3,-4,-5,-6},
          yticklabels={{$\phantom{-}0$},{$-1$},{$-2$},{$-3$},{$-4$},{$-5$},{$-6$}},
          yticklabel style={font=\fontsize{8pt}{10.4pt}\selectfont},
      },
  ]
  \addplot[
      scatter, only marks, scatter src=explicit,
      mark=*, mark size=1.5pt,
      mark options={draw opacity=0, line width=0},
    ] table[x=x, y=y, meta=meta] {tikz/ps_heatmap_data_1e4_ref.dat};
  \end{axis}
  \ifexternalize
    \draw[draw=none] ([xshift=80pt]current axis.outer east |- current axis.outer north) rectangle
          (current axis.outer west |- current axis.outer south);
  \fi
  \end{tikzpicture}
  \vspace{-0.5em}
  \caption{Depiction of the optimal degree distributions for values of $d$ between $2$ and $10^4$. For each value of $d$ the dots indicate the indices $i$ where $p_i>0$. The corresponding probability value is indicated by color.}
\label{fig:psheatmap}
\end{figure}
\subsection{Lower Bound}
  The random access expectation for any binary \property code decoded with a peeling decoder can be bounded from below as follows.
  \begin{lemma} \label{lem:lowerbound}
    For any value of $d$ and any $\bmp$ it holds that\begin{align*}
      f(\bm{p}) \geq \frac{\pi}{4} \approx 0.7854. 
    \end{align*}
  \end{lemma}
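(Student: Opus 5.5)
The plan is to bound $f(\bm{p})$ from below by the Cauchy--Schwarz inequality, using only the normalization of $\bm{p}$. The bound then depends on neither $d$ nor the particular $\bm{p}$.

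First, I record the single constraint the degree distribution imposes on $p'$. Since $p(0)=0$ and $p(1)=\sum_i p_i = 1$,
\begin{align*}
  \int_0^1 p'(\zgen) \dzgen = p(1)-p(0) = 1 .
\end{align*}
Moreover $p'(\zgen) = \sum_i i p_i \zgen^{i-1}$ is nonnegative on $[0,1]$. If $p_1=0$, then $p'$ vanishes at $\zgen=0$ and the integrand of $f$ may be infinite; this only makes $f$ larger, so the inequality is trivial in that case. Hence we may treat $g(\zgen,\bm{p}) = -\log(1-\zgen)/p'(\zgen)$ as a nonnegative, possibly infinite, integrand. Write $h(\zgen) \defeq -\log(1-\zgen) \geq 0$.

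Second, I apply Cauchy--Schwarz to the factorization $\sqrt{h} = \sqrt{h/p'}\cdot\sqrt{p'}$:
\begin{align*}
  \left(\int_0^1 \sqrt{h(\zgen)} \dzgen\right)^2 \leq \int_0^1 \frac{h(\zgen)}{p'(\zgen)} \dzgen \cdot \int_0^1 p'(\zgen) \dzgen = f(\bm{p}).
\end{align*}
This gives $f(\bm{p}) \geq \bigl(\int_0^1 \sqrt{-\log(1-\zgen)} \dzgen\bigr)^2$ for every $d$ and every $\bm{p}$.

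Third, I evaluate the constant with the substitution $u=-\log(1-\zgen)$, so that $\zgen = 1-e^{-u}$ and $\mathrm{d}\zgen = e^{-u}\,\mathrm{d}u$:
\begin{align*}
  \int_0^1 \sqrt{-\log(1-\zgen)} \dzgen = \int_0^\infty u^{1/2} e^{-u}\,\mathrm{d}u = \Gamma(3/2) = \frac{\sqrt{\pi}}{2}.
\end{align*}
Squaring gives $\pi/4$, which proves the lemma.

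The main obstacle is choosing the right inequality; once Cauchy--Schwarz is paired with the normalization $\int_0^1 p' = 1$, the remaining computation is routine. Two justifications remain. The integrals may be improper near $\zgen=1$, where $h$ blows up, and possibly near $\zgen=0$; since all quantities are nonnegative, Cauchy--Schwarz holds for these improper integrals, for instance via monotone convergence on $[\epsilon,1-\epsilon]$. Finally, equality in Cauchy--Schwarz would require $p' \propto \sqrt{h}$, which no polynomial satisfies. So the bound is strict for every finite $d$ and can be approached at best in the limit $d\to\infty$, consistent with \cref{fig:objval}.
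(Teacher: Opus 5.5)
Your proof is correct and follows the paper's argument: the same Cauchy--Schwarz factorization $\sqrt{-\log(1-t)}=\sqrt{-\log(1-t)/p'(t)}\cdot\sqrt{p'(t)}$ combined with $\int_0^1 p'(t)\,\mathrm{d}t=1$. Your Gamma-function evaluation of the constant and your remarks on improper integrals fill in details the paper leaves implicit. One aside on your closing remark that the bound "can be approached at best in the limit $d\to\infty$": the paper observes that $p'$ is always convex, while $\sqrt{-\log(1-t)}$ is strictly concave for $t<1-e^{-1/2}$, so the bound stays loose, and numerically the gap remains about $0.0015$ for large $d$.
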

  \pagebreak
  \begin{proof}
    We apply the  Cauchy-Schwarz inequality\footnote{This decomposition of the integrand was proposed by Google Gemini, while we were searching for other bounds on integrals of this form.} to the functions $u(\zgen) = \sqrt{\frac{-\log(1-\zgen)}{p'(\zgen)}}$ and $v(\zgen) = \sqrt{p'(\zgen)}$.

    Note that $\int_0^1 p'(\zgen) \dzgen = 1$. We have\begin{align*} 
      f(\bm{p}) &= \int_0^1 \frac{-\log(1-\zgen)}{p'(\zgen)} \dzgen \\ 
                &= \left(\int_0^1 \frac{-\log(1-\zgen)}{p'(\zgen)} \dzgen\right) \cdot \left(\int_0^1 p'(\zgen) \dzgen\right) \\ 
                &= \left(\int_0^1 u(\zgen)^2 \dzgen\right)\cdot \left(\int_0^1 v(\zgen)^2 \dzgen\right) \\
      &\geq \left(\int_0^1 u(\zgen)v(\zgen) \dzgen \right)^2  \\
      &= \left(\int_0^1 \sqrt{-\log(1-\zgen)} \dzgen\right)^2 \\
      &= \frac{\pi}{4}.
    \end{align*}
  \end{proof}
  The Cauchy-Schwarz inequality holds with equality only when $u(\zgen) \propto v(\zgen)$, i.e., when $p'(\zgen) \propto \sqrt{-\log(1-\zgen)}$, but $p'(\zgen)$ is convex, whereas $\sqrt{-\log(1-\zgen)}$ is strictly concave for $\zgen< 1-e^{-\frac{1}{2}}\approx0.39$. Thus, the bound is loose. Based on the numerical results for large $d$, the gap is $\approx0.0015$.

  \subsection{Worked Example for $d=2$} \label{sec:exampled2}

  We illustrate the preceding results by explicitly applying them for $d=2$ to derive the optimal ratio of weight-$1$ and weight-$2$ columns.
  We recover the $\optT\leq \pi^2/12$ result of \cite{boruchovsky2025making}.

  For $p_1=1$, $p_2=0$ we recover the classic coupon collector result as $f(\bm{p}) = \int_0^1 -\log(1-\zgen) \dzgen = 1$.
  In the other extreme case, $p_1\to0$, $p_2\to1$, we have 
  \begin{align*}
    f(\bm{p}) = \int_0^1 \frac{-\log(1-\zgen)}{2\zgen} \dzgen &= \frac{\Li_2(1)}{2} = \frac{\pi^2}{12} \approx 0.8225, 
  \end{align*}
  where the integral $\Li_2(u) \defeq \int_0^u \frac{-\log(1-\zgen)}{\zgen} \dzgen$ is known as the dilogarithm function.
  Note that the derivative of $g(\zgen, \bmp) = \frac{-\log(1-\zgen)}{2\zgen}$ is $g^\prime(\zgen, \bmp) = \frac{1}{2\zgen^2}\left( \frac{\zgen}{1-\zgen} + \log(1-\zgen) \right)>0$ for $0<\zgen<1$ as $\zgen > (1-\zgen)\log(1-\zgen)$. 
  Accordingly, a minimally perturbed degree distribution approaches this relative random access expectation asymptotically.

  This result numerically matches \cite[Corollary~3]{boruchovsky2025making} exactly. 
  There is also a structural connection. The result of \cite[Corollary~3]{boruchovsky2025making} is by construction of a family of (non-binary) \property generator matrices with empirical Hamming-weight distribution $p_1 \propto {k \choose 1}$ and $p_2 \propto {k \choose 2}$, analyzed under ideal decoding. As $k\to \infty$, this results in the same limit distribution $p_1 \to 0$ and $p_2\to 1$.
  Interestingly, in this case binary \property matrices under peeling achieve the same asymptotic random access expectation as a $q$-ary code with the same limit distribution under ideal decoding.

  \Cref{lem:charact} shows that $p_1\to 0$ is not the solution of the optimization problem \eqref{eq:optprob} as $-\frac{\partial f(\bm{p})}{\partial p_1} \to \int_0^1 \frac{-\log(1-\zgen)}{4\zgen^2} \dzgen$ diverges (rather than converging to the objective value $\pi^2/12$).

  Instead, for general non-zero $p_1$ and $p_2$ %
  applying \cref{lem:charact} and solving the integrals, we find that the following need to be equal at the optimum: \begin{align*}
    f(\bm{p}) %
               &= \frac{1}{2p_2} \Li_2(\frac{2p_2}{1+p_2}),  \\
    -\frac{\partial f(\bm{p})}{\partial p_1} %
                                             &= \frac{1}{2p_2(1+p_2)} \log(\frac{1+p_2}{1-p_2}), %
  \end{align*}
  where the redundant equation based on $-\frac{\partial f(\bm{p})}{\partial p_2}$ is omitted.

  The dilogarithm prevents a closed form solution.
  We find numerically that the solution is $p^{\star,2}_1\approx0.15547$ and $p^{\star,2}_2 \approx 0.84453$ with $f(\bm{p}^{\star,2}) \approx 0.7939$. The conditions of \cref{thm:ifgoodthenperfect} are fulfilled: $p^{\star,2}_1 >0$ and $g'(\zgen, \bmp)>0$ becomes $p^{\star,2}_1 +  2p^{\star,2}_2 t > 2 p_2^{\star, 2} (1-t) (-\log(1-t))$, which holds since $t > (1-t) (-\log(1-t))$ for $0<t<1$. Hence, $\optpdT_2=f(\bmp^{\star,2})\approx 0.7939$.
  The value of $\optpdT_2$ differs from the lower bound $\pi/4$ by approximately $1\%$.

  We conjecture that the ideal ratios of weight-$1$ and weight-$2$ columns in the construction of \cite{boruchovsky2025making} tend to $p_1^{\star,2}$ and $p_2^{\star,2}$ and the corresponding $T(\bm{G})/k$ to $\optpdT_2$.

  \subsection{Suboptimality of Finite Degrees $d$}

  Though the optimum for $d=2$ is close to the lower bound, $\optpdT_d$ can always be further decreased by increasing $d$.

\begin{restatable}{lemma}{lemmoreismore} \label{lem:moreismore}
    For integers $2 \leq d < D$, let $\bm{p}^{\star,d}$ and $\bm{p}^{\star,D}$ denote the respective solutions to \eqref{eq:optprob}.
    For any value of $d$ there exists a value $D>d$ s.t. $f(\bm{p}^{\star,d}) > f(\bm{p}^{\star,D})$. 
\end{restatable}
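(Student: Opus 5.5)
Since $\Delta^d$ embeds in $\Delta^D$ by padding with zeros, $f(\bm{p}^{\star,D}) \leq f(\bm{p}^{\star,d})$ holds trivially. The task is to make this strict for some $D>d$. The plan is to use \cref{lem:charact} on the larger simplex. View $\bm{p}^{\star,d}$, padded with zeros, as a point $\tilde{\bm p}\in\Delta^D$. Since $f$ is strictly convex (\cref{lem:convex}) and the optimum in $\Delta^D$ is unique, it suffices to show that $\tilde{\bm p}$ violates the optimality conditions of \cref{lem:charact} for some $D$. If it did not, $\tilde{\bm p}$ would be the optimum and the values would coincide; otherwise $f(\bm{p}^{\star,D})<f(\tilde{\bm p})=f(\bm{p}^{\star,d})$. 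Note that $f(\tilde{\bm p})$ and $p'$ are unchanged by the padding.

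Concretely, I will show that for $D$ large the index $i=D\notin\mathcal S$ satisfies $-\partial f(\tilde{\bm p})/\partial p_D > f(\bm{p}^{\star,d})$. By the derivative formula,
\begin{align*}
-\frac{\partial f(\tilde{\bm p})}{\partial p_D} = \int_0^1 \frac{D\zgen^{D-1}\,(-\log(1-\zgen))}{p'(\zgen)^2}\dzgen ,
\end{align*}
where $p'$ is the derivative of the fixed polynomial of $\bm{p}^{\star,d}$. As $D\to\infty$, the kernel $D\zgen^{D-1}$ is a probability density on $[0,1]$ concentrating at $\zgen=1$. Meanwhile $-\log(1-\zgen)/p'(\zgen)^2\to\infty$ as $\zgen\to1$, because $p'(1)=\sum_i i p_i\in[1,d]$ is finite and positive. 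The expression should therefore diverge. To make this rigorous, fix $M>f(\bm{p}^{\star,d})$ and pick $\zgen_0<1$ such that $-\log(1-\zgen)/p'(\zgen)^2\geq M+1$ on $[\zgen_0,1)$. Here I use that $p'$ is increasing with $p'(\zgen)\le p'(1)\le d$, so the bound follows from $-\log(1-\zgen)\ge (M+1)d^2$. The integral is then at least $(M+1)(1-\zgen_0^{D})$, which exceeds $M$ for $D$ large. Nonnegativity of the integrand on $[0,\zgen_0]$ handles the rest.

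Given this violation, \cref{lem:charact} says $\tilde{\bm p}$ is not optimal in $\Delta^D$, and uniqueness plus $\tilde{\bm p}$ being feasible yields the strict inequality. Alternatively, and more directly, one can take $\bm q=(1-\epsilon)\tilde{\bm p}+\epsilon e_D$. The directional derivative of $f$ in this direction is $\partial f/\partial p_D - \sum_i \tilde p_i\,\partial f/\partial p_i$. The sum $\sum_i \tilde p_i\,\partial f/\partial p_i$ equals $\int_0^1 p'\log(1-\zgen)/p'^2\dzgen=-f(\tilde{\bm p})$, so the directional derivative equals $f(\tilde{\bm p}) + \partial f/\partial p_D<0$, giving strict decrease for small $\epsilon$. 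I would present this version since it avoids relying on the multiplier identity inside \cref{lem:charact}.

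The main obstacle is justifying differentiation under the integral at the boundary point $\tilde{\bm p}$, given the $\log$ singularity at $\zgen=1$. Since $p'(\zgen)\ge p_1^{\star,d}$ might be zero near $\zgen=0$ if $p_1=0$, I would handle this by noting $p'(\zgen)\geq \min(p'(0),\cdot)$ on compact pieces. The safest route is to bound $f(\bm q)$ directly via $1/(a+\epsilon b)\le 1/a-\epsilon b/a^2+\epsilon^2b^2/a^3$ with $a=(1-\epsilon)p'$, and to use monotone/dominated convergence. If $p'(0)=0$ causes trouble, note that the lower bound $p'(\zgen)\ge$ a positive multiple of $\zgen$ still makes all integrals finite, since $-\log(1-\zgen)\sim\zgen$.
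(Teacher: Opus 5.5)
Your proof is correct and follows the same route as the paper: pad $\bm{p}^{\star,d}$ with zeros and use $p'(t)\le d$ to show that $-\partial f/\partial p_D = D\int_0^1 t^{D-1}(-\log(1-t))/p'(t)^2\,dt$ eventually exceeds $f(\bm{p}^{\star,d})$, which violates \cref{lem:charact} at index $D$. The differences are minor. The paper gets an explicit lower bound $H_D/d^2$ from the power series of $-\log(1-t)$, where you instead use concentration of the density $Dt^{D-1}$ at $t=1$. Your optional direct perturbation, bounded via $1/(1+u)\le 1-u+u^2$, also rigorously justifies the first-order condition at the boundary point, which the paper takes for granted.
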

  The statement can be proven by showing that when $d$ is increased for a fixed $\bm{p}$, the KKT conditions are eventually violated. 
 A proof is provided in \appref{app:moreismore}. 

\section{Conclusion}

We have shown an equivalence between the random access expectation of \property codes for DNA storage and LT codes. Based on this equivalence we have identified binary \property codes that achieve improved asymptotic random access expectation over the current literature. 
Based on our results, we believe the column-support distribution is the right perspective for understanding the asymptotic random access expectation.
The extension of the converse results to larger alphabets, optimal decoders, and codes that are not \property, as well as the derivation of finite blocklength results, is left for future work.

\section{Acknowledgments}
The authors thank Avital Boruchovsky and Anina Gruica for helpful discussions. 
The authors acknowledge the use of large language models in the preparation of this manuscript. Notably, Google Gemini suggested the decomposition in \cref{lem:lowerbound}. Claude Opus 4.6/4.7 was consulted on writing, numerical methods, and mathematical derivations, and was used for feedback on the manuscript, but did not directly generate any of the manuscript's content, for which the authors take full responsibility.

\clearpage

\pagebreak

\clearpage
\bibliographystyle{IEEEtran}
\bibliography{randomaccess,references}

\clearpage
\appendices
\crefalias{section}{appendix}
\section{Proof of \Cref{lem:seqconv}} \label{app:prooflemseqconv}

\lemseqconv*

\begin{proof}
  We first prove~\eqref{eq:achiev}.
  Since $p_1>0$ and $g(\zgen, \bm{p})$ is strictly increasing, $p'(\zgen)>0$, $g(\zgen, \bm{p})$ has an inverse function $g^{-1}(\rgen, \bm{p})$, and we have $\zs(r, \bm{p}) = g^{-1}(r, \bm{p})$. The condition of \cref{thm:dandn} is satisfied. As $\zrv_k$ is bounded in $[0,1]$ and converges to $g^{-1}(r, \bmp)$ in probability, the expectation $\zE_k(r, \bpk)$ converges to $g^{-1}(r, \bm{p})$ by bounded convergence.
  We have\begin{align*}
    \lim_{k\to\infty} \pdT_k(\bpk) = \lim_{k\to\infty} \int_0^\infty 1 - \zE_k(r, \bpk) \dr.
  \end{align*}
  We change the order of limit and integration using dominated convergence, cf. \cite[Theorem~1.34]{rudin1987real}. For all $k$, $1-\zE_k(r, \bpk) \leq e^{-\pk_1r}$ as the probability of not decoding a given information symbol is bounded by the probability of drawing the corresponding degree-$1$ coded symbol directly. The number of such received degree-$1$ coded symbols is distributed as $\mathrm{Poisson}(\pk_1r)$ by the splitting property of Poisson processes, cf. \cite[Section~2.3]{gallager2013stochastic}. As $\pk_1\to p_1$, from some $k=k^\prime$ onwards, $\pk_1>p_1/2$, i.e., $e^{-r\pk_1}<e^{-rp_1/2}$. Further, $\int_0^\infty e^{-r p_1/2} \dr = \frac{2}{p_1}$ so dominated convergence applies from $k^\prime$ onwards.
  Accordingly, we have\begin{align*}
    \pdT(\bm{p}) = \int_0^\infty 1-g^{-1}(r, \bm{p}) \dr = \int_0^1 g(\zgen, \bm{p}) \dzgen,
  \end{align*} by the layer cake representation~\cite[Theorem~1.13]{lieb2010analysis}. 

  Next we prove \eqref{eq:converse}. 
  If $p_1>0$ and $g(\zgen, \bm{p})$ is strictly increasing for $0<\zgen<1$, \eqref{eq:achiev} applies directly. Otherwise, we consider a perturbed distribution $\bmqd$, as in \cref{lem:sanghavi}, where a small amount of mass is shifted from higher degrees to degree $1$, such that \eqref{eq:achiev} applies. Specifically, for a small positive $\delta$, let $\qd_1=p_1 + \delta(1-p_1)$  and $\qd_i = (1-\delta)p_i$ for $i\in\{2,\dots,d\}$. 
  By the Poisson splitting property, draws from $\bqkd$ at rate $\frac{r}{1-\delta}$ consist of two independent Poisson processes: one producing coded symbols according to $\bpk$ at rate $r$ and one producing degree-$1$ coded symbols at rate $\frac{r\delta}{1-\delta}$.
  Thus, it holds that $\pdT_k(\bpk) \geq (1-\delta) \pdT_k(\bqkd)$. Note that any $\delta>0$ ensures that all $\qkd_1>0$. Next, we apply the same steps as in the proof of \eqref{eq:achiev} to $\bqkd$. As $g(\zgen, \bmqd)$ may not be strictly increasing, $\zs(r, \bmqd)$ becomes $g^{-1}(r) = \inf\{\zgen : g(\zgen, \bmqd)>r\}$ and the layer cake representation becomes an inequality. Accordingly, we have $\lim_{k\to\infty}\pdT_k(\bqkd)\geq \int_0^1 g(\zgen, \bmqd) \dzgen$. 

  As $\delta \to 0$, $\bmqd \to \bmp$. Applying Fatou's lemma \cite[Theorem~1.28]{rudin1987real} yields\begin{align*}
    \liminf_{\delta\to0} \int_0^1 g(\zgen, \bmqd) \dzgen \geq \int_0^1 \liminf_{\delta \to 0} g(\zgen, \bmqd) \dzgen = f(\bm{p}).
  \end{align*}
  Finally, we have\begin{align*}
    \liminf_{k\to\infty} \pdT_k(\bpk) \geq \liminf_{\delta\to0} (1-\delta) \liminf_{k\to\infty} \pdT_k(\bqkd) \geq f(\bm{p}).
  \end{align*}
\end{proof}

\section{Proof of \Cref{lem:convex}} \label{app:proofconvex}
\lemconvex*

\begin{proof}
  Let $\mathbf{H}$ be the Hessian matrix of $f(\mathbf{p})$, i.e., the $d\times d$ matrix given by $H_{i, j} = \frac{\partial^2 f(p_1, \dots, p_d)}{\partial p_i \partial p_j}.$

  To see that $\mathbf{H}$ is positive definite, let $\mathbf{a} \in \mathbb{R}^d$ be an non-zero vector, define $\mathbf{v}(\zgen)\in \mathbb{R}^{d}$ as $v_i(\zgen)=i\zgen^{i-1}$, and consider
  \begin{align*}
    \mathbf{a}^T \mathbf{H} \mathbf{a} &= \sum_{i, j} a_i a_j \frac{\partial^2 f(\mathbf{p})}{\partial p_i \partial p_j} \\
            &= \sum_{i, j} a_i a_j  \int_0^1 \frac{- 2 i \zgen^{i-1} j \zgen^{j-1} \log(1-\zgen)}{p'(\zgen)^3} \dzgen\\
            &= \int_0^1 \frac{-2\log(1-\zgen)}{p'(\zgen)^3} \sum_{i, j} a_i a_j i \zgen^{i-1} j \zgen^{j-1} \dzgen, \\
            &= \int_0^1 \frac{-2\log(1-\zgen)}{p'(\zgen)^3} \mathbf{a}^T (\mathbf{v}(\zgen) \mathbf{v}(\zgen)^T) \mathbf{a} \,\dzgen, \\
            &= \int_0^1 \frac{-2\log(1-\zgen)}{p'(\zgen)^3} (\mathbf{a}^T \mathbf{v}(\zgen))^2 \dzgen, \\
            &> 0.
  \end{align*}
\end{proof}

\section{Proof of \Cref{lem:charact}} \label{app:lemkkt}
\lemcharact*

  \begin{proof}
    Consider the KKT conditions. 
    If $p^\star_i>0$, then by 4), we have $\mu_i=0$, i.e., by 1) $- \frac{\partial f(\bm{p}^{\star})}{\partial p_i}= \lambda$. Otherwise, $\mu_i\geq0$ and $- \frac{\partial f(\bm{p}^{\star})}{\partial p_i} \leq \lambda$.
  To conclude,\begin{align*}
    \lambda &= \sum_i p^\star_i \lambda \\
            &= \sum_i p^\star_i \left(-\frac{\partial f(\bmp^\star)}{\partial p_i} +\mu_i \right)  \\
            &= -\sum_i p^\star_i \frac{\partial f(\bmp^\star)}{\partial p_i} +\sum_i p^\star_i\mu_i \\
            &= -\sum_i p^\star_i \frac{\partial f(\bmp^\star)}{\partial p_i} \\
            &= - \sum_i p^\star_i \int_0^1 \frac{i \zgen^{i-1} \log(1-\zgen)}{p^{\star\prime}(\zgen)^2} \dzgen \\
            &= - \int_0^1 \left(\sum_i p^\star_i i \zgen^{i-1}\right) \frac{\log(1-\zgen)}{{p^{\star\prime}}(\zgen)^2} \dzgen \\
            &= - \int_0^1 p^{\star\prime}(\zgen) \frac{\log(1-\zgen)}{{p^{\star\prime}}(\zgen)^2} \dzgen \\
            &= f(\bmp^\star).
    \end{align*}
  \end{proof}

\section{Proof of \Cref{lem:moreismore}} \label{app:moreismore}
\lemmoreismore*

  \begin{proof}
    Toward a contradiction assume there exists a value $d$ such that $f(\bm{p}^{\star,d}) \leq f(\bm{p}^{\star,D})$ for all $D > d$. 
    The concatenation of $\bm{p}^{\star,d}$  and $(D-d)$ zeros, denoted as $\bm{p}^{\star,d} || \bm{0}^{(D-d)}$ is a feasible solution for the problem with degree constraint $D$ and achieves an objective value of $f(\bm{p}^{\star,d} || \bm{0}^{(D-d)}) = f(\bm{p}^{\star,d})$.
    Thus, $f(\bm{p}^{\star,d}) = f(\bm{p}^{\star,D})$ and since minimizing vectors are unique, we must have that $\bm{p}^{\star,D} = \bm{p}^{\star,d} || \bm{0}^{(D-d)}$.
    Then, by \cref{lem:charact}, it holds that
    \begin{align}
      \frac{\partial f(\bm{p}^{\star,d} || \bm{0}^{(D-d)})}{\partial p_D} &\geq -f(\bm{p}^{\star,d}) \nonumber \\
      D \int_0^1 \zgen^{D-1}\frac{ -\log(1-\zgen)}{p'(\zgen)^2} \dzgen &\leq f(\bm{p}^{\star,d}). \label{eq:mim}
  \end{align}
  In the remainder we show a contradiction by demonstrating that as $D$ grows large, the left hand side tends to infinity while the right hand side remains constant. Denoting an upper bound on $p'(\zgen)^2$ by $m_p \leq d^2$ and using the series expansion $-\log(1-\zgen) = \sum_{i=1}^\infty \frac{\zgen^i}{i}$, we have that 
  \begin{align*}
    D \int_0^1 \zgen^{D-1}\frac{ -\log(1-\zgen)}{p'(\zgen)^2} \dzgen &\geq \frac{D}{m_p} \int_0^1 - \zgen^{D-1} \log(1-\zgen) \dzgen \\
                                                     &= \frac{D}{m_p} \int_0^1 \zgen^{D-1} \sum_{i=1}^\infty \frac{\zgen^i}{i}  \dzgen \\
                                                     &= \frac{D}{m_p} \sum_{i=1}^\infty \int_0^1 \frac{\zgen^{i+D-1}}{i}  \dzgen \\
                                                     &= \frac{D}{m_p} \sum_{i=1}^\infty \frac{1}{i(i+D)}.
  \end{align*}
  It can be readily verified that the partial fraction decomposition of $\frac{1}{i(i+D)}$ is $\frac{1}{Di} - \frac{1}{D(i+D)}$.
  Thus, we have
  \begin{align*}
    D \int_0^1 \zgen^{D-1}\frac{ -\log(1-\zgen)}{p'(\zgen)^2} \dzgen &\geq \frac{1}{m_p} \sum_{i=1}^\infty \frac{1}{i} - \frac{1}{i+D}  \\
                                                     &= \frac{1}{m_p} \left(\sum_{i=1}^\infty \frac{1}{i} - \sum_{i=D+1}^\infty \frac{1}{i}\right)  \\
                                                     &= \frac{1}{m_p} \sum_{i=1}^D \frac{1}{i} \\
                                                     &= \frac{H_D}{m_p}. 
  \end{align*}
  It is well known that the harmonic numbers $H_D$ grow logarithmically without bound. More precisely, for large $D$ it holds that $H_D \approx \gamma + \ln(D)$, where $\gamma \approx 0.577$ is the Euler-Mascheroni constant.
  Thus, for large enough $D$, $\frac{H_D}{m_p}$ exceeds $f(\bmp^{\star,d})$, contradicting \eqref{eq:mim}.
\end{proof}

\section{Selected Decoding Curves} \label{app:decodingcurvefig}

\begin{figure}[H]
  \input{tikz/extracurves.tex}
  \caption{Decoding curves achieving $\optpdT_d$ for selected values of $d$ between $2$ and $10^4$. We observe similar S-shaped, strictly decreasing curves across $d$.}
\end{figure}

\end{document}